\newacro{ris} [RIS] {reconfigurable intelligent surface}
\newacro{mimo} [MIMO] {multiple-input multiple-output}
\newacro{mc} [MC] {mutual coupling}
\newacro{tx} [Tx] {transmitter}
\newacro{rx} [Rx] {receiver}
\newacro{bs} [BS] {base station}
\newacro{ue} [UE] {user equipment}
\newacro{em} [EM] {electromagnetic}
\newacro{aod} [AoD] {angle-of-departure}
\newacro{aoa} [AoA] {angle-of-arrival}
\newacro{los} [LoS] {line-of-sight}
\newacro{nlos} [NLoS] {non-line-of-sight}
\newacro{rfc} [RFC] {radio frequency chain}
\newacro{awgn} [AWGN] {additive white Gaussian noise}
\newacro{arv} [ARV] {array response vector}
\newacro{tdd} [TDD] {time division duplexing}
\newacro{upa} [UPA] {uniform planar array}
\newacro{ula} [ULA] {uniform linear array}
\newacro{simo} [SIMO] {single-input multiple-output}
\newacro{cs}[CS]{compressed sensing}
\newacro{mm}[MM]{Majorization-Minimization}
\newacro{sca}[SCA]{successive convex approximation}
\newacro{kkt}[KKT]{Karush-Kuhn-Tucker}
\newacro{snr}[SNR]{signal-to-noise ratio}
\newacro{omp}[OMP]{orthogonal matching pursuit}
\newacro{dr}[DR]{dictionary reduction}
\newacro{nmse}[NMSE]{normalized mean squared error}
\newacro{ghz}[GHz]{gigahertz}
\newacro{mmwave}[mmWave]{millimeter-wave}
\newacro{thz}[THz]{terahertz}
\newacro{gd}[GD]{gradient descent}
\newacro{svd}[SVD]{singular value decomposition}
\newacro{csi}[CSI]{channel state information}
\newcommand{\red}[1]{{\color{red}{#1}}}
\newcommand{\hthickline}{\noalign{\hrule height 0.9pt}}
\newcommand{\gettikzxy}[3]{%
  \tikz@scan@one@point\pgfutil@firstofone#1\relax
  \edef#2{\the\pgf@x}%
  \edef#3{\the\pgf@y}%
}
\newtheorem{proposition}{\textbf{Proposition}}
\newtheorem{remark}{\textbf{Remark}}
\newtheorem{condition}{\textbf{Condition}}
\newenvironment{proof}{\textit{\textbf{Proof:}}}{\hfill$\square$}
\algnewcommand{\LineComment}[1]{\Statex \hskip\ALG@thistlm \(\triangleright\) #1}
\begin{document}
% transpose and hermitian
\newcommand{\TT}{\mathsf{T}}
\newcommand{\HH}{\mathsf{H}}

\newcommand{\nth}[1]{{#1}{\text{th}}}
\newcommand{\norm}[1]{\left\|{#1}\right\|}
\newcommand{\diagopr}[1]{\mathrm{diag}\left(#1\right)}
\newcommand{\invdiag}[1]{\mathrm{diag}^{-1}\left(#1\right)}
\newcommand{\vect}[1]{\mathrm{vec}\left(#1\right)}
\newcommand{\invvec}[1]{\mathrm{vec}^{-1}\left(#1\right)}
\newcommand{\abs}[1]{\left|{#1}\right|}

\newcommand{\err}{\mathrm{E}}
\newcommand{\strmidx}{\mathrm{S}}
\newcommand{\RFidx}{\mathrm{RF}}
\newcommand{\BBidx}{\mathrm{BB}}
\newcommand{\quantidx}{\mathrm{quant}}
\newcommand{\aeidx}{\mathrm{AE}}
\newcommand{\azidx}{\mathrm{az}}
\newcommand{\elidx}{\mathrm{el}}

\newcommand{\ula}{\mathrm{ULA}}
\newcommand{\upa}{\mathrm{UPA}}

\newcommand{\casidx}{\mathrm{cas}}
\newcommand{\mcidx}{\mathrm{mc}}
\newcommand{\convidx}{\mathrm{cv}}

\newcommand{\ovs}{\mathrm{S}}
\newcommand{\DicRed}{\mathrm{DR}}

\newfont{\bb}{msbm10 scaled 1100}
\newcommand{\PP}{\mbox{\bb P}}
\newcommand{\EE}{\mbox{\bb E}}
% Vectors
\newcommand{\av}{{\bf a}}
\newcommand{\bv}{{\bf b}}
\newcommand{\cv}{{\bf c}}
\newcommand{\dv}{{\bf d}}
\newcommand{\ev}{{\bf e}}
\newcommand{\fv}{{\bf f}}
\newcommand{\gv}{{\bf g}}
\newcommand{\hv}{{\bf h}}
\newcommand{\iv}{{\bf i}}
\newcommand{\jv}{{\bf j}}
\newcommand{\kv}{{\bf k}}
\newcommand{\lv}{{\bf l}}
\newcommand{\mv}{{\bf m}}
\newcommand{\nv}{{\bf n}}
\newcommand{\ov}{{\bf o}}
\newcommand{\pv}{{\bf p}}
\newcommand{\qv}{{\bf q}}
\newcommand{\rv}{{\bf r}}
\newcommand{\sv}{{\bf s}}
\newcommand{\tv}{{\bf t}}
\newcommand{\uv}{{\bf u}}
\newcommand{\wv}{{\bf w}}
\newcommand{\vv}{{\bf v}}
\newcommand{\xv}{{\bf x}}
\newcommand{\yv}{{\bf y}}
\newcommand{\zv}{{\bf z}}
\newcommand{\zerov}{{\bf 0}}
\newcommand{\onev}{{\bf 1}}
\newcommand{\avr}{\av_\text{R}}
% Matrices
\newcommand{\Am}{{\bf A}}
\newcommand{\Bm}{{\bf B}}
\newcommand{\Cm}{{\bf C}}
\newcommand{\Dm}{{\bf D}}
\newcommand{\Em}{{\bf E}}
\newcommand{\Fm}{{\bf F}}
\newcommand{\Gm}{{\bf G}}
\newcommand{\Hm}{{\bf H}}
\newcommand{\Id}{{\bf I}}
\newcommand{\Jm}{{\bf J}}
\newcommand{\Km}{{\bf K}}
\newcommand{\Lm}{{\bf L}}
\newcommand{\Mm}{{\bf M}}
\newcommand{\Nm}{{\bf N}}
\newcommand{\Om}{{\bf O}}
\newcommand{\Pm}{{\bf P}}
\newcommand{\Qm}{{\bf Q}}
\newcommand{\Rm}{{\bf R}}
\newcommand{\Sm}{{\bf S}}
\newcommand{\Tm}{{\bf T}}
\newcommand{\Um}{{\bf U}}
\newcommand{\Wm}{{\bf W}}
\newcommand{\Vm}{{\bf V}}
\newcommand{\Xm}{{\bf X}}
\newcommand{\Ym}{{\bf Y}}
\newcommand{\Zm}{{\bf Z}}
\newcommand{\Onem}{{\bf 1}}
\newcommand{\Zerom}{{\bf 0}}
% text uppercase
\newcommand{\At}{{\rm A}}
\newcommand{\Ct}{{\rm C}}
\newcommand{\Dt}{{\rm D}}
\newcommand{\Et}{{\rm E}}
\newcommand{\Ft}{{\rm F}}
\newcommand{\Gt}{{\rm G}}
\newcommand{\Ht}{{\rm H}}
\newcommand{\It}{{\rm I}}
\newcommand{\Jt}{{\rm J}}
\newcommand{\Kt}{{\rm K}}
\newcommand{\Lt}{{\rm L}}
\newcommand{\Mt}{{\rm M}}
\newcommand{\Nt}{{\rm N}}
\newcommand{\Ot}{{\rm O}}
\newcommand{\Pt}{{\rm P}}
\newcommand{\Qt}{{\rm Q}}
\newcommand{\St}{{\rm S}}
\newcommand{\Wt}{{\rm W}}
\newcommand{\Vt}{{\rm V}}
\newcommand{\Xt}{{\rm X}}
\newcommand{\Yt}{{\rm Y}}
\newcommand{\Zt}{{\rm Z}}

% SIMON modifications
\newcommand{\Rt}{{\rm B}}
\newcommand{\Tt}{{\rm U}}
\newcommand{\Bt}{{\rm R}}
\newcommand{\Ut}{{\rm T}}

% Bold greek letters
\newcommand{\alphav}{\hbox{\boldmath$\alpha$}}
\newcommand{\betav}{\hbox{\boldmath$\beta$}}
\newcommand{\gammav}{\hbox{\boldmath$\gamma$}}
\newcommand{\deltav}{\hbox{\boldmath$\delta$}}
\newcommand{\etav}{\hbox{\boldmath$\eta$}}
\newcommand{\lambdav}{\hbox{\boldmath$\lambda$}}
\newcommand{\kappav}{\hbox{\boldmath$\kappa$}}
\newcommand{\epsilonv}{\hbox{\boldmath$\epsilon$}}
\newcommand{\nuv}{\hbox{\boldmath$\nu$}}
\newcommand{\muv}{\hbox{\boldmath$\mu$}}
\newcommand{\zetav}{\hbox{\boldmath$\zeta$}}
\newcommand{\phiv}{\hbox{\boldmath$\phi$}}
\newcommand{\varphiv}{\hbox{\boldmath$\varphi$}}
\newcommand{\psiv}{\hbox{\boldmath$\psi$}}
\newcommand{\thetav}{\hbox{$\boldsymbol\theta$}}
\newcommand{\varthetav}{\hbox{$\boldsymbol\vartheta$}}
\newcommand{\tauv}{\hbox{\boldmath$\tau$}}
\newcommand{\omegav}{\hbox{\boldmath$\omega$}}
\newcommand{\xiv}{\hbox{\boldmath$\xi$}}
\newcommand{\sigmav}{\hbox{\boldmath$\sigma$}}
\newcommand{\piv}{\hbox{\boldmath$\pi$}}
\newcommand{\rhov}{\hbox{\boldmath$\rho$}}

\newcommand{\Gammam}{\hbox{\boldmath$\Gamma$}}
\newcommand{\Lambdam}{\hbox{\boldmath$\Lambda$}}
\newcommand{\Deltam}{\hbox{\boldmath$\Delta$}}
\newcommand{\Sigmam}{\hbox{\boldmath$\Sigma$}}
\newcommand{\Phim}{\hbox{\boldmath$\Phi$}}
\newcommand{\Pim}{\hbox{\boldmath$\Pi$}}
\newcommand{\Psim}{\hbox{\boldmath$\Psi$}}
\newcommand{\psim}{\hbox{\boldmath$\psi$}}
\newcommand{\chim}{\hbox{\boldmath$\chi$}}
\newcommand{\omegam}{\hbox{\boldmath$\omega$}}
\newcommand{\Thetam}{\hbox{\boldmath$\Theta$}}
\newcommand{\Omegam}{\hbox{\boldmath$\Omega$}}
\newcommand{\Xim}{\hbox{\boldmath$\Xi$}}
 % include macro definition for mathematical notations
\bstctlcite{IEEEexample:BSTcontrol}

% ------------------------------------------------------------------------------
\title{Mutual Coupling-Aware Channel Estimation and Beamforming for RIS-Assisted Communications}

\author{Pinjun~Zheng, Simon~Tarboush, Hadi~Sarieddeen, and Tareq~Y.~Al-Naffouri \vspace{-2em}

\thanks{P. Zheng and S. Tarboush are \emph{co-first authors}; they contributed equally to this paper. P. Zheng is with the University of British Columbia, Canada (e-mail: pinjun.zheng@ubc.ca). S.~Tarboush and  T.~Y.~Al-Naffouri are with the King Abdullah University of Science and Technology (KAUST), Saudi Arabia (Email: simon.w.tarboush@gmail.com; tareq.alnaffouri@kaust.edu.sa). H. Sarieddeen is with the American University of Beirut (AUB), Lebanon (hadi.sarieddeen@aub.edu.lb). The majority of P. Zheng's contributions to this work were made during his Ph.D. studies at KAUST. This work was supported by the KAUST Office of Sponsored Research (OSR) under Award No. ORFS-CRG12-2024-6478 and the AUB University Research Board.}}

\markboth{This work has been submitted to the IEEE for possible publication.  Copyright may be transferred without notice.}{draft}
%\markboth{draft}{draft}
%\IEEEpubid{0000--0000/00\$00.00~\copyright~2021 IEEE}
\maketitle
% ------------------------------------------------------------------------------

\begin{abstract} 
This work studies the problems of channel estimation and beamforming for active reconfigurable intelligent surface~(RIS)-assisted multiple-input multiple-output (MIMO) communication, incorporating the mutual coupling~(MC) effect through an electromagnetically consistent model. We first demonstrate that MC can be incorporated into a compressed sensing~(CS) formulation, albeit with an increase in the dimensionality of the sensing matrix. To overcome this increased complexity, we propose a two-stage strategy. Initially, a low-complexity MC-unaware CS estimation is performed to obtain a coarse channel estimate, which is then used to implement a dictionary reduction (DR) for the MC-aware estimation, effectively reducing the dimensionality of the sensing matrices.  {This method achieves estimation accuracy close to the direct MC-aware CS method with less overall computational complexity.} Furthermore, we consider the joint optimization of RIS configuration, base station precoding, and user combining in a single-user MIMO system. We employ an alternating optimization strategy to optimize these three beamformers. The primary challenge lies in optimizing the RIS configuration, as the MC effect renders the problem non-convex and intractable. To address this, we propose a novel algorithm based on the successive convex approximation (SCA) and the Neumann series expansion. Within the SCA framework, we propose a surrogate function that rigorously satisfies both convexity and equal-gradient conditions to update the iteration direction. Numerical results validate our proposal, demonstrating that the proposed channel estimation and beamforming methods effectively manage the MC in RIS, achieving higher spectral efficiency compared to state-of-the-art approaches.
\end{abstract}

\begin{IEEEkeywords}
Mutual coupling, scattering parameter, active reconfigurable intelligent surface, compressed sensing, dictionary reduction, successive convex approximation, Neumann series.
\end{IEEEkeywords}

\section{Introduction}

Next-generation wireless networks aim to enhance spectral and energy efficiency, necessitating the integration of game-changing enablers. To access abundant bandwidth, these networks are shifting to higher frequency bands, including \ac{mmwave}~\cite{Wang2018Millimeter} and \ac{thz}~\cite{Tarboush2021Teramimo}. However, higher frequencies present challenges, such as a higher likelihood of blockage, which reduces link availability. To mitigate these effects, the integration of \ac{ris}~\cite{Di2020Smart} offers a viable solution. An \ac{ris} is a planar metamaterial-based surface composed of nearly passive reflecting elements. Furthermore, active~\acp{ris}, which incorporate reflection-type amplifiers~\cite{Rao2023An}, have been proposed to alleviate the multiplicative fading effect and overcome the capacity limitations of passive~\acp{ris} \cite{Zhang2022Active}. The dense and holographic~\acp{ris} have gained attention due to their flexibility and precision in shaping \ac{em} waves with high energy efficiency~\cite{Wan2021Terahertz,Bjornson2024Towards}. Although promising, these novel \ac{ris} designs introduce new challenges, particularly the \ac{mc} effect. Existing studies suggest that densifying the \ac{ris} layout strengthens the \ac{mc} among adjacent unit cells~\cite{Gradoni2021End,Zheng2024On}. Moreover, increasing the \ac{ris} amplification coefficients can further exacerbate \ac{mc} effects, degrading channel estimation and localization accuracy~\cite{Zheng2023JrCUP}. Therefore, accurate modeling and sophisticated signal processing are crucial to optimize \ac{ris} communication performance and fully leverage the potential benefits of its \ac{em} interactions.

\subsection{Related Work}
\subsubsection{Conventional RIS Modeling} 
The commonly adopted model for \ac{ris}-assisted communication describes \ac{ris} reflections as a linear cascaded channel, where the overall channel is expressed as the product of the \ac{tx}-\ac{ris} subchannel, the \ac{ris} reflection response, and the \ac{ris}-\ac{rx} subchannel~\cite{Pan2022An,Zheng2023JrCUP}. While mathematically tractable, this linear model tacitly assumes that each \ac{ris} unit cell radiates \ac{em} waves independently, neglecting the nonlinear coupling effects among adjacent unit cells. Recent findings in~\cite{Rabault2024Tacit} cast doubt on this widely used linear cascaded model, suggesting it poorly describes the physical reality. As analyzed in~\cite{Rabault2024Tacit}, the \ac{mc} can result from the proximity-induced coupling due to adjacent radiators and the reverberation-induced long-range coupling due to environmental scattering multipath. Both mechanisms are nonlinear. The neglect of proximity-induced coupling becomes untenable when considering densely integrated \ac{ris}. Meanwhile, overlooking reverberation-induced long-range coupling presents challenges in rich-scattering environments.

\subsubsection{\ac{mc}-Aware RIS Modeling}
Notable efforts have been dedicated to accurately accounting for \ac{mc} in RIS-assisted communications. For instance, based on coupled-dipole formalism, a physics-based end-to-end model called PhysFad~\cite{Faqiri2023PhysFad} has been proposed. This model provides a tuning mechanism for jointly modeling transceivers, RIS elements, and the scattering environment. Another suitable theory for \ac{mc} modeling is the microwave multiport network theory~\cite{Pozar2011Microwave,Ivrlavc2014Multiport}, which extends basic circuit and network concepts to handle complex microwave analyses. Typically, three equivalent representations can be used to analyze microwave networks: impedance, admittance, and scattering parameters~\cite{Nerini2024Universal}.
An impedance matrix (Z-parameters) model of~\ac{mc} was first adopted in~\cite{Gradoni2021End}. Subsequently, equivalent models based on the scattering matrix (S-parameters) have been developed~\cite{Li2024Beyond}. These models are essentially equivalent and have been unified in, e.g.,~\cite{Abrardo2024Design,Nerini2024Universal}. 
From a practical perspective, the scattering matrix model has the advantage of being more directly related to the radiation pattern~\cite{Wijekoon2024Phase} and is experimentally verifiable, as demonstrated in our previous~\ac{mc} measurement work~\cite{Zheng2024Mutual}.  {While this paper focuses on the scattering matrix model, numerous other electromagnetic models exist. These include, but are not limited to, the local reflection coefficient model, homogenized and inhomogenized impedance models, periodic discrete models, and full-wave electromagnetic
models~\cite{Bidabadi2025Physically}.}

\subsubsection{Channel Estimation}
Considering the sparse channel nature in the high-frequency bands~\cite{Wang2018Millimeter,Tarboush2021Teramimo}, earlier investigations in~\cite{Wang2020Compressed}, based on the conventional cascaded model, formulated the channel estimation as a sparse signal recovery problem, which can be solved using~\ac{cs} techniques. In~\cite{Wei2021Channel} and~\cite{Chen2023Channel},~\ac{cs}-based channel estimation was addressed in multi-user scenarios based on the conventional linear model by leveraging double-structured sparsity. Other works, such as~\cite{He2021Channel, Ardah2021Trice}, proposed two-stage channel estimation methods that decouple the cascaded channel into two separate subproblems. However, none of these previous studies account for \ac{mc}. Prior information can be incorporated with~\ac{cs} tools to significantly enhance the estimation performance, such as weighted~\ac{cs} algorithms, structured codebook design, and dictionary design, as shown in~\cite{Tarboush2023Compressive,Tarboush2024Cross}, which has not been investigated for RIS channel estimation problems.

Few works have addressed the \ac{ris}-assisted channel estimation problem with \ac{mc}. An evaluation of channel estimation bounds in~\cite{Zheng2024On} indicates that denser \ac{ris} unit cell integration substantially degrades estimation accuracy. Therefore, \ac{mc} awareness is emerging as a critical concern in formulating channel estimators, particularly for systems with densely integrated RIS designs. Furthermore,~\cite{Bayraktar2024RIS} suggests that dictionary learning is an effective tool for handling \ac{ris} \ac{mc} in channel estimation when a linear \ac{mc} model is considered.

\subsubsection{Beamforming}
The \ac{ris} beamforming based on the conventional linear model has been extensively studied in the literature. Various approaches have been applied to optimize the~\ac{ris} phase shifts, such as semidefinite relaxation~\cite{Wu2019Intelligent},~\ac{gd}~\cite{Huang2019Reconfigurable}, alternating direction method of multipliers~\cite{Liu2022Joint}, and majorization-minimization~\cite{Huang2019Reconfigurable,Liu2022Joint}, among others. More practical designs have also been considered, including dual reflection phase and amplitude variations~\cite{Li2021Intelligent} and designs that account for imperfect \ac{csi}~\cite{Yu2020Robust}. Additionally, active~\ac{ris} optimization based on the conventional model has been addressed to improve the sum-rate~\cite{Chen2024Enhancing} and to account for hardware impairments~\cite{Peng2024Beamforming}.

The study of \ac{mc} awareness in \ac{ris} beamforming is also in its early stages. 
By properly adopting the impedance matrix model accounting for the~\ac{mc} and optimizing the~\ac{ris} phase shift accordingly,~\cite{Qian2021Mutual} demonstrated an enhanced end-to-end~\ac{snr}, while~\cite{Abrardo2021MIMO} obtained an improved sum-rate for multi-user interference channels. A joint active and passive beamforming problem was further studied in~\cite{Wijekoon2024Phase} for \ac{ris}-assisted downlink and uplink transmissions based on the scattering matrix model. Nonetheless, these works do not consider active~\acp{ris}, which possess a higher amplification factor and will accentuate the~\ac{mc} effect~\cite{Zheng2023JrCUP}. Additionally, more effective beamforming algorithms are needed for the scattering matrix-based model. For example, some up-to-date works utilize the~\ac{gd} algorithm to optimize the RIS coefficients based on the scattering matrix model~\cite{Wijekoon2024Phase}. Although decent performance is achieved, the convergence of~\ac{gd} to the global minimum is not always guaranteed in such non-convex problems, which deserves further investigation.  

\subsection{Contributions}
To the best of the authors’ knowledge, this is the first work that comprehensively considers both channel estimation and beamforming for an active~\ac{ris}-assisted communication in the presence of~\ac{mc} among RIS unit cells. We consider a \ac{mimo} communication system assisted by an active \ac{ris}, which works in an \emph{uplink} channel estimation and \emph{downlink} communication paradigm. We first formulate and solve the uplink channel estimation problem using the electromagnetic-consistent channel model based on the scattering parameter representation, which takes the RIS~\ac{mc} into account. Based on the estimated uplink channel, we first derive the downlink channel and then jointly optimize RIS configuration, \ac{bs} precoder, and \ac{ue} combiner for the downlink communication, again considering the exact \ac{mc}-aware model.  
The main contributions of this paper are summarized as follows:
\begin{itemize}
 \item \textit{We formulate and solve the RIS-assisted channel estimation problem in the presence of \ac{mc}.} We first demonstrate that the nonlinearity introduced by RIS \ac{mc} undermines the conventional \ac{cs} formulation for channel estimation, rendering existing \ac{cs}-based estimators infeasible. By carefully manipulating the observation format based on the \ac{mc} mechanism, we propose a new \ac{cs} formulation that accounts for \ac{mc} without any loss of accuracy. Subsequently, we apply the \ac{omp} algorithm~\cite{Lee2016Channel} to solve the new \ac{cs} problem. At this stage, similar to~\cite{Wang2020Compressed}, we can estimate an equivalent cascaded channel, which is a coupled Kronecker product of the UE-RIS and RIS-BS subchannels. While the two individual subchannels cannot be separated, we later demonstrate that this equivalent cascaded channel is sufficient for beamforming design.
 \item \textit{We propose a \ac{dr} strategy to further reduce the complexity of the proposed \ac{mc}-aware channel estimator by utilizing prior estimation information.} While the constructed \ac{cs} problem accurately accounts for \ac{mc}, it incurs a dimension lift in the basis space. Specifically, by incorporating \ac{mc}, the dimension of atoms in the \ac{cs} dictionary increases from~$N_\mathrm{B} N_\mathrm{I}$ in the conventional solution to~$N_\mathrm{B} N_\mathrm{I}^2$ (here $N_\mathrm{B} $ and $N_\mathrm{I}$ respectively denote the sizes of the \ac{bs} and \ac{ris} arrays), hindering the feasibility of our method in large-scale RISs. To address this issue, we propose a two-stage strategy to achieve low-complexity yet accurate estimation. This approach first performs the conventional \ac{mc}-unaware \ac{cs} estimation to obtain a coarse estimate. Based on this initial estimate, we then design a \ac{dr} procedure to reduce the dictionary size using the exact \ac{mc}-aware model. This \ac{dr} approach can effectively counteract the dimension lift due to the \ac{mc} effect.
 \item \textit{We formulate and solve the joint \ac{ris} configuration, \ac{bs} precoder, and \ac{ue} combiner optimization problem based on the estimated equivalent cascaded channel.} To tackle this complex joint optimization problem, we adopt an alternating optimization strategy, optimizing the three beamformers alternately. For the \ac{bs} and \ac{ue} design, a simple closed-form optimal solution is derived. For the \ac{ris} configuration optimization, the nonlinear \ac{mc} mechanism introduces challenges of intractability and non-convexity. We first apply the Neumann series expansion to approximate the objective function and reformulate it as a tractable optimization problem. Since the resultant problem remains non-convex, we adopt the \ac{sca} framework to solve it. 
 \item \textit{We validate our methods through extensive numerical simulations.} In terms of channel estimation accuracy, the proposed estimation method, even with high~\ac{mc} levels, outperforms the \textit{conventional model}-based strategy by several dBs, and has a similar accuracy to the \textit{exact model}-based procedure but with less complexity. Moreover, the proposed~\ac{mc}-aware joint beamforming algorithm surpasses the state-of-the-art benchmarks, achieving enhanced spectral efficiency. 
\end{itemize} 

\subsection{Organization and Notation}
The remainder of this paper is organized as follows. Section~\ref{sec:Ch_Model} introduces the uplink channel model, while Section~\ref{sec:rx_sig_mc} formulates the \ac{mc}-aware uplink channel estimation problem. Section~\ref{sec:CE_MC} details the proposed low-complexity channel estimation solution. Subsequently, we define the downlink beamforming problem and outline the alternating optimization strategy in Section~\ref{sec:JBF}. Given that the BS precoder and UE combiner are addressed with a closed-form solution immediately following the problem formulation, we delve into the RIS configuration optimization in Section~\ref{sec:RISCO}. The simulation results are presented in Section~\ref{sec:Sim_Res_Disc}, and the conclusion is given in Section~\ref{sec:Conc}. 

We use the following notation throughout the paper. Non-bold lower and upper case letters (e.g., $a, A$) denote scalars, bold lower case letters (e.g., $\av$) denote vectors, and bold upper case letters (e.g., $\Am$) denote matrices. We use~$A_{i,j}$ to denote the entry at the~$i$$^\text{th}$ row and~$j$$^\text{th}$ column of the matrix~$\Am$. In addition,~$[\Am]_{n,:}$ and~$[\Am]_{:,m}$ denote $n$$^\text{th}$ row and $m$$^\text{th}$ column of~$\Am$, respectively. The superscripts ${(\cdot)}^\TT$, ${(\cdot)}^*$, ${(\cdot)}^\HH$, ${(\cdot)}^{-1}$, and~${(\cdot)}^\dagger$ represent the transpose, conjugate, Hermitian (conjugate transpose), inverse, and pseudo-inverse operators, respectively. For two $M\times N$ matrices $\Am$ and $\Bm$, $\Am\otimes\Bm$ denotes the $M^2\times N^2$ Kronecker product matrix. Furthermore,  {$\Am\bullet\Bm$ denotes the $M\times N^2$ row-wise Khatri-Rao product matrix};~$\odot$ denotes the Hadamard product.

\section{Uplink Channel Model}
\label{sec:Ch_Model}

\begin{figure}[t]
  \centering
  \includegraphics[width=\linewidth]{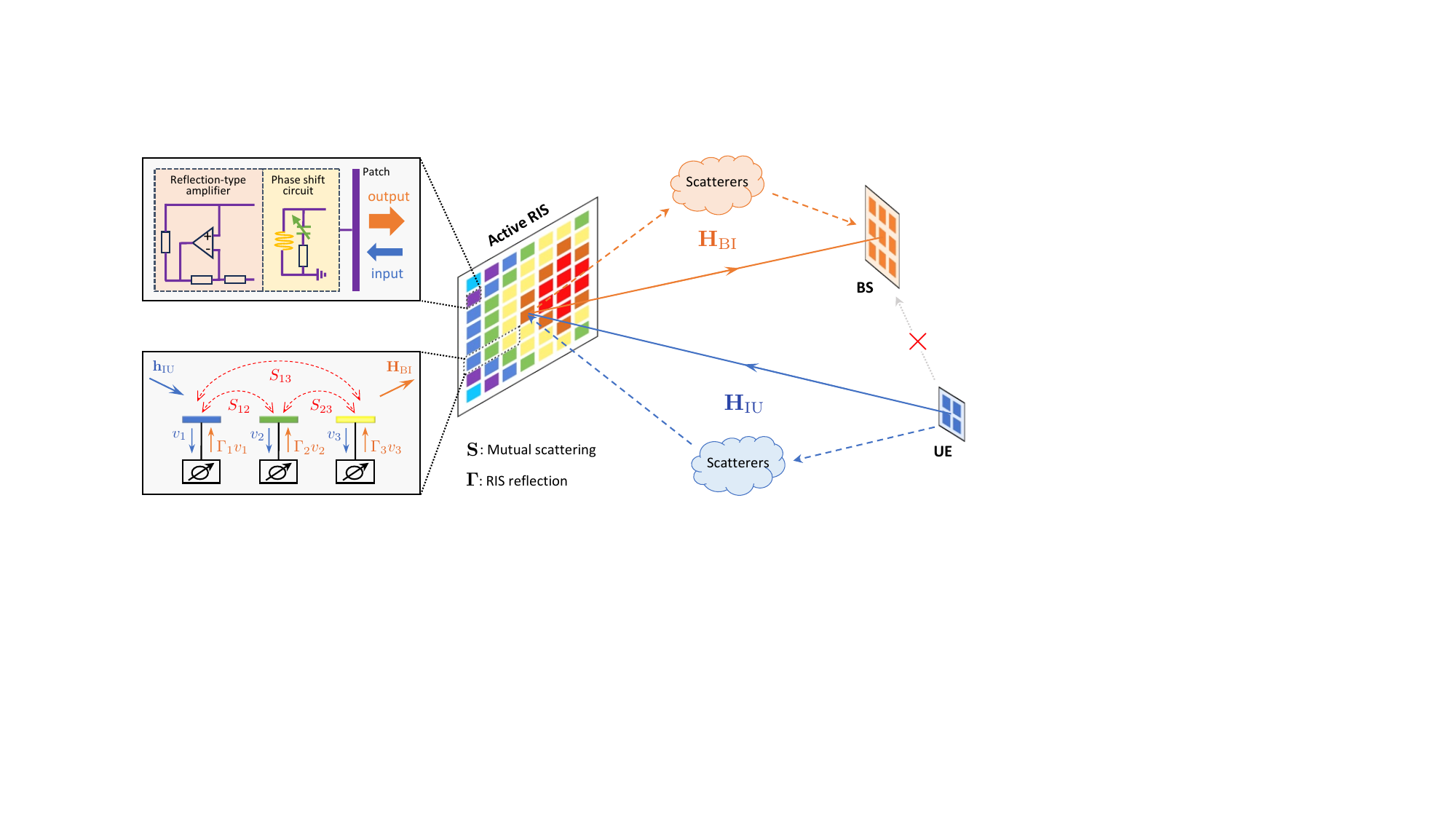}
  \caption{ 
      Illustration of a \ac{ris}-assisted uplink \ac{mimo} communication system in the presence of \ac{mc} between RIS unit cells, where the UE-BS direct channel is assumed to be absent.
    }
  \label{fig:system}
  \vspace{-2.5mm}
\end{figure}

Consider an \ac{ris}-assisted time division duplex \ac{mimo} communication system, consisting of an~$N_\Tt$-antenna \ac{ue}, an~$N_\Rt$-antenna \ac{bs}, and an~$N_\It$-unit cell \ac{ris}, as shown in Fig.~\ref{fig:system}. To focus on RIS reflection, we assume the \ac{ue}-\ac{bs} direct channel does not exist. In this work, the \ac{ris} is composed of a \ac{upa} consisting of $N_\It=N_\It^\mathrm{h}\times N_\It^\mathrm{v}$ tightly-packed unit cells, where the superscripts `$\mathrm{h}$' and `$\mathrm{v}$' denote the horizontal and vertical dimensions, respectively. The same assumption is applied for the \ac{bs} and \ac{ue}, i.e., $N_\Rt=N_\Rt^\mathrm{h}\times N_\Rt^\mathrm{v}$ and $N_\Tt = N_\Tt^\mathrm{h}\times N_\Tt^\mathrm{v}$. 

We consider a comprehensive uplink channel estimation and downlink beamforming problem. Specifically, channel estimation is performed for the uplink, and the downlink channels are inferred via channel reciprocity~\cite{Chen2023Channel}, which enables downlink beamforming. To facilitate the subsequent algorithm development, this section models the \ac{ue}-\ac{ris}-\ac{bs} uplink channel. In the following subsections, we first detail the individual \ac{ue}-\ac{ris} and \ac{ris}-\ac{bs} subchannels, and then elaborate on how the \ac{mc} between RIS unit cells reshapes the overall cascaded channel.

\subsection{The Channel Between UE and RIS}
Assuming a far-field propagation scenario, the high-frequency narrow-band frequency-domain channel from the \ac{ue} to the \ac{ris}, $\Hm_{\mathrm{IU}}\in\mathbb{C}^{N_\It\times N_\Tt}$, is defined as~\cite{Wei2021Channel}
\begin{equation}
	\label{eq:ch_TxRIS_spatial_domain}
 \Hm_{\It\Tt} = \sqrt{\frac{N_\It N_\Tt}{L_\Tt}}\sum_{\ell=1}^{L_{\Tt}} \alpha_\ell\av_\It(\phiv_\ell)\av^\TT_\Tt(\varphiv_\ell),
\end{equation}
where~$\alpha_\ell\in\mathbb{C}$ denotes the complex channel gain,~$\varphiv_\ell\in\mathbb{R}^2$ the \ac{aod} at the \ac{ue},~$\phiv_\ell\in\mathbb{R}^2$ the \ac{aoa} at the \ac{ris}, and $\av_\Tt(\varphiv_\ell)\in\mathbb{C}^{N_\Tt}$ and $\av_\It(\phiv_\ell)\in\mathbb{C}^{N_\It}$ are the \ac{arv} at the \ac{ue} and \ac{ris} corresponding to $\varphiv_\ell$ and $\phiv_\ell$, respectively. Note that each \ac{aoa} and \ac{aod} comprises an azimuth and an elevation component, e.g., $\phiv_\ell=[\phi_\ell^\azidx,\phi_\ell^\elidx]^\TT$.\footnote{We define the elevation angle as the angle between the direction of interest and the positive~$\mathrm{Z}$-axis, which is also called the inclination angle.} Under a \ac{upa} setup, the \ac{arv} at the \ac{ris} can be expressed as~\cite{Tarboush2023Compressive}
\begin{equation}
    \label{eq:UPA_ARV}
    \av_\It(\phiv_\ell)=\frac{1}{\sqrt{N_\It}}e^{-j2\pi\phi_\ell^\mathrm{h}\nv(N_\It^\mathrm{h})}\otimes e^{-j2\pi\phi_\ell^\mathrm{v}\nv(N_\It^\mathrm{v})},
\end{equation}
where~$\nv(N)=[0,1,\dots,N-1]^\TT$, and~$\phi_\ell^\mathrm{h}$ and~$\phi_\ell^\mathrm{v}$ are the spatial angles corresponding to the horizontal and vertical dimensions, respectively. Assuming the \ac{upa} is deployed on the YZ-plane of the \ac{ris}'s body coordinate system, we obtain $\phi_\ell^\mathrm{h}\triangleq{d_\It}\sin(\phi_\ell^\azidx)\sin(\phi_\ell^\elidx)/{\lambda}$ and $\phi_\ell^\mathrm{v}\triangleq{d_\It}\cos(\phi_\ell^\elidx)/{\lambda}$, where~$\lambda$ is the wavelength of the operating frequency, $f_c$, and $d_\It$ is the inter-element spacing of the \ac{ris}. In addition,~$L_{\Tt}$ in~\eqref{eq:ch_TxRIS_spatial_domain} is the number of paths between the \ac{ue} and \ac{ris}, where~$\ell=1$ stands for the \ac{los} path and~$\ell=2,\dots,L_\mathrm{U}$ correspond to $L_\mathrm{U}-1$ \ac{nlos} paths. 

By concatenating all \acp{arv},~\eqref{eq:ch_TxRIS_spatial_domain} can be written as~\cite{Wang2020Compressed,He2021Channel}
\begin{equation}
    \label{eq:ch_TxRIS_angle_domain}
   \Hm_{\It\Tt} =\Am_\It(\phiv)\Sigmam_{\It\Tt}\Am^\TT_\Tt(\varphiv),
\end{equation}
where we define $\phiv=[\phiv_1^\TT,\dots,\phiv_{L_\mathrm{U}}^\TT]^\TT\in\mathbb{R}^{2L_\mathrm{U}},\ \varphiv=[\varphiv_1^\TT,\dots,\varphiv_{L_\Tt}^\TT]^\TT\in\mathbb{R}^{2L_\Tt},\ \Am_\It(\phiv)=[\av_{\It}(\phiv_1),\cdots,\av_\It(\phiv_{L_\Tt})]\in\mathbb{C}^{N_\It\times L_\Tt},\ \Am_\Tt(\varphiv)=[\av_{\Tt}(\varphiv_1),\cdots,\av_\Tt(\varphiv_{L_\Tt})]\in\mathbb{C}^{N_\Tt\times L_\Tt}$, and $\Sigmam_{\It\Tt}=\sqrt{N_\It N_\Tt/{L_\Tt}} \diagopr{\alpha_1,\cdots,\alpha_{L_\Tt}}\in\mathbb{C}^{L_\Tt \times L_\Tt}$ is the beamspace channel matrix. 

\subsection{The Channel Between RIS and BS}
Similarly, the frequency-domain channel from the \ac{ris} to the \ac{bs},~$\Hm_{\Rt\It}\in\mathbb{C}^{N_\Rt \times N_\It}$, can be expressed as~\cite{Wei2021Channel}
\begin{equation}
    \label{eq:ch_RISRx_spatial_domain}
    \Hm_{\Rt\It} = \sqrt{\frac{N_\It N_\Rt}{L_{\Rt}}}\sum_{\ell=1}^{L_{\Rt}} \rho_\ell\av_\Rt(\varthetav_\ell)\av_\It^\TT(\thetav_\ell),
\end{equation}
where $L_{\Rt}$ is the number of paths between the \ac{ris} and \ac{bs},~$\rho_\ell\in\mathbb{C}$ denotes the complex channel gain,~$\thetav_\ell\in\mathbb{R}^2$ denotes the \ac{aod} at the \ac{ris},~$\varthetav_\ell\in\mathbb{R}^2$ denotes the \ac{aoa} at the \ac{bs},  and~$\av_\Rt\in\mathbb{C}^{N_\Rt}$ and~$\av_\It\in\mathbb{C}^{N_\It}$ denote the \acp{arv} of the antenna arrays at the \ac{bs} and \ac{ris}, respectively. We denote~$d_\Rt$ as the inter-element spacing of the \ac{bs}. Analogously, by concatenating all the \ac{bs}/\ac{ris} \ac{arv}s and the channel coefficients in a matrix form, the channel~$\mathbf{H}_\mathrm{BI}$ can be written as~\cite{Wang2020Compressed,He2021Channel}
\begin{equation}
\label{eq:ch_RISRx_angle_domain}    \Hm_{\Rt\It}=\Am_\Rt(\varthetav)\Sigmam_{\Rt\It}\Am_\It^\TT(\thetav),
\end{equation}
where we define~$\varthetav=[\varthetav_1^\TT,\dots,\varthetav_{L_\mathrm{B}}^\TT]^\TT\in\mathbb{R}^{2L_\mathrm{B}},\ \thetav=[\thetav_1^\TT,\dots,\thetav_{L_\mathrm{B}}^\TT]^\TT\in\mathbb{R}^{2L_\mathrm{B}},\ \Am_\Rt(\varthetav)=[\av_{\Rt}(\varthetav_1),\cdots,\av_\Rt(\varthetav_{L_\Rt})]\!\in\!\mathbb{C}^{N_\Rt\times L_\Rt},\ \Am_\It(\thetav)=[\av_{\It}(\thetav_1),\cdots,\av_\It(\thetav_{L_\Rt})]\in\mathbb{C}^{N_\It\times L_\Rt}$, and the beamspace channel matrix~$\Sigmam_{\Rt\It}=\sqrt{N_\Rt N_\It/{L_\Rt}}\diagopr{\rho_1,\cdots,\rho_{L_\Rt}}\in\mathbb{C}^{L_\Rt\times L_\Rt}$.

\subsection{Conventional Cascaded Channel Model}

We elaborate on the channel model with the uplink training stage as the background. During training, the active \ac{ris} employs a set of different amplitude and phase configurations to reflect signals from the \ac{ue} to the \ac{bs}, which leads to a different cascaded channel matrix for different transmissions (more details in Section~\ref{sec:rx_sig}).\footnote{In the downlink data transmission stage, the RIS response is set to a fixed optimized configuration determined through the beamforming process, as will be presented in Section~\ref{sec:JBF} and Section~\ref{sec:RISCO}.} Since we have assumed the \ac{ue}-\ac{bs} direct link to be absent, the conventional \ac{ris}-assisted cascaded channel model corresponding to the ${m_\It}^\text{th}$ training configuration---without accounting for \ac{mc} between \ac{ris} unit cells---can be written as~\cite{Pan2022An,Wang2020Compressed,He2021Channel,Zheng2023JrCUP,Chen2023Channel,Cheng2024Degree,Zheng2024LEO}
\begin{equation}
\label{eq:convCHM}
    \Hm^{m_\It}_\convidx = \Hm_{\Rt\It} \Gammam_{m_\It} \Hm_{\It\Tt},
\end{equation}
where $\Gammam_{m_\It}=\diagopr{\bm{\gamma}_{m_\It}}$, and~$\bm{\gamma}_{m_\It}\in\mathbb{C}^{N_\It}$ is the vector of \ac{ris} reflection coefficients at the ${m_\It}^\text{th}$ training configuration, $m_\It=1,2,\dots,M_\It$, and $M_\It$ denotes the number of \ac{ris} configurations during the training. The reflection coefficient of the~$i$$^\text{th}$ \ac{ris} unit cell can be expressed as~$\gamma_{m_\It,i}=a_ie^{j\vartheta_i}$, where~$a_i\in\mathbb{R}^+$ represents the amplification factor and~$\vartheta_i\in(0,2\pi]$ denotes the phase shift. Both~$a_i$ and~$\vartheta_i$ are reconfigurable; however, they are usually dependent on each other~\cite{Wang2024Wideband}. For nearly-passive \acp{ris}, $a_i\in(0,1]$, as \acp{ris} passively reflect signals without power amplification. When reflection-type amplifiers are incorporated, which is known as active \acp{ris}, signal amplification is enabled and~$a_i>1$ is available~\cite{Long2021Active,Zhang2022Active,Rao2023An}. It has been shown in~\cite{Zheng2023JrCUP} that a higher amplification factor of \ac{ris} accentuates the impact of mutual coupling between \ac{ris} unit cells. Hence, this work employs the active \ac{ris} setup to comprehensively evaluate the impact of \ac{mc} and propose the corresponding solutions. 

Examining~\eqref{eq:convCHM}, it is evident that the received signal at the \ac{bs} is a linear combination of signals reflected from each \ac{ris} unit cell. This model assumes that each RIS unit cell reflects incident \ac{em} waves independently without any interaction, thereby neglecting the \ac{mc} effect. In scenarios where strong MC exists, this assumption can lead to significant model mismatch and pose challenges for channel estimation~\cite{Zheng2024On}, degrading overall communication performance.

\subsection{Mutual Coupling-Aware Channel Model}
\label{sec:MCACM}

Although the \ac{mc} effect is sufficiently weak and can be reasonably omitted in many cases, recent studies have shown that the \ac{mc} effect between RIS unit cells is pronounced when they are tightly integrated (e.g., holographic RIS/MIMO~\cite{Wan2021Terahertz,Bjornson2024Towards}) or the amplification factor increases (e.g., active \acp{ris}~\cite{Long2021Active,Zheng2023JrCUP}). Based on the S-parameter multiport network theory, a mutual coupling-aware communication model has been recently derived and validated~\cite{Li2024Beyond,Abrardo2024Design,Zheng2024Mutual}, which will be adopted in this paper. To focus on the \ac{ris}, this work neglects the mutual coupling at the \ac{bs} and \ac{ue}. 

By incorporating the mutual coupling effect within the \ac{ris} response, the channel model~\eqref{eq:convCHM} is reformulated as~\cite{Wijekoon2024Phase,Li2024Beyond,Abrardo2024Design,Zheng2024Mutual}
\begin{equation}
    \label{eq:MC_ChM}
    \Hm_\mcidx^{m_\It} = \Hm_{\Rt\It} (\Gammam_{m_\It}^{-1} - \Sm)^{-1} \Hm_{\It\Tt},
\end{equation}
where~$\Sm\in\mathbb{C}^{N_\It\times N_\It}$ is the scattering matrix between the \ac{ris} unit cells, characterizing the \ac{mc} effect. Specifically,~$S_{i,j}$ denotes the scattering parameter between the~$i$$^\text{th}$ and~$j$$^\text{th}$ unit cells of the \ac{ris}, which is defined as
$
	S_{i,j} = \frac{V_i^\mathrm{out}}{V_j^\mathrm{in}}\big|_{V_k^\mathrm{in}=0,\ \forall k\neq j}
$~\cite{Pozar2011Microwave},
where~$V_j^\mathrm{in}$ denotes the voltage wave driving at the~$j$$^\text{th}$ unit cell and~$V_i^\mathrm{out}$ denotes the voltage wave coming out from the~$i$$^\text{th}$ unit cell.
 {The initial derivation of model~\eqref{eq:MC_ChM} can be found in~\cite{Shen2022Modeling}, with the complete derivation and analysis available in, e.g.,~\cite{Nerini2024Universal, Abrardo2024Design}. Additionally, an experimental validation  of~\eqref{eq:MC_ChM} on a real \ac{ris} prototype has been reported in~\cite{Zheng2024Mutual}. Note that although this model is commonly used to analyze passive \ac{ris}, it also remains valid for the active \ac{ris} case; see, e.g.,~\cite{Cao2025Active}.} For clarity, we refer to~\eqref{eq:convCHM} as the \emph{conventional model} and~\eqref{eq:MC_ChM} as the \emph{exact model}. Note that by setting~$\Sm = \mathbf{0}$, the exact model reduces to the conventional model.

\begin{remark}\label{rmk:RISamp}
    It can be inferred from~\eqref{eq:MC_ChM} that with the same level of \ac{mc}, higher amplification of the \ac{ris} can exacerbate the impact of \ac{mc}. This is because the scattering matrix~$\Sm$ is superposed on the inversion of~$\Gammam$. Hence, the greater the values in~$\Gammam$, the more significantly its inversion can be impacted by~$\Sm$.
\end{remark}

% pdf/png/jpg figures
\begin{figure}[t]
  \centering
  \includegraphics[width=0.85\linewidth]{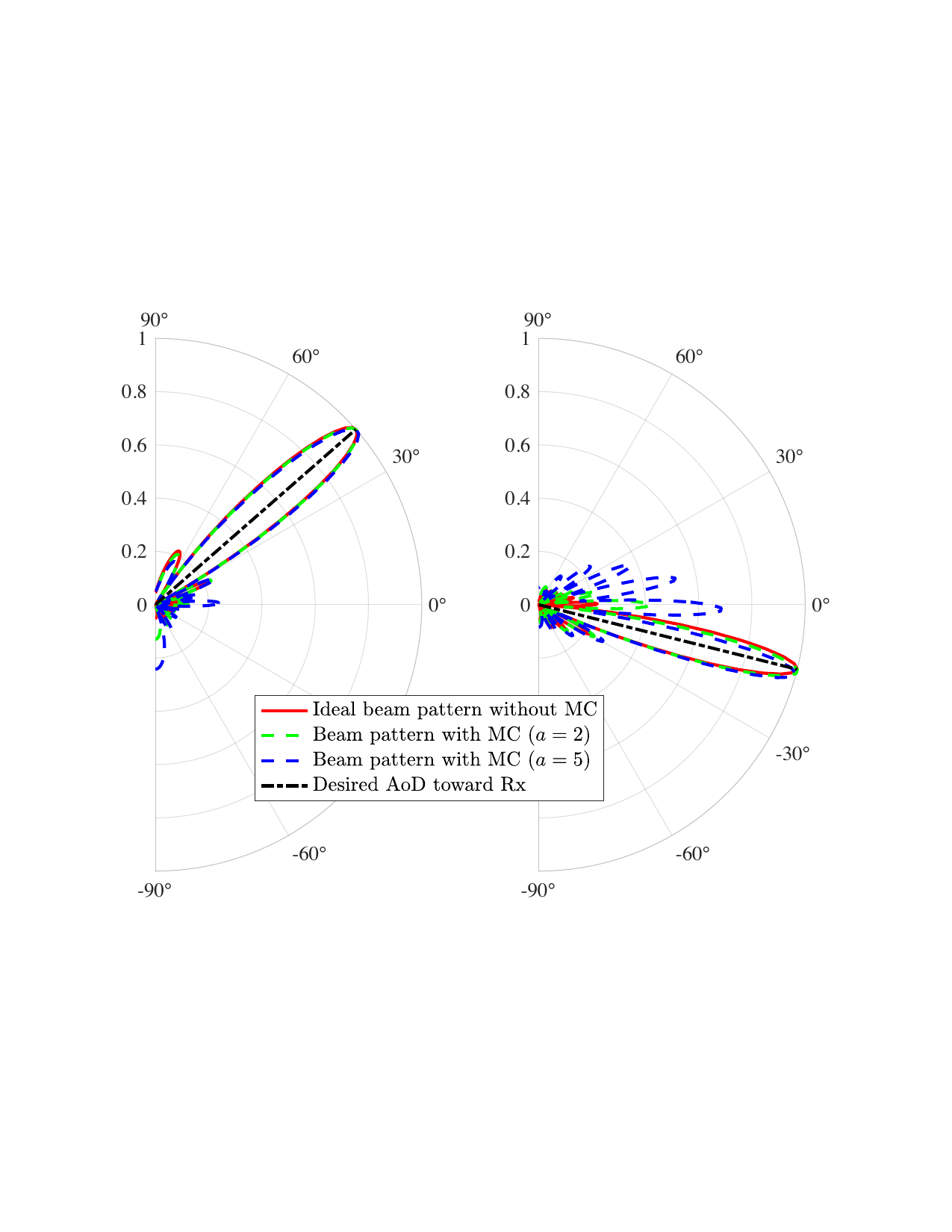}
  \vspace{-1em}
  \caption{ 
      Normalized beam pattern of the \ac{ris} reflection using the directional beam~$\bm{\gamma}=a\av_\It^*(\theta)\odot\av_\It^*(\phi)$ in a 16-element \ac{ula} layout \ac{ris}. Here, we test over different values of the RIS amplification factor~$a=\{2,5\}$, assuming all unit cells maintain the same amplification factor. The scattering matrix~$\Sm$ is assigned based on the measured data in~\cite{Zheng2024Mutual}.
    }
  \label{fig:BeamPattern}
  \vspace{-3mm}
\end{figure}

To verify the inference in Remark~\ref{rmk:RISamp}, Fig.~\ref{fig:BeamPattern} depicts the normalized beam pattern reflected from~\ac{ris} when a directional beam~$\bm{\gamma}=a\av_\It^*(\theta)\odot\av_\It^*(\phi)$ is used in a 16-element \ac{ula} layout \ac{ris}. This evaluation is based on the scattering parameters measured in~\cite{Zheng2024Mutual}. As illustrated, when the \ac{ris} amplification is low ($a=2$), the impact of \ac{mc} on the \ac{ris} radiation pattern is weak and negligible. However, when the \ac{ris} amplification is enlarged ($a=5$), we observe that the same \ac{mc} can more significantly distort the beam pattern.  

 {
\begin{remark}
This paper focuses on the \ac{mc}-aware setting, addressing channel estimation and beamforming problems given a scattering matrix~$\Sm$. However, in practical scenarios, this information may be unavailable or inaccurate. This challenge can be addressed via either offline measurement or online estimation. For offline measurement, a practical framework to measure the scattering matrix of a given RIS hardware is presented in~\cite{Zheng2024Mutual}. For online estimation, insightful examples can be found in, e.g.,~\cite{Fadakar2025Mutual}. Since these topics lie beyond the scope of this paper, we assume~$\Sm$ to be known and do not delve into its acquisition.
\end{remark}} 

\section{Uplink Signal Model and \ac{cs} Formulation}
\label{sec:rx_sig_mc}
This section details the signal model in the uplink. Subsequently, we present some useful mathematical transformations based on the natures of the received signals to reveal a \ac{cs}-like structure, thereby facilitating the later development of the channel estimation method. 
\subsection{The Received Signal}
\label{sec:rx_sig}
For illustrative convenience, this work assumes that both the \ac{bs} and the \ac{ue} are equipped with a single \ac{rfc}, adopting an analog beamforming scheme. We begin the uplink training procedure by sending pilots, training beams, and adjusting the amplitudes and phase shifts of the \ac{ris} elements within a time coherence block divided into multiple subframes. On the \ac{ris} side, this is achieved using~$M_\It$ configurations. Each subframe corresponds to one random~\ac{ris} setup. During each subframe, while the~\ac{ris} holds the configuration constant, both the \ac{ue} and \ac{bs} utilize various training beams. We assume that the number of transmitted pilots and precoding training beams equals the total number of combining training beams, which is $M_\Rt$ per subframe. Specifically, for the ${m_\It}^\text{th}$ \ac{ris} configuration, the \ac{ue} transmits~$M_\Rt$ pilots using random precoding training beams $\fv_{m_\Rt}\in\mathbb{C}^{N_\Tt}$, while the \ac{bs} records~$M_\mathrm{B}$ received pilots through random combining beams $\wv_{m_\Rt}\in\mathbb{C}^{N_\Rt}$, $\ m_\Rt=1,2,\dots,M_\Rt$. This results in a total of $M_\Rt M_\It$ training measurements. Note that we assume, during the uplink training, entries in $\fv_{m_\Rt}$ and $\wv_{m_\Rt}$ are constrained by $|f_{i,m_\Rt}|^2=\frac{P_\Tt}{N_\Tt}$, $\forall i=1,2,\dots,N_\Tt$, $|w_{j,m_\Rt}|^2=\frac{1}{{N_\Rt}}$, $\forall j=1,2,\dots,N_\Rt$, where $P_\Tt$ is the transmit power at the \ac{ue}. Moreover, the $M_\mathrm{B}$ precoding and combining beams remain fixed for different \ac{ris} configurations.

Based on the exact model~\eqref{eq:MC_ChM}, the received signal for the $(m_\Rt,m_\It)$ measurement is given by
\begin{align}
    \notag
    y_{m_\Rt,m_\It}&{=}\wv_{m_\Rt}^\HH\Hm_\mcidx^{m_\It}\fv_{m_\Rt}x_{m_\Rt}+\omega_{m_\Rt,m_\It},\\ %\label{eq:rxsig_mc_onemeas_2}
    \notag
    &{=}x_{m_\Rt}\big(\fv^\TT_{m_\Rt}\otimes\wv_{m_\Rt}^\HH\big)\vect{\Hm_\mcidx^{m_\It}}+\omega_{m_\Rt,m_\It},\\ \label{eq:rxsig_mc_onemeas_3}
    &{=} x_{m_\Rt}\pv_{m_\Rt}^\TT\vect{\Hm_\mcidx^{m_\It}}+\omega_{m_\Rt,m_\It},
\end{align}
where~$x_{m_\Rt}$ is the transmitted training pilot and~$\omega_{m_\Rt,m_\It}$ denotes the total additive noise. Here, we assume $\EE[x_{m_\Rt}x_{m_\Rt}^*]=1$. We denote $\pv_{m_\Rt}\triangleq\big(\fv^\TT_{m_\Rt}\otimes\wv_{m_\Rt}^\HH\big)^\TT\in\mathbb{C}^{N_\Rt N_\Tt}$ in~\eqref{eq:rxsig_mc_onemeas_3}. Based on the propagation model of active \acp{ris}, the total noise~$\omega_{m_\mathrm{B},m_\mathrm{I}}$ can be expressed as~\cite{Long2021Active,Zhang2022Active}
\begin{equation}\label{eq:totalNoise}
    \omega_{m_\Rt,m_\It} = \wv_{m_\Rt}^\HH\big(\Hm_{\Rt\It}(\Gammam_{m_\It}^{-1} - \Sm)^{-1}\omegav^{\mathrm{I}}_{m_\Rt,m_\It} + \omegav^{\mathrm{B}}_{m_\Rt,m_\It}\big),
\end{equation}
where~$\omegav^{\mathrm{I}}_{m_\Rt,m_\It}\!\sim\!\mathcal{CN}(\mathbf{0},\sigma_\mathrm{I}^2\mathbf{I}_{N_{\It}})$ and $\omegav^{\mathrm{B}}_{m_\Rt,m_\It}\!\sim\!\mathcal{CN}(\mathbf{0},\sigma_\mathrm{B}^2\mathbf{I}_{N_{\Rt}})$ are the thermal noise at the active \ac{ris} and the \ac{bs}, respectively.

We collect all the received signals over~$m_\mathrm{B}$ and $m_\mathrm{I}$, and denote the matrix of the resultant received signals after taking away the impact of $x_{m_\It}$ using a matched filter~\cite{Tarboush2023Compressive,Tarboush2024Cross} as~$\Ym\in\mathbb{C}^{M_\Rt\times M_\It}$, whose each column is given by
\begin{equation}
\label{eq:YmI}
    [\Ym]_{:,m_\It} = \Pm\vect{\Hm_\mcidx^{m_\It}} + \bar{\omegav}_{m_\It},
\end{equation}
where~$\Pm=[\pv_1,\pv_2,\dots,\pv_{M_\Rt}]^\TT\in\mathbb{C}^{M_\Rt\times N_\Rt N_\Tt}$ and~$\bar{\omegav}_{m_\mathrm{I}}=[\omega_{1,m_\It},\omega_{2,m_\It},\dots,\omega_{M_\Rt,m_\It}]^\TT\in\mathbb{C}^{M_\Rt}$. In the following subsections, we present some mathematical transformations based on the signal structure to reveal a \ac{cs}-like structure, thereby facilitating the later development of channel estimators. We first recap the conventional formulation without \ac{mc}, and then propose a new formulation that accurately accounts for \ac{mc}.

\subsection{The \ac{cs} Formulation Based on the Conventional Model}
\label{sec:CSCV}
Based on the conventional model~\eqref{eq:convCHM} that does not account for~\ac{mc}, one can rewrite~\eqref{eq:YmI}, by setting~$\Sm=\mathbf{0}$, as
\begin{equation}
    \label{eq:Ycv_2}
    [\Ym_\convidx]_{:,m_\It} =\Pm\vect{\Hm_\convidx^{m_\It}} + \bar{\omegav}_{m_\It}, 
\end{equation}
where we denote the received signal based on the conventional model as~$\Ym_\mathrm{cv}$ to distinguish it from the exact received signal~$\Ym$ defined in~\eqref{eq:YmI}.  {We can express the term $\vect{\Hm_\convidx^{m_\It}\!}$ in~\eqref{eq:Ycv_2} as 
\begin{equation}
\label{eq:vec_H_BIU_cv}
    \vect{\Hm_\convidx^{m_\It}}\!=\!\Am_{\Tt\Rt}(\varphiv,\varthetav)\!\big(\Sigmam_{\It\Tt}\otimes\Sigmam_{\Rt\It}\big)\!\big(\Am_{\It\It}^\convidx(\phiv,\thetav)\big)^\TT\bm{\gamma}_{m_\It},\!
\end{equation}
where $\Am_{\Tt\Rt}(\varphiv,\varthetav) \triangleq\Am_\Tt(\varphiv)\otimes\Am_\Rt(\varthetav)\in\mathbb{C}^{N_\Tt N_\Rt \times L_\Tt L_\Rt}$ and $\Am_{\It\It}^\convidx(\phiv,\thetav)\triangleq \Am_\It(\phiv)\bullet\Am_\It(\thetav)\in\mathbb{C}^{N_\It \times L_\Tt L_\Rt}$. A step-by-step derivation to obtain \eqref{eq:vec_H_BIU_cv} is given in Appendix \ref{sec:vec_Hcv_mI}. Note that the structure of $\Am_{\It\It}^\convidx(\phiv,\thetav)$ mainly holds because $\Gammam_{m_\It}$ is a diagonal matrix.} This is an essential difference between the conventional and exact models that will be derived later in Section~\ref{sec:CSFEM}. 

Furthermore, we define the \emph{conventional equivalent cascaded channel} and its vectorized version as~\cite{Wang2020Compressed,Pan2022An}
\begin{align}
&\label{eq:Gcv}\Gm_\convidx \!\triangleq\! \Am_{\Tt\Rt}(\varphiv,\varthetav)\!\big(\Sigmam_{\It\Tt}\!\otimes\!\Sigmam_{\Rt\It}\big)\!\big(\!\Am_{\It\It}^\convidx(\phiv,\thetav)\!\big)\!^\TT\!\!\!\in\!\mathbb{C}^{N_\mathrm{U}\!N_\mathrm{B}\!\times\! N_\mathrm{I}}\!,\!\\
&\label{eq:vec_Gcv}\vect{\Gm_\convidx}=\underbrace{\Am_{\It\It}^\convidx\!(\phiv,\!\thetav)\!\otimes\!\Am_{\Tt\Rt}(\varphiv,\!\varthetav)}_{\triangleq \Dm_\convidx}\!\mathrm{vec}\big(\Sigmam_{\It\Tt}\!\otimes\!\Sigmam_{\Rt\It}\big).
\end{align}
By substituting \eqref{eq:vec_H_BIU_cv} and \eqref{eq:Gcv} into \eqref{eq:Ycv_2}, we can write $[\Ym_\convidx]_{:,m_\It}$ as
\begin{equation}
    \label{eq:Ycv_tot}[\Ym_\convidx]_{:,m_\It}=\Pm\Gm_\convidx\bm{\gamma}_{m_\It}+ \bar{\omegav}_{m_\It},
\end{equation}
and we obtain, after collecting all of the measurements, the following observation model~\cite{Pan2022An,Wang2020Compressed,Wei2021Channel}
\begin{equation}
    \label{eq:rxsig_conv_ALLmeas}  \Ym_\convidx=\Pm\Gm_\convidx\Thetam_\convidx+\Omegam,
\end{equation}
where $\Thetam_\convidx\triangleq{\left[{\bm{\gamma}_1,\bm{\gamma}_2,\cdots,\bm{\gamma}_{M_\It}}\right]}\in\mathbb{C}^{N_\It\times M_\It}$ and~$\Omegam\triangleq[\bar{\omegav}_1,\bar{\omegav}_2,\dots,\bar{\omegav}_{M_\It}]\in\mathbb{C}^{M_\Rt\times M_\It}$. By applying a vectorization operation to \eqref{eq:rxsig_conv_ALLmeas} and substituting~\eqref{eq:vec_Gcv}, we obtain
\begin{align}
    \vect{\Ym_\convidx-\Omegam} &= \big(\Thetam_\convidx^\TT\otimes\Pm\big)\vect{\Gm_\convidx},\notag\\
    &= \Psim_\convidx\Dm_\convidx\mathrm{vec}\big(\Sigmam_{\It\Tt}\otimes\Sigmam_{\Rt\It}\big),\label{eq:vecYcv}
\end{align}
where we define~$\Psim_\convidx\triangleq\Thetam_\convidx^\TT\otimes\Pm\in\mathbb{C}^{M_\Rt M_\It\times N_\Rt  N_\Tt N_\It}$ as the known measurement matrix, and~$\Dm_\convidx$ is the basis matrix define in~\eqref{eq:vec_Gcv}. Here, it is clear that~\eqref{eq:vecYcv} is a typical \ac{cs} problem structure~\cite{Baraniuk2007Lecture}. By carefully designing the dictionary~$\Dm_\mathrm{cv}$ and adopting sparse \ac{cs} estimation algorithms, we can recover the equivalent cascaded channel~$\Gm_\mathrm{cv}$ based on the received signal~$\Ym_\mathrm{cv}$, since $\vect{\Gm_\mathrm{cv}}=\Dm_\mathrm{cv}\mathrm{vec}\big(\Sigmam_{\It\Tt}\otimes\Sigmam_{\Rt\It}\big)$.

\subsection{The \ac{cs} Formulation Based on the Exact Model}
\label{sec:CSFEM}
Now, let's consider the case with \ac{mc}. Due to the existence of matrix~$\Sm$, the \ac{ris} response~$(\Gammam_{m_\It}^{-1} - \Sm)^{-1}$ is no longer a diagonal matrix, thus \eqref{eq:vec_H_BIU_cv} does not hold. The following derivation aims to formulate an accurate \ac{cs} structure accounting for \ac{mc} starting from~\eqref{eq:YmI}. Following the same steps in deriving~\eqref{eq:vec_H_BIU_cv} in Section~\ref{sec:CSCV}, we can express the term $\vect{\Hm_\mcidx^{m_\It}}$ in~\eqref{eq:rxsig_mc_onemeas_3} as
\begin{align}
\vect{\Hm_\mcidx^{m_\It}}\!&=\!\vect{\!\Am_\Rt(\varthetav)\Sigmam_{\Rt\It}\Am_\It^\TT(\thetav)\bar{\Gammam}_{m_\It}\Am_\It(\phiv)\Sigmam_{\It\Tt}\Am^\TT_\Tt(\varphiv)\!},\notag\\ &\label{eq:vec_H_BIU_mc}=\Am_{\Tt\Rt}(\varphiv,\varthetav)\big(\Sigmam_{\It\Tt}\otimes\Sigmam_{\Rt\It}\big)\Am_{\It\It}^\TT(\phiv,\thetav)\xiv_{m_\It},
\end{align}
where $\Am_{\It\It}(\phiv,\thetav)\triangleq\Am_\It(\phiv)\otimes\Am_\It(\thetav)\in\mathbb{C}^{N_\It^2 \times L_\Tt L_\Rt}$ and, for notational convenience, we denote ~$\xiv_{m_\It}\triangleq\vect{\bar{\Gammam}_{m_\It}}\in\mathbb{C}^{N^2_\It}$ with ~$\bar{\Gammam}_{m_\It}\triangleq(\Gammam_{m_\It}^{-1} - \Sm)^{-1}$. The difference lies in the last two terms of~\eqref{eq:vec_H_BIU_mc} compared to~\eqref{eq:vec_H_BIU_cv}, which arises from the non-diagonality of the \ac{mc}-aware \ac{ris} response matrix.

We define the \emph{exact equivalent cascaded channel} and its vectorized version
\begin{align}
\label{eq:Gmc}&\Gm_\mcidx \!\triangleq\! \Am_{\Tt\Rt}(\varphiv,\varthetav)\big(\Sigmam_{\It\Tt}\!\otimes\!\Sigmam_{\Rt\It}\big)\Am_{\It\It}^\TT(\phiv,\thetav)\!\in\!\mathbb{C}^{N_\mathrm{U}\!N_\mathrm{B}\!\times\! N_\mathrm{I}^2}\!,\\
\label{eq:vec_Gmc}&\vect{\Gm_\mcidx}\!=\!\underbrace{\Am_{\It\It}(\phiv,\thetav)\!\otimes\!\Am_{\Tt\Rt}(\varphiv,\varthetav)}_{\triangleq\Dm_\mcidx}\mathrm{vec}\big(\Sigmam_{\It\Tt}\otimes\Sigmam_{\Rt\It}\big). 
\end{align}
Similarly, based on~\eqref{eq:YmI}, we have
\begin{equation}\label{eq:Y_ExactModel}
    \Ym =\Pm\Gm_\mcidx\Thetam_\mcidx+\Omegam,
\end{equation}
where we define~$\Thetam_\mcidx\triangleq[\xiv_1, \xiv_2,\cdots,\xiv_{M_\It}]\in\mathbb{C}^{N^2_\It\times M_\It}$. By vectorizing~\eqref{eq:Y_ExactModel} and substituting~\eqref{eq:vec_Gmc}, we have
\begin{equation}
\label{eq:Rxsig_mc_vecChExc}
\vect{\Ym-\Omegam}=\Psim_\mcidx\Dm_\mcidx\mathrm{vec}\big(\Sigmam_{\It\Tt}\otimes\Sigmam_{\Rt\It}\big),
\end{equation}
which follows the typical structure of the \ac{cs} problem~\cite{Baraniuk2007Lecture}. Here, $\Psim_\mcidx\triangleq(\Thetam_\mcidx^\TT\otimes\Pm)\in\mathbb{C}^{M_\Rt M_\It\times N_\Rt N_\Tt N^2_\It}$ is the measurement matrix, and~$\Dm_\mcidx$ is the basis matrix defined in~\eqref{eq:vec_Gmc}. 

\begin{remark}
Comparing~\eqref{eq:vecYcv} and~\eqref{eq:Rxsig_mc_vecChExc} reveals that accounting for \ac{mc} results in a dimensional increase in both the measurement and basis matrices. Specifically, the dimension of~$\Psim_\mathrm{cv}$ is~$M_\Rt M_\It \times N_\Rt N_\Tt N_\It$, while the dimension of~$\Psim_\mathrm{mc}$ expands to~$M_\Rt M_\It \times N_\Rt N_\Tt N_\It^2$. Moreover, the dimension of the atoms in $\mathbf{D}_\mathrm{mc}$ is~$N_\mathrm{I}$ times that of $\mathbf{D}_\mathrm{cv}$. The enhanced accuracy comes at the expense of \emph{higher complexity}. Given the large number of elements in an \ac{ris} (and consequently, a large $N_\mathrm{I}$), such an increase in dimensionality can render \ac{cs} solutions infeasible in practice. We will show in Section~\ref{sec:prop_ch_mc} that applying a \ac{dr} strategy can effectively mitigate this complexity issue.
\end{remark}

\section{Low-Complexity Channel Estimation Strategy} 
\label{sec:CE_MC}
This section first presents the on-grid \ac{cs} solutions for channel estimation based on both the conventional and exact channel models. Next, a two-stage estimation strategy is proposed to strike a balance between computational complexity and model accuracy. 

\subsection{Quantized Dictionaries and On-Grid Channel Estimation}
We approximate the channel based on pre-determined dictionaries and apply on-grid \ac{cs} techniques to solve~\eqref{eq:vecYcv} and~\eqref{eq:Rxsig_mc_vecChExc}. Following angular quantization to a grid of size $G_i=G_i^\mathrm{h} G_i^\mathrm{v}$,\footnote{ {Since our channel model is based on ideal array response matrices, we use overcomplete discrete Fourier transform (DFT) matrices as sparsifying dictionaries. However, in practical scenarios involving non-isotropic radiation patterns, hardware impairments, and calibration errors, DFT dictionaries become suboptimal. In such cases, dictionary learning offers a promising approach to better capture the underlying sparse channel structure. We leave this direction for future work.}} the discrete array response matrices~$\tilde{\Am}_i(\psiv)\in\mathbb{C}^{N_i\times G_i}$, recalling that $i\in \{\Tt,\Rt,\It\}$, can be constructed as
\begin{equation}
    [\tilde{\Am}_i(\psiv)]_{:,g_{i}^\mathrm{h}+(g_{i}^\mathrm{v}\!-\!1)G_{i}^\mathrm{h}} \!\!=\!\frac{1}{\sqrt{N_i}}e^{-j2\pi{\beta}^\mathrm{h}_{g_{i}^\mathrm{h}}\nv(\!N_i^\mathrm{h}\!)}\!\!\!\otimes\!e^{-j2\pi{\beta}^\mathrm{v}_{g_{i}^\mathrm{v}}\nv(\!N_i^\mathrm{v}\!)}\!\!.\notag
\end{equation}
For $\mathrm{\varepsilon}\in \{\mathrm{h},\mathrm{v}\}$, the discrete spatial angles are defined as~\cite{Tarboush2023Compressive}
\begin{equation}      
    \label{eq:grid_spatialangle}
    {\beta}^\mathrm{\varepsilon}_{g_i^\mathrm{\varepsilon}}\!=\!\frac{2d_i}{\lambda G_i^\mathrm{\varepsilon}}\!\Big(g_i^\mathrm{\varepsilon}\!-\!\frac{G_i^\mathrm{\varepsilon}\!+\!1}{2}\Big),\ g_i^\mathrm{\varepsilon}\!=\!1,\!\cdots\!,G_i^\mathrm{\varepsilon};\ G_i^\mathrm{\varepsilon}\!\geq\! N_i^\mathrm{\varepsilon}.
\end{equation}
According to~\eqref{eq:vec_Gcv} and~\eqref{eq:vec_Gmc}, we construct the conventional and exact dictionary matrices as 
\begin{align}
    &\label{eq:Dtilde_cv}\tilde{\Dm}_\convidx \!=\! \tilde{\Am}_{\It\It}^{\convidx}(\phiv,\thetav)\!\otimes\!\tilde{\Am}_{\Tt\Rt}(\varphiv,\varthetav)\!\in\!\mathbb{C}^{N_\Rt N_\Tt N_\It\!\times\! G_\Rt G_\Tt G^2_\It}\!,\\&\label{eq:Dtilde_mc}\tilde{\Dm}_\mcidx \!=\! \tilde{\Am}_{\It\It}(\phiv,\thetav)\!\otimes\!\tilde{\Am}_{\Tt\Rt}(\varphiv,\varthetav)\in\mathbb{C}^{N_\Rt N_\Tt N^2_\It\!\times\! G_\Rt G_\Tt G^2_\It},
\end{align}
where $\tilde{\Am}_{\It\It}^{\convidx}(\phiv,\thetav)\triangleq\tilde{\Am}_\It(\phiv)\bullet\tilde{\Am}_\It(\thetav)\in\mathbb{C}^{N_\It\times G^2_\It}$, $\tilde{\Am}_{\It\It}(\phiv,\thetav)\triangleq\tilde{\Am}_\It(\phiv)\otimes\tilde{\Am}_\It(\thetav)\in\mathbb{C}^{N^2_\It\times G^2_\It}$, and $\tilde{\Am}_{\Tt\Rt}(\varphiv,\varthetav) \triangleq\tilde{\Am}_\Tt(\varphiv)\otimes\tilde{\Am}_\Rt(\varthetav)\in\mathbb{C}^{N_\Rt N_\Tt \times G_\Rt G_\Tt}$. Following~\cite[Proposition~1]{Wang2020Compressed}, the Khatri-Rao-based dictionary $\tilde{\Am}_{\It\It}^{\convidx}(\phiv,\thetav)$ of conventional structure~\eqref{eq:Dtilde_cv} contains significant redundancy.  {Letting $ \bar{G}_{\It\It}$ denote the number of distinct columns in $\tilde{\Am}_{\It\It}^{\convidx}(\phiv,\thetav)$ with $ G_\mathrm{I}\leq\bar{G}_{\It\It}\ll G^2_\It$, we then have $G_\Rt G_\Tt G_\It\!\leq\! \bar{G}_\convidx\!\ll\! G_\Rt G_\Tt G^2_\It$, with $\bar{G}_\convidx\triangleq G_\Rt G_\Tt \bar{G}_{\It\It}$ denoting the number of distinct columns in $\bar{\Dm}_{\mathrm{cv}}$.} However, the Kronecker-based dictionary $\tilde{\Am}_{\It\It}(\phiv,\thetav)$ of exact structure~\eqref{eq:Dtilde_mc} does not pose similar redundancy, i.e. the number of distinct columns in $\tilde{\Dm}_\mathrm{mc}$ is $\bar{G}_\mcidx=G_\Rt G_\Tt G^2_\It$.
Thus, we define~$\bar{\Dm}_\convidx\in\mathbb{C}^{N_\Rt N_\Tt N_\It\times \bar{G}_\convidx}$ and~$\bar{\Dm}_\mcidx\in\mathbb{C}^{N_\Rt N_\Tt N_\It^2\times \bar{G}_\mcidx}$ as the dictionaries that contain only distinct columns. This implies that the exact dictionary~$\bar{\Dm}_\mcidx$ not only has a higher dimensionality of atoms but also a greater number of atoms.

By substituting~$\bar{\Dm}_\convidx$/$\bar{\Dm}_\mcidx$ into~\eqref{eq:vecYcv}/\eqref{eq:Rxsig_mc_vecChExc}, we can get the sparse formulation for the conventional/exact model as
\begin{equation}
    \label{eq:Rxsig_CS}  \yv=\vect{\Ym}=\Xim_\chi\bar{\sigmav}_\chi+\vect{\Omegam},
\end{equation}
where $\Xim_\chi\!=\!\Psim_\chi\bar{\Dm}_\chi\!\in\!\mathbb{C}^{M_\Rt M_\It \times \bar{G}_\chi}$ is the sensing matrix and $\bar{\sigmav}_\chi\!\in\!\mathbb{C}^{\bar{G}_\chi}$ is a sparse vector,~$\chi\in\{\mathrm{cv},\mathrm{mc}\}$. Here,~$\bar{\sigmav}_\chi$ can be estimated using the CS algorithm such as \ac{omp}~\cite{Lee2016Channel}. Then the equivalent cascaded channel~$\Gm_\mathrm{cv}$/$\Gm_\mathrm{mc}$ can be recovered based on~\eqref{eq:vec_Gcv}/\eqref{eq:vec_Gmc}. Nonetheless, the conventional model-based solution suffers from significant model mismatch due to the neglect of MC, while the exact model-based solution involves a dimensional lift (as seen when comparing the dimensions of the dictionaries~$\bar{\Dm}_\mathrm{cv}$ and~$\bar{\Dm}_\mathrm{mc}$). In the following subsection, we propose a two-stage compromise estimation strategy that enjoys the advantages of both methods.

\subsection{Proposed Two-Stage Channel Estimation Strategy}
\label{sec:prop_ch_mc}

\begin{algorithm}[t]
 \caption{\small{Proposed Two-Stage MC--aware Channel Estimation}}
 \label{algo:prop_ch_est}
 \begin{algorithmic}[1]
 \Statex \textbf{Input:} $\yv$, $\Xim_\convidx$, $\Psim_\mathrm{mc}$, $\hat{L}$, and $G_\DicRed$. \quad  \textbf{Output:} $\hat{\sigmav}_{\mathrm{mc}}$
 \LineComment{\textbf{Stage 1:} Coarse estimation}
 \State Estimate using OMP [$\hat{\sigmav}_\mathrm{cv}, \Upsilon^\mathrm{cv}$] = OMP($\yv,\Xim_\convidx,\hat{L}$).
 \LineComment{\textbf{Stage 2:} Refined estimation}
 \State Define $\acute{\Am}_\It=[\tilde{\Am}_{\It\It}^{\convidx}]_{:,\Upsilon^\mathrm{cv}_\It}$ according to the indices $\Upsilon^\mathrm{cv}_\It$ that corresponds to~\ac{ris} from previous estimated support $\Upsilon^\mathrm{cv}$.
 \State Apply proposed DR algorithm $\hat{\Am}_\It^\DicRed$ = DR($\acute{\Am}_\It,\tilde{\Am}_{\It\It},G_\DicRed$).
 \State Compute $\Xim_\DicRed=\Psim_\mathrm{mc}\bar{\Dm}_\DicRed$ with $\bar{\Dm}_\DicRed=\hat{\Am}_\It^\DicRed\otimes\tilde{\Am}_{\Tt\Rt}$.
 \State Estimate using OMP [$\hat{\sigmav}_{\mathrm{mc}}, \Upsilon^{\mathrm{mc}}$] = OMP($\yv,\Xim_\DicRed,\hat{L}$).
  \Statex {------ \textbf{\ac{dr} Algorithm}: ${\Am}_\mathrm{DR}$ = DR(${\Am},{\Am}_\ovs,{G}$) ------}
 \State Compute correlation: $ {\Cm\!\!=\!\!([{\Am}_\ovs]_{1:N_\It,1:\bar{G}_{\It\It}})^\HH {{\Am}}}$.
 \State Select ${G}$ indices from $\sqrt{\diagopr{\Cm \Cm^\HH}}$ with largest correlation value and populate the index set ${\mathcal{G}}$ with these indices. 
 \State Return DR matrix ${\Am}_\mathrm{DR}=[{\Am}_\ovs]_{:,{\mathcal{G}}}$.
 \end{algorithmic} 
\end{algorithm}
\subsubsection{ {Motivation}}
\label{sec:motv}
Before presenting our estimation strategy, we elaborate on the motivation behind it.  {Both the \ac{ris} and the~\ac{bs} typically deploy massive antenna arrays operating at high frequencies, resulting in increased channel estimation complexity and training overhead. When an initial estimate or prior information is available, reducing the search space to a carefully designed subspace, rather than the full parameter space, can significantly reduce the computational complexity.} As discussed in Section~\ref{sec:MCACM}, the exact model simplifies to the conventional model when~$\Sm=\mathbf{0}$. Although this simplification introduces model mismatch, we can leverage its low complexity to obtain a coarse estimate as prior information for the exact model-based method. This, in turn, helps reduce the complexity of the exact CS solution. Building on this idea, we propose a two-stage estimation algorithm outlined in Algorithm~\ref{algo:prop_ch_est}. Specifically, the first stage estimates~$\bar{\sigmav}_\mathrm{cv}$ by applying \ac{omp} to~\eqref{eq:Rxsig_CS} with~$\chi=\mathrm{cv}$. In the second stage, we utilize the prior information $\hat{\sigmav}_\mathrm{cv}$ and support~$\Upsilon^\mathrm{cv}$ to reduce the search space by shrinking the columns of $\tilde{\Am}_{\It\It}(\phiv,\thetav)$ (and consequently $\bar{\Dm}_\mathrm{mc}$).\footnote{Here, we apply the proposed \ac{dr} technique to the \ac{ris} side only because of the typically larger size of the RIS compared to the BS antenna array. However, the same logic can be applied to the~\ac{bs}.} 

\subsubsection{Coarse Estimation}
\label{sec:CScoarse}
In the first stage, we apply the~\ac{omp} algorithm to estimate $\bar{\sigmav}_\mathrm{cv}$ depending on $\Xim_\convidx$ and $\hat{L}$, as presented in step~1 of Algorithm~\ref{algo:prop_ch_est}. The parameter~$\hat{L}$ is the number of paths to be estimated and can be different from the actual channel sparsity level because it is unknown. Details of the \ac{omp} algorithm can be found in~\cite[Algorithm~1]{Lee2016Channel}. The estimated sparse vector~$\hat{\sigmav}_\mathrm{cv}$ is filled based on the detected support~$\Upsilon^\mathrm{cv}$ with a cardinality of $\abs{\Upsilon^\mathrm{cv}}=\hat{L}$. 

\subsubsection{Refined Estimation}
In the second stage, we perform exact CS estimation with prior information of a coarse estimate.\footnote{This prior information can be obtained either through the conventional model-based CS estimation described in Section~\ref{sec:CScoarse} or from the previous exact CS estimation.} 
 {The primary goal of the~\ac{dr} stage is to intelligently select a subset of atoms and construct a compact dictionary. This selection is made from a large and potentially oversampled dictionary. In this paper, the correlation-based atom selection is used to quantify the coherence between atoms associated with the estimated support from the first stage and those in the larger dictionary. For any atom in an overcomplete discrete Fourier transform (DFT) dictionary, selecting the $G$ most correlated atoms results in a refined set that effectively captures the angular region around the original atoms.} 

In step 2 of Algorithm~\ref{algo:prop_ch_est}, we construct $\acute{\Am}_\It\!\in\!\mathbb{C}^{N_\It\!\times\!\hat{L}}$, a matrix consisting of the columns of~$\tilde{\Am}_{\It\It}^{\convidx}$ corresponding to the support $\Upsilon^\mathrm{cv}$ obtained in step~1. This dictionary represents the spatial angles to limit the search space over and is the first input for the proposed~\ac{dr} algorithm. The second input is the dictionary~$\tilde{\Am}_{\It\It}$. We apply the proposed~\ac{dr} algorithm in step 3, where the idea is to extract only a small number of columns from $\tilde{\Am}_{\It\It}$ that were used to get~$\bar{\Dm}_\mcidx$. This extraction process, i.e., the \ac{dr} process, is summarized in steps 6--8 of Algorithm~\ref{algo:prop_ch_est}.  {To reduce the complexity of the \ac{dr} process, we do not select columns directly from $\tilde{\Am}_{\It\It}$. Instead, we first choose a submatrix $\tilde{\Am}_{\It\It}^\mathrm{sub}\in\mathbb{C}^{N_\mathrm{I}\times \bar{G}_\mathrm{II}}$ from $\tilde{\Am}_{\It\It}\in\mathbb{C}^{N_\mathrm{I}^2\times G_\mathrm{I}^2}$. Empirically, $\tilde{\Am}_{\It\It}^\mathrm{sub}$ can be chosen as the first $\bar{G}_\mathrm{II}$ columns of $\tilde{\Am}_{\It\It}$. Next, we extract $G_\DicRed$ columns extracted from $\tilde{\Am}_{\It\It}^\mathrm{sub}$ to construct the new dictionary. The selected $G_\DicRed$ columns have the highest correlation with $\acute{\Am}_\It$ (from step 2) and correspond to the most probable spatial angles where the channel is expected to be. Specifically, we define the \emph{\ac{dr} factor} $\rho_\DicRed\in(0,1]$ as the ratio of the number of selected atoms (columns) in the reduced dictionary to the $\bar{G}_{\It\It}$ candidate atoms from $\tilde{\Am}_{\It\It}^\mathrm{sub}$, and accordingly define $G_\DicRed= \rho_\DicRed \bar{G}_{\It\It}= \rho_\DicRed \bar{G}_\mathrm{cv}/{G_\Rt G_\Tt}\ll\! G^2_\It$. The \emph{\ac{dr} factor} controls how many atoms are kept after the DR algorithm.} In step 4, by using~$\hat{\Am}_\It^\DicRed$ provided by step~3, we compute the~\ac{dr} sensing matrix to be later used in step~5 where we apply the~\ac{omp} algorithm. The output~$\hat{\sigmav}_{\mathrm{mc}}$ is an estimate of~$\bar{\sigmav}_{\mathrm{mc}}$ defined in~\eqref{eq:Rxsig_CS}. Finally, the exact equivalent cascaded channel~$\Gm_\mathrm{mc}$ can be recovered as~$\mathrm{vec}(\hat{\Gm}_\mathrm{mc})=\bar{\Dm}_\mathrm{DR}\hat{\sigmav}_\mathrm{mc}$.

 {\subsubsection{Complexity Analysis}

\begin{table*} [htb]
\vspace{-3mm}
\footnotesize
\centering
 {\caption{Overall Computational complexity of proposed and baseline algorithms}
\begin{tabular} {c c c}
 \hthickline
 Algorithm & Online operation & Offline operation\\ [0.5ex] 
 \hthickline
 \ac{mc}-unaware OMP& $\mathcal{C}_{\convidx}^{\mathrm{OMP}} = \mathcal{O}\big(\hat{L} M_\It M_\Rt \bar{G}_\convidx\big)$ & $\mathcal{C}_{\Xim_\convidx} = \mathcal{O}\big(M_\Rt M_\It N_\Rt N_\Tt N_\It \bar{G}_\convidx\big)$\\ 
 \hline
 \ac{mc}-aware OMP & $\mathcal{C}_{\mcidx}^{\mathrm{OMP}} = \mathcal{O}\big(\hat{L} M_\It M_\Rt \bar{G}_\mcidx\big)$ & $\mathcal{C}_{\Xim_\mcidx} = \mathcal{O}\big(M_\Rt M_\It N_\Rt N_\Tt N^2_\It \bar{G}_\mcidx\big)$ \\
 \hline
 Proposed & $\mathcal{C}_{\mathrm{prop}}\!=\!\mathcal{O}\Big(\!\hat{L} M_\It M_\Rt \bar{G}_\convidx \big(\! 1\!+\!\rho_\DicRed\!\big)\!\Big)$ & $ \mathcal{C}_{\Xim_\convidx,\Xim_\DicRed,\DicRed} \!=\!\mathcal{O}\Big(\!M_\Rt M_\It N_\Rt N_\Tt N_\It \bar{G}_\convidx \big(\!1\!+\!N_\It \rho_\DicRed\!\big) \!+\! \bar{G}_{\It\It} \hat{L} \big(\!\bar{G}_{\It\It} \!+\! N_\It \!\big)\!+\!N^2_\It G_\DicRed \!\Big)$ \\
 \hthickline
\end{tabular}
\vspace{-3mm}
\label{table:complexity_com}}
\end{table*}

Since some steps can be performed before running the channel estimation algorithms, we will distinguish between online and offline operations. The computational complexity of online operation for the baseline and derived exact algorithms, namely \ac{mc}-unaware and \ac{mc}-aware, is mainly the complexity of applying the OMP, which depends on the sensing matrix $\Xim$ size~\cite{Tarboush2024Cross}. The computational complexity of step 2 in Algorithm~\ref{algo:prop_ch_est}, indexing and data-copying operations, is $\mathcal{O}\big(\hat{L} N_\It \big)$, and it is negligible compared to other steps. The online computational complexity of the proposed Algorithm~\ref{algo:prop_ch_est} consists of two main parts: 1) the coarse estimation in step 1 applies \ac{mc}-unaware OMP and has a complexity of $\mathcal{O}\big(\hat{L} M_\It M_\Rt \bar{G}_\convidx\big)$, and 2) step 5 incurs computational cost of $\mathcal{O}\big(\hat{L} M_\It M_\Rt \bar{G}_\convidx \rho_\DicRed\big)$. Moreover, the offline operation is mainly dominated by computing the \ac{dr} algorithm and generating the corresponding sensing matrix $\Xim$. The baseline \ac{mc}-unaware and the derived \ac{mc}-aware methods entail offline complexities of $\mathcal{C}_{\Xim_\convidx} = \mathcal{O}\big(M_\Rt M_\It N_\Rt N_\Tt N_\It \bar{G}_\convidx\big)$ and $\mathcal{C}_{\Xim_\mcidx} = \mathcal{O}\big(M_\Rt M_\It N_\Rt N_\Tt N^2_\It \bar{G}_\mcidx\big)$, respectively. Note that, for a fair comparison with these methods, we have considered steps 3 and 4 to be offline operations. The complexity of step 4 is $\mathcal{O}\big(M_\Rt M_\It N_\Rt N_\Tt N^2_\It \bar{G}_\convidx \rho_\DicRed\big)$ while the total computational complexity of the DR algorithm in step 3 is $\mathcal{O}\big(\bar{G}_{\It\It} N_\It \hat{L} + \bar{G}_{\It\It}^2 \hat{L} + N^2_\It G_\DicRed \big)$, mainly due to steps 6, 7, and 8, while the computational complexity of operations like extracting diagonal elements, computing the square root, and selecting the largest $G_\DicRed$ values, in step 7, are negligible. The computational complexity for different channel estimation methods is summarized in Table \ref{table:complexity_com}. Recalling that $\bar{G}_\convidx\!\ll\! \bar{G}_\mcidx, \rho_\DicRed\in(0,1]$, and $G_\DicRed \ll G^2_\It$, our proposal has lower offline computational complexity than the \ac{mc}-aware solution. For moderate \ac{ris} size and large $\rho_\DicRed$, the proposed solution has a comparable online complexity with the derived exact \ac{mc}-aware model. However, for large \ac{ris} sizes, the proposed solution enjoys reduced online complexity.}
 {\begin{remark}
Note that our proposed channel estimation approach estimates the equivalent cascaded channel~$\Gm_\mathrm{mc}$ defined in~\eqref{eq:vec_Gmc}, which differs from the commonly used individual subchannel matrices $\Hm_\mathrm{IU}$ and $\Hm_\mathrm{BI}$. This formulation enables \ac{cs}-based channel estimation that explicitly accounts for the \ac{ris} \ac{mc} effects. However, since most existing beamforming methods rely on individual channel knowledge $\Hm_\mathrm{IU}$ and $\Hm_\mathrm{BI}$, dedicated beamforming methods tailored to the equivalent channel~$\Gm_\mathrm{mc}$ are required. In the following Section~\ref{sec:JBF}, we propose a novel beamforming algorithm based on~$\Gm_\mathrm{mc}$, demonstrating that this equivalent cascaded channel contains sufficient information to enable effective \ac{mc}-aware beamforming. Together, the proposed channel estimation and beamforming methods constitute a unified framework for configuring active \ac{ris}-assisted \ac{mimo} systems under mutual coupling.
\end{remark}
}

\section{MC-Aware Downlink Beamforming}\label{sec:JBF}

This section analyzes and addresses the downlink beamforming problem given an exact equivalent cascaded channel estimate obtained via the uplink channel estimator described in Section~\ref{sec:CE_MC}. For the sake of clarification, we further denote the estimated unlink channel as ~${{\Gm}}_{\mathrm{mc}}^\mathrm{UL}\in\mathbb{C}^{N_\mathrm{U}N_\mathrm{B}\times N_\mathrm{I}^2}$.

\subsection{Problem Formulation}
 To maintain consistency with the notation in previous sections, we continue to denote the downlink precoder at the \ac{bs} as $\wv \in \mathbb{C}^{N_\mathrm{B}}$ and the combiner at the \ac{ue} as $\fv \in \mathbb{C}^{N_\mathrm{U}}$, both being analog. Hence, the downlink signal model is written as
\begin{equation}
    s_\mathrm{U} = \fv^\HH\Hm_{\mathrm{UI}}(\Gammam^{-1}-\Sm)^{-1}\Hm_\mathrm{IB}\wv s_\mathrm{B} + \omega,
\end{equation}
where $\Hm_{\mathrm{UI}} = \Hm_{\mathrm{IU}}^\TT$, $\Hm_{\mathrm{IB}} = \Hm_{\mathrm{BI}}^\TT$, $s_\mathrm{B}$ and $s_\mathrm{U}$ are the transmitted and received symbols, respectively, and $\omega$ denotes the noise. Specifically, $\omega = \fv^\HH\big(\Hm_{\mathrm{UI}}(\Gammam^{-1} - \Sm)^{-1}\omegav^{\mathrm{I}} + \omegav^{\mathrm{U}}\big)$,  where $\omegav^{\mathrm{I}}\!\sim\!\mathcal{CN}(\mathbf{0},\sigma_\mathrm{I}^2\mathbf{I}_{N_{\It}})$ and $\omegav^{\mathrm{U}}\!\sim\!\mathcal{CN}(\mathbf{0},\sigma_\mathrm{U}^2\mathbf{I}_{N_{\mathrm{U}}})$ represent the thermal noise at the active \ac{ris} and the \ac{ue}, respectively. 

Following the analog beamforming setup, we constrain that the entries in $\wv$ as $|w_{j}|^2=\frac{P_\mathrm{B}}{{N_\Rt}}$, $\forall j=1,2,\dots,N_\Rt$, where $P_\mathrm{B}$ is the transmit power at the \ac{bs} with $\mathbb{E}[s_\mathrm{B}s_\mathrm{B}^*]=1$. As a result, the receive \ac{snr} can be calculated as~\cite{Zhang2022Active} 
\begin{equation}\label{eq:SNR}
\mathrm{SNR} = \frac{|\fv^\HH\Hm_\mathrm{UI}(\Gammam^{-1} -\Sm)^{-1}\Hm_\mathrm{IB}\wv|^2}{\|\fv\|_2^2\sigma_\mathrm{U}^2 + \|\fv^\HH\Hm_\mathrm{UI}({\Gammam}^{-1}-\Sm)^{-1}\|_2^2\sigma_\It^2}.
\end{equation}
Note that the denominator here contains two terms of noise power: (i) the power of noise generated at the \ac{ue} itself, computed as $\|\fv\|_2^2\sigma_\mathrm{U}^2$, and (ii) the power of noise generated at the active \ac{ris}, which is amplified, propagated, and received by the \ac{ue}, computed as $\|\fv^\HH\Hm_\mathrm{UI}({\Gammam}^{-1} - \Sm)^{-1}\|_2^2\sigma_\It^2$. 
\begin{remark}\label{rmk:noisePower}
    In far-field scenarios, the additional noise term $\|\fv^\HH\Hm_\mathrm{UI}({\Gammam}^{-1} - \Sm)^{-1}\|_2^2\sigma_\It^2$ due to active RIS is usually negligible compared to the noise level at the receiver \ac{ue} $\|\fv\|_2^2\sigma_\mathrm{U}^2$.
\end{remark}

\begin{figure}[t]
    \centering
    % This file was created by matlab2tikz.
%
%The latest updates can be retrieved from
%  http://www.mathworks.com/matlabcentral/fileexchange/22022-matlab2tikz-matlab2tikz
%where you can also make suggestions and rate matlab2tikz.

\definecolor{mycolor1}{rgb}{0.3686, 0.5098, 0.7098}  % 柔和蓝灰
\definecolor{mycolor2}{rgb}{0.6824, 0.4353, 0.7922}  % 柔和紫
\definecolor{mycolor3}{rgb}{0.5020, 0.6824, 0.4509}  % 橄榄绿/鼠尾草绿
\begin{tikzpicture}

\begin{axis}[%
width=2.8in,
height=1.9in,
at={(0in,0in)},
scale only axis,
xmode=log,
xmin=1.9,
xmax=67.2,
xtick={ 2,  4,  8, 16, 32, 64},
xticklabels={2,  4,  8, 16, 32, 64},
xminorticks=true,
xlabel style={font=\color{white!15!black},font=\footnotesize},
xticklabel style = {font=\color{white!15!black},font=\footnotesize},
xlabel={Amplification factor $a$},
ymin=-175,
ymax=-80,
ylabel style={font=\color{white!15!black},font=\footnotesize},
yticklabel style = {font=\color{white!15!black},font=\footnotesize},
ylabel={Noise power (dBm)},
axis background/.style={fill=white}
]
\addplot [color=red, line width=1.0pt, forget plot]
  table[row sep=crcr]{%
1	-88.9794000867204\\
4	-88.9794000867204\\
8	-88.9794000867204\\
16	-88.9794000867204\\
32	-88.9794000867204\\
70	-88.9794000867204\\
};
\addplot [color=mycolor1, line width=1.0pt, forget plot]
 plot [error bars/.cd, y dir=both, y explicit, error bar style={line width=1.0pt}, error mark options={line width=1.0pt, mark size=3.0pt, rotate=90}]
 table[row sep=crcr, y error plus index=2, y error minus index=3]{%
2	-126.923028728838	3.34946971824145	5.44152126627777\\
4	-120.901973981868	3.35093392623781	5.43576637530487\\
8	-114.879570420223	3.35390226434605	5.42450325932569\\
16	-108.851781491165	3.35999272394989	5.40288096600989\\
32	-102.802376437634	3.3727575604093	5.36263847274741\\
64	-96.6647557052045	3.40068622838531	5.28979011414512\\
};
\addplot [color=mycolor2, line width=1.0pt, forget plot]
 plot [error bars/.cd, y dir=both, y explicit, error bar style={line width=1.0pt}, error mark options={line width=1.0pt, mark size=3.0pt, rotate=90}]
 table[row sep=crcr, y error plus index=2, y error minus index=3]{%
2	-139.566288546725	3.34946971824144	5.44152126627779\\
4	-133.545233799755	3.35093392623781	5.43576637530484\\
8	-127.52283023811	3.35390226434605	5.42450325932568\\
16	-121.495041309052	3.35999272394992	5.40288096600987\\
32	-115.445636255521	3.37275756040931	5.36263847274741\\
64	-109.308015523092	3.4006862283853	5.28979011414513\\
};
\addplot [color=mycolor3, line width=1.0pt, forget plot]
 plot [error bars/.cd, y dir=both, y explicit, error bar style={line width=1.0pt}, error mark options={line width=1.0pt, mark size=3.0pt, rotate=90}]
 table[row sep=crcr, y error plus index=2, y error minus index=3]{%
2	-152.209548364612	3.34946971824144	5.44152126627779\\
4	-146.188493617642	3.35093392623781	5.43576637530484\\
8	-140.166090055997	3.35390226434603	5.42450325932569\\
16	-134.138301126939	3.35999272394994	5.40288096600986\\
32	-128.088896073408	3.3727575604093	5.36263847274742\\
64	-121.951275340979	3.4006862283853	5.28979011414512\\
};
\addplot [color=black, line width=1.0pt, forget plot]
 plot [error bars/.cd, y dir=both, y explicit, error bar style={line width=1.0pt}, error mark options={line width=1.0pt, mark size=3.0pt, rotate=90}]
 table[row sep=crcr, y error plus index=2, y error minus index=3]{%
2	-164.852808182499	3.34946971824144	5.44152126627779\\
4	-158.83175343553	3.35093392623784	5.43576637530478\\
8	-152.809349873884	3.353902264346	5.42450325932569\\
16	-146.781560944827	3.35999272394994	5.40288096600986\\
32	-140.732155891295	3.37275756040927	5.36263847274739\\
64	-134.594535158866	3.4006862283853	   5.28979011414512\\
};
\end{axis}

\node[right, align=left] at (0,4.1) {\scriptsize{\red{$\|\fv\|_2^2\sigma_\mathrm{U}^2$}}};

\draw (2.2,2.0) ellipse (0.2 and 1.28 );
\draw (1.5,3.4)[-Stealth] -- (2,3.05);

\node[right, align=left] at (0,3.55) {\scriptsize{{$\big\|\fv^\HH\Hm_\mathrm{IU}^\TT\big(\diagopr{\gammav}^{-1}-\Sm\big)^{-1}\big\|_2^2\sigma_\It^2$}}};

\draw (5,3.8)[-Stealth] -- (5,0.8);

\node[left, align=right] at (7.1,0.6) {\scriptsize{RIS-UE distance: $\{1,4,16,64\}\times \text{RD}$}};

\end{tikzpicture}%
    \vspace{-3em}
    \caption{A numerical evaluation of the noise terms received at the \ac{ue} in the far-field. The configuration includes a $10\times 10$ active \ac{ris} with a uniform amplification factor~$a$ across all unit cells and a $2\times 2$ \ac{ue}. Both the \ac{ris} and \ac{ue} arrays are half-wavelength spaced. RD is the Rayleigh distance calculated as $\text{RD}=2(D_\mathrm{I}+D_\mathrm{U})^2/\lambda$, where $D_\mathrm{I}$ and $D_\mathrm{U}$ represent the array apertures of the \ac{ris} and \ac{ue}, respectively, and $\lambda$ denotes the signal wavelength. The noise power is evaluated at a \unit[30]{GHz} mmWave frequency, where $\text{RD}\approx\unit[1]{m}$. In addition, we set $\sigma_\mathrm{U}^2=\sigma_\mathrm{I}^2=\unit[-95]{dBm}$ according to\cite[Fig.~25]{Wang2024Wideband}. }
    \label{fig:noisePower}
    \vspace{-1em}
\end{figure}

Figure~\ref{fig:noisePower} provides a numerical validation of Remark~\ref{rmk:noisePower}. Even with an extremely large RIS amplification factor, e.g., $a=64$, the additional noise power remains several orders of magnitude weaker than the noise power at the receiver. It is worth noting that a typical phase-reconfigurable reflection amplifier can achieve a maximum performance of approximately $a=10$ only~\cite[Fig.~9]{Rao2023An}. Consequently, the additional noise introduced by the active RIS can be safely neglected in far-field scenarios, allowing the receive \ac{snr} to be simplified as
\begin{align}
\mathrm{SNR} &\approx \frac{|\fv^\HH\Hm_\mathrm{UI}(\Gammam^{-1} -\Sm)^{-1}\Hm_\mathrm{IB}\wv|^2}{\|\fv\|_2^2\sigma_\mathrm{U}^2 },\notag\\
&= {\Big|\frac{\fv^\HH}{\|\fv\|_2}\Hm_\mathrm{UI}(\Gammam^{-1} -\Sm)^{-1}\Hm_\mathrm{IB}\wv\Big|^2}/{\sigma_\mathrm{U}^2 }.\label{eq:SNR2}
\end{align}
Hence, the joint beamforming problem can be formulated as
\begin{equation}\label{eq:JOPT_ori}
    \begin{aligned}
    \max_{\fv,\wv,\bm{\gamma}}&\quad {\big|\fv^\HH\overbrace{\Hm_\mathrm{UI}\big(\diagopr{\bm{\gamma}}^{-1} -\Sm\big)^{-1}\Hm_\mathrm{IB}}^{\Hm_\mathrm{UIB}}\wv\big|^2},\\
    \text{s.t.}&\quad |f_i|^2 = \frac{1}{N_\mathrm{U}},\ |w_j|^2 = \frac{P_\Rt}{N_\mathrm{B}},\ \forall i,j,\  \|\bm{\gamma}\|_2^2 \leq A. 
\end{aligned}
\end{equation}
Here, we use a value $A>N_\mathrm{I}$ to constrain the amplification vector $\bm{\gamma}$ of the active \ac{ris}, which corresponds to the maximum radiated power of the first-order radiation when a unit-power far-field signal impinges on the active RIS. For notational convenience, we define $\Hm_\mathrm{UI}\big(\diagopr{\bm{\gamma}}^{-1} -\Sm\big)^{-1}\Hm_\mathrm{IB} \triangleq \Hm_\mathrm{UIB}$. The problem is that $\Hm_\mathrm{UIB}$ is unavailable. Instead, we only have the uplink exact equivalent cascaded channel $\Gm_\mathrm{mc}^\mathrm{UL}$. The following steps establish connections between them.

 According to~\eqref{eq:Gmc}, we have $\Gm_\mathrm{mc}^\mathrm{UL}=\big(\Am_\mathrm{U}(\varphiv)\Sigmam_\mathrm{IU}\Am_\mathrm{I}^\TT(\phiv)\big) \otimes \big(\Am_\mathrm{B}(\varthetav)\Sigmam_\mathrm{BI}\Am_\mathrm{I}^\TT(\thetav)\big)$. Now, we define the downlink exact equivalent cascaded channel as $\Gm_\mathrm{mc}^\mathrm{DL} \triangleq \big(\Am_\mathrm{B}(\varthetav)\Sigmam_\mathrm{BI}\Am_\mathrm{I}^\TT(\thetav)\big) \otimes \big(\Am_\mathrm{U}(\varphiv)\Sigmam_\mathrm{IU}\Am_\mathrm{I}^\TT(\phiv)\big)$, which can be obtained from~$\Gm_\mathrm{mc}^\mathrm{UL}$ by rearranging the layout of its entries. Then, we have 
\begin{equation}\label{eq:H_UIB}
\mathrm{vec}\big(\Hm_\mathrm{UIB}\big) = \Gm_\mathrm{mc}^\mathrm{DL}\mathrm{vec}\big((\diagopr{\bm{\gamma}}^{-1} -\Sm)^{-1}\big).
\end{equation}
A proof is given in Appendix \ref{sec:UL_DL_reciprocity}. This relationship means that we can obtain the cascaded channel~$\Hm_\mathrm{UIB}$ from the previously estimated uplink equivalent channel~$\Gm_\mathrm{mc}^\mathrm{UL}$, and thus the joint beamforming problem~\eqref{eq:JOPT_ori} can be solved without the knowledge about the individual subchannels~$\Hm_\mathrm{UI}$ and $\Hm_\mathrm{IB}$.

In the following, we solve~\eqref{eq:JOPT_ori} using an alternating optimization approach. At each iteration, we first optimize $\{\fv, \wv\}$ while keeping $\bm{\gamma}$ fixed, and then optimize $\bm{\gamma}$ while keeping $\{\fv, \wv\}$ fixed. These two steps are repeated iteratively until the objective function converges.

\subsection{\ac{bs} Precoding and \ac{ue} Combining: A Closed-Form Solution}\label{sec:PreCom}
Given a feasible RIS configuration~$\bm{\gamma}$, we can obtain~$\Hm_\mathrm{UIB}$ according to~\eqref{eq:H_UIB}. Then, problem~\eqref{eq:JOPT_ori} is reduced to
\begin{equation}\label{eq:OPT1}
    \max_{\fv,\wv}\ |\fv^\HH\Hm_\mathrm{UIB}\wv|^2,\ 
     \text{s.t.}\ |f_i|^2 = \frac{1}{N_\mathrm{U}},\ |w_j|^2 = \frac{P_\Rt}{N_\mathrm{B}},\ \forall i,j. 
\end{equation}

Considering the analog beamforming constraint, this subproblem can be addressed by alternately projecting \(\fv^{(t+1)} = \frac{1}{\sqrt{N_\mathrm{U}}}e^{j\cdot\mathrm{arg}(\Hm_\mathrm{UIB}\wv^{(t)})}\) and \(\wv^{(t+1)} = \sqrt{\frac{P_\mathrm{B}}{N_\mathrm{B}}}e^{j\cdot\mathrm{arg}(\Hm_\mathrm{UIB}^\HH\fv^{(t+1)})}\), where \(t\) denotes the iteration index. Furthermore, when the digital beamforming is available, the constraints in~\eqref{eq:OPT1} become $\|\fv\|_2^2=1$ and $\|\wv\|_2^2=P_\mathrm{B}$. In this case, this subproblem can be solved by simply applying \ac{svd} to $\Hm_\mathrm{UIB}$. Let $\uv_\mathrm{max}$ and $\vv_\mathrm{max}$ denote the left- and right-singular vectors corresponding to the maximum singular value of $\Hm_\mathrm{UIB}$, respectively. The solution of is given by $\fv_\star=\uv_\mathrm{max}$ and $\wv_\star=\sqrt{P_\mathrm{B}}\vv_\mathrm{max}$.

\subsection{RIS Beamforming: Approximation and Reformulation}\label{sec:RISbf}
Based on~\eqref{eq:H_UIB}, we can rewrite the objective function of~\eqref{eq:JOPT_ori} as $\big|(\wv^\TT\otimes\fv^\HH)\Gm_\mathrm{mc}^\mathrm{DL}\mathrm{vec}\big((\diagopr{\bm{\gamma}}^{-1} -\Sm)^{-1}\big)\big|^2$.
Given a pair of feasible~$\{\fv,\wv\}$ and defining~$\tv \triangleq (\Gm_\mathrm{mc}^\mathrm{DL})^\TT(\wv\otimes\fv^*)\in\mathbb{C}^{N_\mathrm{I}^2\times 1}$, the joint problem~\eqref{eq:JOPT_ori} is reduced into
\begin{equation}\label{eq:OPT2}
    \begin{aligned}
    \max_{\bm{\gamma}}\ \big|\tv^\TT\mathrm{vec}\big((\diagopr{\bm{\gamma}}^{-1}-\Sm)^{-1}\big)\big|^2,\quad
    \text{s.t.}\ \|\bm{\gamma}\|_2^2 \leq A. 
\end{aligned}
\end{equation}
The matrix inversion in the objective function poses significant challenges for optimization. To circumvent this intractable process, we leverage the Neumann series expansion~\cite{Ortega2013Matrix}
\begin{equation}\label{eq:Neumann}
    (\Gammam^{-1}\!-\!\Sm)^{-1}\!=\!\Big(\sum_{n=0}^\infty \big(\Gammam\Sm\big)^n\Big)\Gammam
    \!=\!\Gammam + \Gammam\Sm\Gammam + \Gammam\Sm\Gammam\Sm\Gammam + \cdots.
\end{equation}
This series expansion has been widely utilized in the analysis of the mutual coupling in array signal processing~\cite{Wijekoon2024Phase,Qian2021Mutual,Abrardo2021MIMO}.
Note that this expansion holds only when all the eigenvalues of~$\Gammam\Sm$ are within the unit circle, i.e.,~$|\lambda_i(\Gammam\Sm)|<1$, $\forall\ i=1,\dots,N_\mathrm{I}$. Fortunately, such a condition can be satisfied and the expression is valid for a typical~\ac{mc} strength.  {Therefore, we can truncate the Neumann series in~\eqref{eq:Neumann} to obtain a linear approximation of $(\Gammam^{-1} - \Sm)^{-1}$. A typical evaluation of the truncation error for different orders of expansion can be found in, e.g.,~\cite[Fig.~10]{Zheng2024Mutual}.}
\begin{remark}
     {When only the first term in~\eqref{eq:Neumann} is retained, the model reduces to the conventional formulation without accounting for \ac{ris} mutual coupling. This can be verified by substituting $(\Gammam^{-1} - \Sm)^{-1} \approx \Gammam$ into~\eqref{eq:MC_ChM}, which then simplifies to the conventional model~\eqref{eq:convCHM}.}
\end{remark}

By retaining the first two terms of~\eqref{eq:Neumann}, $(\Gammam^{-1}-\Sm)^{-1} \approx \Gammam + \Gammam\Sm\Gammam$, and we can rewrite~\eqref{eq:OPT2} as
\begin{equation}\label{eq:OPT2_1}
    \begin{aligned}
    \min_{\bm{\gamma}}&\quad -|\tv^\TT\vect{\diagopr{\bm{\gamma}}+\diagopr{\bm{\gamma}}\Sm\diagopr{\bm{\gamma}}}|^2,\\
    \text{s.t.}&\quad \|\bm{\gamma}\|_2^2 \leq A.
\end{aligned}
\end{equation}
Notice the following relationship:
\begin{align}
    \tv^\TT\vect{\diagopr{\bm{\gamma}}} &\!=\! \big(\invdiag{\invvec{\tv}}\big)^\TT\bm{\gamma},\\
    \tv^\TT\vect{\diagopr{\bm{\gamma}}\Sm\diagopr{\bm{\gamma}}} &\!=\! \bm{\gamma}^\TT\big(\Sm\odot\invvec{\tv}\big)\bm{\gamma},
\end{align}
where $\invvec{\cdot}$ represents the operation of reshaping an $N^2 \times 1$ vector into an $N \times N$ square matrix while preserving the column-wise order, and $\invdiag{\cdot}$ refers to the extraction of the diagonal entries of an $N \times N$ square matrix to form a $N \times 1$ column vector. 
Defining~$\qv\triangleq\invdiag{\invvec{\tv^*}}$, $\Bm\triangleq\Sm\odot\invvec{\tv}$, we can rewrite~\eqref{eq:OPT2_1} as
\begin{equation}
\begin{aligned}\label{eq:OPT2_2}
    \min_{\bm{\gamma}}&\ -(\qv^\HH\bm{\gamma}+\bm{\gamma}^\TT\Bm\bm{\gamma})^\HH(\qv^\HH\bm{\gamma}+\bm{\gamma}^\TT\Bm\bm{\gamma}),\\
    \text{s.t.}&\quad \|\bm{\gamma}\|_2^2 \leq A. 
\end{aligned}
\end{equation}

Although simplified, problem~\eqref{eq:OPT2_2} is still challenging due to its non-convexity. In Section~\ref{sec:RISCO}, we will adopt the \ac{sca} framework to solve this problem.

\section{RIS Configuration Optimization via SCA}\label{sec:RISCO}

This section presents an iterative solution to the non-convex optimization problem~\eqref{eq:OPT2_2} utilizing the \ac{sca} framework. A brief review of the \ac{sca} is offered in Appendix~\ref{sec:SCA}.

\subsection{Successive Convex Approximation of~\eqref{eq:OPT2_2}}\label{sec:RISCOSCA}
Based on the \ac{sca} principle presented in Appendix~\ref{sec:SCA}, the following Proposition~\ref{prop:SCAsuro} proposes a surrogate function for the objective function in~\eqref{eq:OPT2_2}.
\begin{proposition}\label{prop:SCAsuro}
  For objective function $f(\bm{\gamma}) = -(\qv^\HH\bm{\gamma}+\bm{\gamma}^\TT\Bm\bm{\gamma})^\HH(\qv^\HH\bm{\gamma}+\bm{\gamma}^\TT\Bm\bm{\gamma})$ in~\eqref{eq:OPT2_2}, given an arbitrary feasible point~$\bm{\gamma}_i$, an associated surrogate function that satisfies Condition~\ref{cond:1} and Condition~\ref{cond:2} in Appendix~\ref{sec:SCA} is
\begin{multline}\label{eq:g}
    g(\bm{\gamma}|\bm{\gamma}_i) = -\bm{\gamma}^\HH\qv\qv^\HH\bm{\gamma} - \bm{\gamma}^\HH\qv\bm{\gamma}_i^\TT\Bm\bm{\gamma} - \bm{\gamma}^\HH\Bm^\HH\bm{\gamma}_i^*\qv^\HH\bm{\gamma} \\ - \bm{\gamma}^\HH\Bm^\HH\bm{\gamma}_i^*\bm{\gamma}_i^\TT\Bm\bm{\gamma} + K\|\bm{\gamma}-\bm{\gamma}_i\|_2^2 - \bm{\gamma}_i^\HH\qv\bm{\gamma}^\TT\Bm\bm{\gamma}_i \\ \! -\! \bm{\gamma}_i^\HH\Bm^\HH\bm{\gamma}^*\qv^\HH\bm{\gamma}_i\! -\! \bm{\gamma}_i^\HH\Bm^\HH\bm{\gamma}^*\bm{\gamma}_i^\TT\Bm\bm{\gamma}_i\! -\! \bm{\gamma}_i^\HH\Bm^\HH\bm{\gamma}_i^*\bm{\gamma}^\TT\Bm\bm{\gamma}_i.
\end{multline}
Here,~$K\in\mathbb{R}^+$ is a parameter that must be chosen to ensure
\begin{equation}\label{eq:KBi}
K \geq \lambda_\mathrm{max}(\Qm_i),
\end{equation}
where 
\begin{equation}\label{eq:Bi}
    \Qm_i = \qv\qv^\HH + \qv\bm{\gamma}_i^\TT\Bm + \Bm^\HH\bm{\gamma}_i^*\qv^\HH + \Bm^\HH\bm{\gamma}_i^*\bm{\gamma}_i^\TT\Bm.
\end{equation}
Because~$\Qm_i$ is a Hermitian matrix, all its eigenvalues are real, resulting in a real~$K$ in~\eqref{eq:KBi}.
\end{proposition}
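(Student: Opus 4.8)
The plan is to verify the two SCA requirements stated in Appendix~\ref{sec:SCA} directly: (i)~the \emph{equal-value/equal-gradient} condition $g(\bm{\gamma}_i|\bm{\gamma}_i)=f(\bm{\gamma}_i)$ and $\nabla g(\bm{\gamma}_i|\bm{\gamma}_i)=\nabla f(\bm{\gamma}_i)$ (Condition~\ref{cond:2}), and (ii)~the \emph{convexity} of $\bm{\gamma}\mapsto g(\bm{\gamma}|\bm{\gamma}_i)$ (Condition~\ref{cond:1}). The overall idea is standard: expand $f$ around $\bm{\gamma}_i$, isolate the ``bad'' (concave or higher-order) part, and replace the quartic term $\bm{\gamma}^\HH\Bm^\HH\bm{\gamma}^*\,\bm{\gamma}^\TT\Bm\bm{\gamma}$ together with the indefinite quadratic pieces by their first-order Taylor expansions at $\bm{\gamma}_i$, then add the convexifying regularizer $K\|\bm{\gamma}-\bm{\gamma}_i\|_2^2$.

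First I would write $f(\bm{\gamma})=-|h(\bm{\gamma})|^2$ with $h(\bm{\gamma})\triangleq \qv^\HH\bm{\gamma}+\bm{\gamma}^\TT\Bm\bm{\gamma}$, and expand $f$ into its four monomials: the quadratic term $-\bm{\gamma}^\HH\qv\qv^\HH\bm{\gamma}$, two cubic cross terms $-\bm{\gamma}^\HH\qv\,\bm{\gamma}^\TT\Bm\bm{\gamma}$ and $-\bm{\gamma}^\HH\Bm^\HH\bm{\gamma}^*\,\qv^\HH\bm{\gamma}$, and the quartic term $-\bm{\gamma}^\HH\Bm^\HH\bm{\gamma}^*\,\bm{\gamma}^\TT\Bm\bm{\gamma}$. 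In $g$, the quadratic term is kept intact, while in each higher-order term exactly one factor of $\bm{\gamma}$ is ``frozen'' at $\bm{\gamma}_i$ in a symmetric way (this is why the $\bm{\gamma}_i$-dependent tail terms in lines~2--3 of~\eqref{eq:g} appear: they restore the Taylor value and gradient). Concretely, I would check that for each monomial $m(\bm{\gamma})$ replaced by its partial freeze $\tilde m(\bm{\gamma}|\bm{\gamma}_i)$, one has $\tilde m(\bm{\gamma}_i|\bm{\gamma}_i)=m(\bm{\gamma}_i)$ and the Wirtinger derivative $\partial_{\bm{\gamma}^*}\tilde m(\bm{\gamma}|\bm{\gamma}_i)\big|_{\bm{\gamma}=\bm{\gamma}_i}=\partial_{\bm{\gamma}^*}m(\bm{\gamma})\big|_{\bm{\gamma}=\bm{\gamma}_i}$; summing over monomials and noting that the regularizer $K\|\bm{\gamma}-\bm{\gamma}_i\|_2^2$ contributes zero value and zero gradient at $\bm{\gamma}=\bm{\gamma}_i$ gives Condition~\ref{cond:2}. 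This part is a somewhat tedious but routine Wirtinger-calculus bookkeeping exercise.

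For convexity, I would compute the Hessian of $g(\cdot|\bm{\gamma}_i)$ with respect to $\bm{\gamma}$. All terms in~\eqref{eq:g} except $K\|\bm{\gamma}-\bm{\gamma}_i\|_2^2$ are \emph{at most quadratic} in $\bm{\gamma}$ (the freezing removed all cubic/quartic dependence), so $g$ is a quadratic form plus a real constant; its ``complex Hessian'' (the matrix in the $\bm{\gamma}^\HH(\cdot)\bm{\gamma}$ part) is precisely $-\Qm_i + K\Id$, where $\Qm_i$ is the Hermitian matrix in~\eqref{eq:Bi} collecting the coefficients of the quadratic-in-$\bm{\gamma}$ monomials: $\qv\qv^\HH$ from the quadratic term, $\qv\bm{\gamma}_i^\TT\Bm$ and $\Bm^\HH\bm{\gamma}_i^*\qv^\HH$ from the two cubic terms, and $\Bm^\HH\bm{\gamma}_i^*\bm{\gamma}_i^\TT\Bm$ from the quartic term. (I should double-check that the anti-holomorphic $\bm{\gamma}^\TT(\cdot)\bm{\gamma}$ blocks either cancel or do not affect real-convexity — in fact the bilinear pieces $\bm{\gamma}^\HH\qv\,\bm{\gamma}_i^\TT\Bm\bm{\gamma}$ are holomorphic-times-antiholomorphic, i.e.\ genuine $\bm{\gamma}^\HH M\bm{\gamma}$ blocks, so they land in $\Qm_i$; the only term that could be purely $\bm{\gamma}^\TT C\bm{\gamma}$ is none of them, since the frozen-out cubic/quartic terms always leave one conjugated and one unconjugated live factor.) Hermiticity of $\Qm_i$ follows by inspection (it is a sum of a rank-one PSD matrix and two mutually adjoint rank-one matrices and another PSD rank-one matrix), so its eigenvalues are real, the choice $K\geq\lambda_{\max}(\Qm_i)$ in~\eqref{eq:KBi} makes $-\Qm_i+K\Id\succeq 0$, and $g(\cdot|\bm{\gamma}_i)$ is convex.

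The main obstacle I anticipate is not conceptual but organizational: keeping the complex-gradient bookkeeping consistent. Specifically, one must be careful (a)~to use Wirtinger derivatives consistently and treat $\bm{\gamma}$ and $\bm{\gamma}^*$ as independent, (b)~to verify that the particular \emph{symmetrization} chosen in~\eqref{eq:g} (rather than some other way of freezing one factor) is exactly the one that makes \emph{both} the value and the gradient match — an asymmetric freeze would match the value but not the gradient — and (c)~to confirm that after freezing, no residual term is of the form $\bm{\gamma}^\TT C\bm{\gamma}$ with $C\neq 0$, which would spoil the identification of the Hessian with $-\Qm_i+K\Id$ and could even break convexity regardless of $K$. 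Once these three points are dispatched, the proof reduces to assembling the per-monomial checks and invoking $K\geq\lambda_{\max}(\Qm_i)$.
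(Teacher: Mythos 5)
Your route is essentially the paper's: the paper also verifies Condition~\ref{cond:1} by noting that $g(\cdot|\bm{\gamma}_i)$ is (at most) quadratic, computing the Wirtinger second derivatives (the $\partial^2/\partial\bm{\gamma}\partial\bm{\gamma}$ and $\partial^2/\partial\bm{\gamma}^*\partial\bm{\gamma}^*$ blocks vanish, and $\partial^2 g/\partial\bm{\gamma}\,\partial\bm{\gamma}^* = K\mathbf{I}-\Qm_i$), so that $K\geq\lambda_\mathrm{max}(\Qm_i)$ with $\Qm_i$ Hermitian gives a PSD full complex Hessian; and it verifies Condition~\ref{cond:2} by directly computing $\partial f/\partial\bm{\gamma}$, $\partial f/\partial\bm{\gamma}^*$, $\partial g/\partial\bm{\gamma}$, $\partial g/\partial\bm{\gamma}^*$ and checking equality at $\bm{\gamma}=\bm{\gamma}_i$. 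Your per-monomial bookkeeping is just a reorganization of that same calculation, and your identification of $\Qm_i$ and its Hermiticity is correct.

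One caveat: your ``equal-value'' sub-claim is false and should be dropped. With $s\triangleq\qv^\HH\bm{\gamma}_i$ and $t\triangleq\bm{\gamma}_i^\TT\Bm\bm{\gamma}_i$, a direct evaluation gives $g(\bm{\gamma}_i|\bm{\gamma}_i)=f(\bm{\gamma}_i)-2\,\mathrm{Re}(s^*t)-2|t|^2\neq f(\bm{\gamma}_i)$ in general, because the tail terms in~\eqref{eq:g} duplicate the cubic values and add the quartic value twice (the quartic monomial is kept with \emph{two} factors frozen, not one, contrary to your description). This does not damage the proof, since Condition~\ref{cond:2} only demands gradient equality at $\bm{\gamma}_i$ (the SCA framework in Appendix~\ref{sec:SCA} is not a majorization--minimization scheme and never uses $g(\bm{\gamma}_i|\bm{\gamma}_i)=f(\bm{\gamma}_i)$ or $g\geq f$), and the gradient match does go through exactly as you outline; but if you attempt to verify value matching per monomial you will find it fails, so restrict the per-monomial check to the Wirtinger derivatives only. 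Your remaining concerns (no residual $\bm{\gamma}^\TT\Cm\bm{\gamma}$ blocks; $\Qm_i$ a sum of two PSD rank-one/Gram terms plus a mutually adjoint pair) resolve exactly as you anticipate.
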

\begin{proof}
    See Appendix~\ref{app:SCAsuro}.
\end{proof}

\subsection{Solving the Surrogate Optimization Problem}\label{sec:solSurOpt}

For the second step of \ac{sca}, we now consider optimizing the surrogate function~$g(\bm{\gamma}|\bm{\gamma}_i)$. Based on~\eqref{eq:g}, this surrogate optimization problem can be formulated as
\begin{equation}\label{eq:suropt}
    \min_{\bm{\gamma}}\ \bm{\gamma}^\HH(K\mathbf{I}-\Qm_i)\bm{\gamma}-\bv_i^\TT\bm{\gamma}-\bv_i^\HH\bm{\gamma}^*,\quad \text{s.t.}\ \|\bm{\gamma}\|_2^2 \leq A,  
\end{equation}
where $\bv_i = K\bm{\gamma}_i^* + (\bm{\gamma}_i^\HH\qv + \bm{\gamma}_i^\HH\Bm^\HH\bm{\gamma}_i^*)\Bm\bm{\gamma}_i$. Note that we omit the constant term in the objective function of~\eqref{eq:suropt}.

Since the objective function~$g(\bm{\gamma}|\bm{\gamma}_i)$ and the inequality constraint~$\|\bm{\gamma}\|_2^2-A$ in~\eqref{eq:suropt} are convex, and there exists strictly feasible points satisfying~$\|\bm{\gamma}\|_2^2 < A$ (i.e., the Slater's condition is satisfied~\cite[Sec. 5.2.3]{Boyd2004Convex}), the strong duality holds for~\eqref{eq:suropt}. Therefore, the optimal solution of~\eqref{eq:suropt} can be found based on the \ac{kkt} conditions.

To begin  with, the Lagrangian of~\eqref{eq:suropt} is given by
\begin{equation}
    L(\bm{\gamma},\nu) = \bm{\gamma}^\HH(K\mathbf{I}-\Qm_i)\bm{\gamma}-\bv_i^\TT\bm{\gamma}-\bv_i^\HH\bm{\gamma}^* + \nu(\bm{\gamma}^\HH\bm{\gamma}-A),
\end{equation}
where~$\nu\in\mathbb{R}_{\geq 0}$ is the Lagrange multiplier (or dual variable) associated with the inequality constraint. 
Then, the \ac{kkt} conditions indicate that the optimal primal and dual points, denoted as~$(\bm{\gamma}_\mathrm{o},\nu_\mathrm{o})$, must satisfy the following conditions:
\begin{subequations}
    \begin{align}
\nabla_{\scriptsize{\bm{\gamma}}}L(\bm{\gamma}_\mathrm{o},\nu_\mathrm{o} )&=\mathbf{0},\label{eq:grad0}\\
    \nu_\mathrm{o}(\bm{\gamma}_\mathrm{o}^\HH\bm{\gamma}_\mathrm{o}-A) &= 0,\label{eq:slack}\\
    \nu_\mathrm{o} &\geq 0,\label{eq:nu0}\\
    \bm{\gamma}_\mathrm{o}^\HH\bm{\gamma}_\mathrm{o}-A &\leq 0.\label{eq:fi}
\end{align}
\end{subequations}

Based on~\eqref{eq:grad0}, we have
\begin{equation}
\frac{\partial L(\bm{\gamma}_\mathrm{o},\nu_\mathrm{o} )}{\partial \bm{\gamma}_\mathrm{o}^*}\! =\! \Big( \frac{\partial L(\bm{\gamma}_\mathrm{o},\nu_\mathrm{o} )}{\partial \bm{\gamma}_\mathrm{o}} \Big)^*\!\! =\! (K\mathbf{I}-\Qm_i)\bm{\gamma}_\mathrm{o}-\bv_i^*+\nu_\mathrm{o}\bm{\gamma}_\mathrm{o}=\mathbf{0}.\notag
\end{equation}
This immediately yields
\begin{equation}\label{eq:gamma0}
    \bm{\gamma}_\mathrm{o}=\big((K+\nu_\mathrm{o})\mathbf{I}-\Qm_i\big)^{-1}\bv_i^*. 
\end{equation}
The conditions in~\eqref{eq:KBi} and~\eqref{eq:nu0} guarantee that all eigenvalues of~$\big((K+\nu_\mathrm{o})\mathbf{I}-\Qm_i\big)$ lie in~$\mathbb{R}_{\geq 0}$; thus, this matrix is always invertible. Subsequently, $\bm{\gamma}_\mathrm{o}$ can be determined through a bisection search by combining~\eqref{eq:slack}--\eqref{eq:gamma0}.

\subsection{Solving~\eqref{eq:OPT2_2} Using SCA}

Based on the convex approximation in Section~\ref{sec:RISCOSCA} and the corresponding solution in Section~\ref{sec:solSurOpt}, we summarize the complete \ac{sca} procedure for solving~\eqref{eq:OPT2_2} in Algorithm~\ref{algo:SCA}, which follows the \ac{sca} principle in Appendix~\ref{sec:SCA}. As mentioned, the step size~$\eta_i$ of the iteration can be chosen according to different rules. As an example, here we present a line search rule~\cite{Nedic2018Parallel} for determining~$\eta_i$. Again, let~$f(\bm{\gamma})$ denote the objective function in~\eqref{eq:OPT2_2} and choose~$\delta_1,\delta_2\in(0,1)$. The line search rule chooses~$\eta_i=\delta_1^{n_i}$, where~$n_i$ is the smallest natural number satisfying
\begin{equation}\label{eq:stepSize}
    f(\bm{\gamma}_i+\eta_i\Delta\bm{\gamma}_i) \leq f(\bm{\gamma}_i) + \delta_2\eta_i\Big(\big(\nabla_{\bm{\gamma}^*} f(\bm{\gamma}_i)\big)^\HH\Delta\bm{\gamma}_i\Big),
\end{equation}
where~$\Delta\bm{\gamma}_i\triangleq\tilde{\bm{\gamma}}_{i+1}-\bm{\gamma}_i$ and~$\tilde{\bm{\gamma}}_{i+1}$ denotes the solution of~\eqref{eq:suropt}. Here,~$\nabla_{\bm{\gamma}^*} f(\bm{\gamma}_i)=\partial f(\bm{\gamma}_i)/\partial \bm{\gamma}^*$~\cite[Theorem 3.4]{Hjorungnes2011Complex}.

\begin{algorithm}[t]
 \caption{Solving~\eqref{eq:OPT2_2} Using SCA}
 \label{algo:SCA}
 \begin{algorithmic}[1]
 \State \textbf{Input:} $\qv,\Bm,A$.\qquad \textbf{Output:} $\bm{\gamma}_\star$.
 \State Initialize~$i=0$ and randomly select~$\bm{\gamma}_0$,~$\|\bm{\gamma}_0\|^2\leq A$.
 \For{$i\in\mathbb{N}$}
 \State Compute $\Qm_i$ using~\eqref{eq:Bi} and select~$K \geq \lambda_\mathrm{max}(\Qm_i)$.
 \State  Compute $\bv_i\! =\! K\bm{\gamma}_i^*\! +\! (\bm{\gamma}_i^\HH\qv\! +\! \bm{\gamma}_i^\HH\Bm^\HH\bm{\gamma}_i^*)\Bm\bm{\gamma}_i$.
 \State Apply bisection to solve~\eqref{eq:gamma0} and obtain~$\bm{\gamma}_\mathrm{o}$.
 \State Let~$\tilde{\bm{\gamma}}_{i+1}=\bm{\gamma}_\mathrm{o}$ and choose a step size~$\eta_i$ through~\eqref{eq:stepSize}.
 \State Update $\bm{\gamma}_{i+1} = \bm{\gamma}_{i} + \eta_i (\tilde{\bm{\gamma}}_{i+1}-\bm{\gamma}_i)$.
 \State Stop if~$\|\bm{\gamma}_{i+1}-\bm{\gamma}_{i}\|_2$ is lower than a preset threshold.
 \EndFor
 \Return $\bm{\gamma}_\star=\bm{\gamma}_{i+1}$.
 \end{algorithmic} 
\end{algorithm}

 {
\subsection{Algorithm Summary and Complexity Analysis}

The proposed beamforming algorithm solves the optimization problem~\eqref{eq:JOPT_ori} by decomposing it into two subproblems,~\eqref{eq:OPT1} and~\eqref{eq:OPT2}, which are solved alternately.

\begin{itemize}
    \item \textbf{Solving~\eqref{eq:OPT1}:}  
    According to Section~\ref{sec:PreCom}, this step involves an alternating projection, whose computational complexity is given by $\mathcal{O}(I_\mathrm{PRJ}N_\mathrm{U}N_\mathrm{B})$ with $I_\mathrm{PRJ}$ denoting the total number of iterations. 
    
    \item \textbf{Solving~\eqref{eq:OPT2}:}  
    According to Section~\ref{sec:RISbf}, we first compute the quantities $\{\tv, \qv, \Bm\}$, which incurs a complexity of $\mathcal{O}(N_\mathrm{I}^2 N_\mathrm{U} N_\mathrm{B})$ in total. Then, we perform the \ac{sca} procedure as outlined in Algorithm~\ref{algo:SCA}. Specifically, steps 4, 5, 6, and 8 in Algorithm~\ref{algo:SCA} incur computational costs of $\mathcal{O}(N_\mathrm{I}^3)$, $\mathcal{O}(N_\mathrm{I}^2)$, $\mathcal{O}(I_\mathrm{BIS} N_\mathrm{I}^3)$, and $\mathcal{O}(N_\mathrm{I})$, respectively, where $I_\mathrm{BIS}$ denotes the number of bisection search required to determine $\gammav_\mathrm{o}$. We omit the complexity of step size selection in Step~7, as it varies with the choice of step size selection strategy. Let $I_\mathrm{SCA}$ denote the number of SCA iterations. The overall complexity of Algorithm~\ref{algo:SCA} is thus $\mathcal{O}(I_\mathrm{SCA} I_\mathrm{BIS} N_\mathrm{I}^3)$, leading to a total complexity for solving~\eqref{eq:OPT2} of
    $\mathcal{O}(N_\mathrm{I}^2 N_\mathrm{U} N_\mathrm{B} + I_\mathrm{SCA} I_\mathrm{BIS} N_\mathrm{I}^3)$.
\end{itemize}
Assuming the alternating optimization between~\eqref{eq:OPT1} and~\eqref{eq:OPT2} is repeated $I_\mathrm{ALT}$ times, the overall computational complexity of the proposed algorithm is then given by $\mathcal{O}\big(I_\mathrm{ALT}(I_\mathrm{PRJ}N_\mathrm{U}N_\mathrm{B} + N_\mathrm{I}^2 N_\mathrm{U} N_\mathrm{B} + I_\mathrm{SCA} I_\mathrm{BIS} N_\mathrm{I}^3)\big)$.
}

\section{Simulation results and discussion}
\label{sec:Sim_Res_Disc}

\subsection{Simulation Setup}

\subsubsection{System Configuration} \label{sec:sys_confg}
Throughout the simulations, we position the \ac{ue} at~$\unit[2.6]{m} \times [\sin{\frac{\pi}{6}}, \cos{\frac{\pi}{6}}, 0]^\TT$, the \ac{ris} at~$[0, 0, 0]^\TT$ facing the positive Y-axis, and the \ac{bs} at~$\unit[2.2]{m} \times [\sin{\frac{\pi}{3}}, \cos{\frac{\pi}{3}}, 0]^\TT$ facing the negative Y-axis, consistent with the far-field measurement setup in~\cite{Wang2024Wideband}. We set $f_c=\unit[30]{GHz}$. The~\ac{ris} is composed of $N_\It=16\times8$ elements with $d_\It=\lambda/20$ while the~\ac{bs} and~\ac{ue} deploy $N_\Rt=4\times2$ and $N_\Tt=2\times1$ with $d_\Rt\!=\!d_\Tt\!=\!\lambda/2$. We use  $L_{\Tt}=2$, the \ac{los} path gain in~\eqref{eq:ch_TxRIS_spatial_domain} is $\alpha_1=(\lambda/(4\pi d_{\It\Tt}))^{\gamma_{\It\Tt}/2}$, where $d_{\It\Tt}$ is the distance between the~\ac{ue} and~\ac{ris}, and the path loss exponent $\gamma_{\It\Tt}=2.1$. For \ac{nlos} paths $\alpha_\ell\sim\mathcal{CN}(0,\alpha_1^2)$. The same logic applies to $\rho_\ell$ in~\eqref{eq:ch_RISRx_spatial_domain} with $L_\Rt=2$. Unless stated otherwise, we set~$A\!=\!896$, i.e., the RIS has an average amplification factor~$\bar{a}\!=\!\sqrt{{A}/{N_\It}}\!=\!7$.

\subsubsection{Mutual Coupling Generation} 
 {The scattering matrix~$\Sm$ is obtained by first determining the mutual impedances ($\Zm$ matrix) between \ac{ris} unit cells and then converting it to scattering parameters ($\Sm$ matrix) by applying~$\Sm = (\Zm + Z_0\mathbf{I})^{-1}(\Zm - Z_0\mathbf{I})$~\cite[Eq.~(24)]{Abrardo2024Design}, where $Z_0 = \unit[50]{\Omega}$ is the characteristic impedance. In general, the~$\Zm$ matrix can be obtained through full-wave electromagnetic simulations using tools such as Ansys HFSS. In this study, we use a simple analytical approach by approximating all \ac{ris} unit cells as parallel cylindrical thin wires of perfectly conducting material. Under this assumption, the mutual impedance between the~$i^\text{th}$ and $j^\text{th}$ unit cells, i.e., the entry $Z_{i,j}$ (and $Z_{j,i}$) can be analytically calculated using~\cite[Eq.~(2)]{Zheng2024On}. This analytical mutual impedance model is governed by the intrinsic impedance of free space~$\eta_0$, the signal wavenumber $k_0$, the length and radius of the equivalent cylindrical thin wire ($h$ and $r$), and the separation distance between the two wires. In the following simulations, we fix $\eta_0 = \unit[377]{\Omega}$, $k_0 = 2\pi / \lambda$, $h = \lambda / 32$, and $r = \lambda / 500$, consistent with the setup in~\cite{Gradoni2021End}, and vary only the separation distance between \ac{ris} unit cells. Thus, $\Sm$ is a function of \ac{ris} spacing, with denser element integration typically resulting in stronger mutual coupling\cite[Fig.~2]{Zheng2024On}.} 

\subsubsection{Compressed Sensing Parameters}
The training uses random phase shifts, uniformly distributed as $\mathcal{U}(0,2\pi)$. We perform $100$ trials to average each simulation point in the results (e.g., transmit power, \ac{ris} inter-element spacing, etc.), where at each trial different random beamformer, combiner, and \ac{ris} configurations are used, following the training procedure in Section \ref{sec:rx_sig}. We set $M_\Rt={3 N_\Rt}/{4}$ $M_\It={3 N_\It}/{4}$, $\hat{L}=5$, $G_\It=N_\It$, and $G_\Rt=N_\Rt$. It is important to clarify that our simulations consider arbitrary \ac{aoa} and \ac{aod}, as this provides a more practical and realistic scenario. Consequently, we anticipate a reduction in channel estimation accuracy due to the off-grid power leakage effect associated with the used on-grid \ac{omp}. However, this performance loss can be mitigated by selecting larger dictionaries (e.g., $G_\It=2 N_\It, G_\Rt=2 N_\Rt$) or by leveraging more advanced off-grid estimation methods.
 
\subsubsection{Noise Setup}
 {Although our methodologies are developed under the assumption that the additional noise introduced by the active \ac{ris} is negligible (see Remark~\ref{rmk:noisePower}), all subsequent simulations are conducted under realistic conditions where noise is present at both the active \ac{ris} and the \ac{ue} consistently, as described in~\eqref{eq:totalNoise}. Specifically, we set $\sigma_\Rt^2 = \sigma_\It^2 = \unit[-95]{dBm}$, following the measured result in~\cite[Fig.~25]{Wang2024Wideband}.
}

\subsection{Performance Metrics}
\subsubsection{Channel Estimation}
Based on Section~\ref{sec:rx_sig_mc} and Section~\ref{sec:CE_MC}, we can estimate the conventional and exact equivalent cascaded channel~$\hat{\Gm}_\mathrm{cv}\in\mathbb{C}^{N_\Rt\times N_\It}$ and~$\hat{\Gm}_\mathrm{mc}\in\mathbb{C}^{N_\Rt\times N_\It^2}$ based on the conventional \ac{mc}-unaware model and the exact \ac{mc}-awre model, respectively. Since~$\hat{\Gm}_\mathrm{cv}$ and~$\hat{\Gm}_\mathrm{mc}$ have different dimensions, for a fair comparison, we evaluate the estimation accuracy by computing the \ac{nmse} of the reconstructed received signal as
\begin{align}
    \text{NMSE}_\mathrm{cv} &= \mathbb{E}\|\Pm\hat{\Gm}_\mathrm{cv}\Thetam_\convidx-\bar{\Ym}\|_\mathrm{F}^2/\|\bar{\Ym}\|_\mathrm{F}^2,\\
    \text{NMSE}_\mathrm{mc} &= \mathbb{E}\|\Pm\hat{\Gm}_\mathrm{mc}\Thetam_\mathrm{mc}-\bar{\Ym}\|_\mathrm{F}^2/\|\bar{\Ym}\|_\mathrm{F}^2,
\end{align}
where~$\bar{\Ym}$ is the noise-free version of the received signal.

\subsubsection{Beamforming}
For the beamforming performance, we evaluate the spectral efficiency of given~$\{\fv,\wv,\bm{\gamma}\}$ as
$\text{SE}=\log_2 (1 + \mathrm{SNR})$ using the accurate definition of SNR in~\eqref{eq:SNR}.

\subsubsection{Computational Complexity}
 {For a numerical evaluation of the computational complexity of Algorithm~\ref{algo:prop_ch_est}, the wall-clock running time has been simulated.} 

\subsection{Channel Estimation Performance Evaluation}

\begin{figure}[t]
    \centering
    % This file was created by matlab2tikz.
%
%The latest updates can be retrieved from
%  http://www.mathworks.com/matlabcentral/fileexchange/22022-matlab2tikz-matlab2tikz
%where you can also make suggestions and rate matlab2tikz.
%
\definecolor{mycolor1}{rgb}{1.00000,0.00000,1.00000}%
\definecolor{mycolor2}{rgb}{0.00000,1.00000,1.00000}%
\definecolor{ForestGreen}{rgb}{0.1333    0.8451    0.1333}
\begin{tikzpicture}

\begin{axis}[%
width=2.8in,
height=1.6in,
at={(0in,0in)},
scale only axis,
xmin=-8.4,
xmax=12.4,
xlabel style={font=\color{white!15!black},font=\footnotesize},
xticklabel style = {font=\color{white!15!black},font=\footnotesize},
xlabel={UE transmit power %$P_\Tt^\mathrm{UL}$ 
$P_\Tt$ (dBm)},
ymin=-12,
ymax=10,
ylabel style={font=\color{white!15!black},font=\footnotesize},
yticklabel style = {font=\color{white!15!black},font=\footnotesize},
ylabel={Estimation NMSE (dB)},
ytick = {-12,-10, -8,-6,-4, -2, 0,2,4,6,8, 10},
axis background/.style={fill=white},
xmajorgrids,
ymajorgrids,
grid style={dashed},
legend style={at={(1,1)}, anchor=north east, legend cell align=left, align=left, draw=white!15!black, font=\tiny, fill opacity=0.85}
]
\addplot [color=ForestGreen, mark=triangle, mark options={solid, rotate=90, ForestGreen}, line width=0.7pt, mark size=2.5pt]
  table[row sep=crcr]{%
-8	8.91493791639805\\
-6	7.31145299275716\\
-4	4.86803406526645\\
-2	3.45948322912057\\
0	2.02116206480035\\
2	1.38258584787448\\
4	0.656718323007226\\
6	0.252501189506923\\
8	0.0979069166195889\\
10	-0.231107672614356\\
12	-0.291082631129151\\
};
\addlegendentry{MC-unaware OMP}

%\addplot [color=blue, mark=triangle, mark options={solid, rotate=270, blue}, line width=0.7pt, mark size=2.5pt]
%  table[row sep=crcr]{%
%0	11.1542846679687\\
%5	3.95701597332954\\
%10	-0.688697168976068\\
%15	-2.88864563703537\\
%20	-3.7805641412735\\
%25	-4.04078487157822\\
%};
%\addlegendentry{Dic 2: Improved Conventional}

%\addplot [color=mycolor1, mark=triangle, mark options={solid, mycolor1}, line width=0.7pt, mark size=2.5pt]
%  table[row sep=crcr]{%
%-6	-0.218935156799853\\
%-2	-8.07203350067139\\
%2	-14.8109446048737\\
%6	-20.2021180152893\\
%10	-22.9818064689636\\
%14	-25.978032207489\\
%};
%\addlegendentry{Proposed $(\rho_\DicRed=0.04)$}

%\addplot [color=mycolor2, mark=diamond, mark options={solid, mycolor2}, line width=0.7pt, mark size=2.7pt]
%  table[row sep=crcr]{%
%-6	-0.0984697423875332\\
%-2	-7.7058730840683\\
%2	-14.6432003974915\\
%6	-20.3586381912231\\
%10	-23.8991782188416\\
%14	-26.141183757782\\
%};
%\addlegendentry{Proposed $(\rho_\DicRed=0.07)$}

\addplot [color=blue, mark=o, mark options={solid, rotate=270, blue}, line width=0.7pt, mark size=2.3pt]
  table[row sep=crcr]{%
-8	3.62947482566039\\
-6	2.28565771778425\\
-4	-0.94691002368927\\
-2	-3.17317625035842\\
0	-6.21538640807072\\
2	-6.53053322806954\\
4	-7.79871680736542\\
6	-9.02228455940882\\
8	-9.13859035571416\\
10	-9.78759246667226\\
12	-10.0753294030825\\
};
\addlegendentry{Proposed $(\rho_{\scalebox{0.7}{$\mathrm{DR}$}}=0.1)$}

\addplot [color=red, mark=triangle, mark options={solid, rotate=180, red}, line width=0.7pt, mark size=2.5pt]
  table[row sep=crcr]{%
-8	4.65935019105673\\
-6	2.74567199349403\\
-4	-0.691081937526663\\
-2	-3.30108859737714\\
0	-6.31810194465021\\
2	-6.9645220408837\\
4	-8.71545544366042\\
6	-10.1455333928267\\
8	-10.3970056692759\\
10	-10.8433075586955\\
12	-11.509513648351\\
};
\addlegendentry{Proposed $(\rho_{\scalebox{0.7}{$\mathrm{DR}$}}=1)$}

\addplot [color=black, mark=square, mark options={solid, black}, line width=0.7pt, mark size=2pt]
  table[row sep=crcr]{%
-8	4.86541893593967\\
-6	2.80965921580791\\
-4	-0.454972085418801\\
-2	-3.02313821191589\\
0	-6.23800949553649\\
2	-6.9009876092275\\
4	-8.7336779375871\\
6	-10.0505542516708\\
8	-10.2826570232709\\
10	-10.9251841465632\\
12	-11.4283880710602\\
};
\addlegendentry{Proposed (no dict. reduction)}

\end{axis}

\end{tikzpicture}%
    \vspace{-3em}
    \caption{ {Performance evaluation of channel estimation accuracy assessed by NMSE versus transmit power.}}
    \label{fig:chest_perf}
\end{figure}

\begin{figure}[t]
    \centering
    % This file was created by matlab2tikz.
%
%The latest updates can be retrieved from
%  http://www.mathworks.com/matlabcentral/fileexchange/22022-matlab2tikz-matlab2tikz
%where you can also make suggestions and rate matlab2tikz.
\definecolor{ForestGreen}{rgb}{0.1333    0.8451    0.1333}
\begin{tikzpicture}

\begin{axis}[%
width=2.8in,
height=1.6in,
at={(0in,0in)},
scale only axis,
xmin=0.92,
xmax=10.08,
xminorticks=true,
xlabel style={font=\color{white!15!black},font=\footnotesize},
xticklabel style = {font=\color{white!15!black},font=\footnotesize},
xlabel={RIS average amplification factor~$\bar{a}$},
ymin=-15,
ymax=30,
ylabel style={font=\color{white!15!black},font=\footnotesize},
yticklabel style = {font=\color{white!15!black},font=\footnotesize},
ytick = {-15,-10,-5,0,5,10,15,20,25,30},
ylabel={Estimation NMSE (dB)},
axis background/.style={fill=white},
xmajorgrids,
xminorgrids,
ymajorgrids,
grid style={dashed},
legend style={at={(1,1)}, anchor=north east, font=\tiny, legend cell align=left, align=left, draw=white!15!black, fill opacity=0.85}
]
\addplot [color=ForestGreen, mark=triangle, mark options={solid, rotate=90, ForestGreen}, line width=0.7pt, mark size=2.5pt]
  table[row sep=crcr]{%
1	31.0573861757914\\
2	18.3494810740153\\
3	9.96512612104416\\
4	3.58978089249382\\
5	1.55596935003996\\
6	0.659902322826868\\
7	0.771998855564743\\
8	0.770412800212701\\
9	0.702386105634893\\
10	0.665423926338553\\
};
\addlegendentry{MC-unaware OMP}

\addplot [color=blue, mark=o, mark options={solid, rotate=270, blue}, line width=0.7pt, mark size=2.3pt]
  table[row sep=crcr]{%
1	29.6470118204753\\
2	16.9415745099386\\
3	7.94420635700226\\
4	-1.94575649052858\\
5	-6.03246349642674\\
6	-8.82099313785633\\
7	-7.94359578688939\\
8	-8.44650224049886\\
9	-8.51650192141533\\
10	-8.72549512585004\\
};
\addlegendentry{Proposed $(\rho_{\scalebox{0.7}{$\mathrm{DR}$}}=0.1)$}

\addplot [color=red, mark=triangle, mark options={solid, rotate=180, red}, line width=0.7pt, mark size=2.5pt]
  table[row sep=crcr]{%
1	30.5938053766886\\
2	18.0109385649363\\
3	8.8261683344841\\
4	-1.20160914571024\\
5	-6.18522026172529\\
6	-9.87347900966803\\
7	-8.56952850818634\\
8	-8.7691141863664\\
9	-9.04659142494202\\
10	-9.44289788405101\\
};
\addlegendentry{Proposed $(\rho_{\scalebox{0.7}{$\mathrm{DR}$}}=1)$}

\addplot [color=black, mark=square, mark options={solid, black}, line width=0.7pt, mark size=2pt]
  table[row sep=crcr]{%
1	30.6991898854574\\
2	18.3068394978841\\
3	8.99083162546158\\
4	-0.959313549567014\\
5	-6.11403736670812\\
6	-9.9188027381897\\
7	-8.50609925190608\\
8	-8.84719509780407\\
9	-9.09682326714198\\
10	-9.42476406097412\\
};
\addlegendentry{Proposed (no dict. reduction)}

\end{axis}

\end{tikzpicture}%
    \vspace{-3em}
    \caption{Evaluation of the channel estimation NMSE versus~\ac{ris} average amplification factor~$\bar{a}=\sqrt{A/N_\It}$. As highlighted in Remark~\ref{rmk:RISamp}, higher amplification intensifies the impact of MC.}
    \label{fig:nmse_vs_rispower}
\end{figure}

\begin{figure}[t]
    \centering
    % This file was created by matlab2tikz.
%
%The latest updates can be retrieved from
%  http://www.mathworks.com/matlabcentral/fileexchange/22022-matlab2tikz-matlab2tikz
%where you can also make suggestions and rate matlab2tikz.
\definecolor{ForestGreen}{rgb}{0.1333    0.8451    0.1333}
\begin{tikzpicture}

\begin{axis}[%
width=2.8in,
height=1.6in,
at={(0in,0in)},
scale only axis,
xmode=log,
xmin=0.019,
xmax=0.525,
xminorticks=true,
xlabel style={font=\color{white!15!black},font=\footnotesize},
xticklabel style = {font=\color{white!15!black},font=\footnotesize},
xtick = {1/50,1/32,1/16,1/8,1/4,1/2},
xticklabels = {$\frac{\lambda}{50}$,$\frac{\lambda}{32}$,$\frac{\lambda}{16}$,$\frac{\lambda}{8}$,$\frac{\lambda}{4}$,$\frac{\lambda}{2}$},
xlabel={RIS inter-element spacing $d_\It$},
ymin=-26.5,
ymax=1.5,
ylabel style={font=\color{white!15!black},font=\footnotesize},
yticklabel style = {font=\color{white!15!black},font=\footnotesize},
ytick = {0,-5,-10,-15,-20,-25},
ylabel={Estimation NMSE (dB)},
axis background/.style={fill=white},
xmajorgrids,
xminorgrids,
ymajorgrids,
grid style={dashed},
legend style={at={(1,0)}, anchor=south east, font=\tiny, legend cell align=left, align=left, draw=white!15!black, fill opacity=0.85}
]
\addplot [color=ForestGreen, mark=triangle, mark options={solid, rotate=90, ForestGreen}, line width=0.7pt, mark size=2.5pt]
  table[row sep=crcr]{%
0.02	0.585715701027463\\
0.03125	0.72443260626557\\
0.0364540324867536	0.704770257680987\\
0.0425246875054493	0.621582712264111\\
0.0496062828740062	0.315362680594747\\
0.0578671695179556	-0.0746293331185977\\
0.0675037336807691	-0.525016495088736\\
0.0787450656184295	-1.23595455723504\\
0.0918584057672249	-1.64262316580862\\
0.107155497856634	-1.17472109906375\\
0.125	-0.574179835617542\\
0.125	-0.574179835617542\\
0.198425131496025	0.139289692789316\\
0.314980262473718	0.102071940992028\\
0.5	-0.245243303105235\\
};
\addlegendentry{MC-unaware OMP}

\addplot [color=blue, mark=o, mark options={solid, rotate=270, blue}, line width=0.7pt, mark size=2.3pt]
  table[row sep=crcr]{%
0.02	-24.8739119052887\\
0.03125	-16.8302205403646\\
0.0364540324867536	-13.8943096876144\\
0.0425246875054493	-11.0279143214226\\
0.0496062828740062	-9.14528695742289\\
0.0578671695179556	-8.52010598182678\\
0.0675037336807691	-4.02682088704314\\
0.0787450656184295	-3.19644176103175\\
0.0918584057672249	-2.13309362779061\\
0.107155497856634	-1.52907127117117\\
0.125	-1.22129179984331\\
0.125	-1.22129179984331\\
0.198425131496025	-0.313799666333944\\
0.314980262473718	0.530698454628388\\
0.5	-0.173811126997073\\
};
\addlegendentry{Proposed $(\rho_{\scalebox{0.7}{$\mathrm{DR}$}}=0.1)$}

\addplot [color=red, mark=triangle, mark options={solid, rotate=180, red}, line width=0.7pt, mark size=2.5pt]
  table[row sep=crcr]{%
0.02	-24.7631267547607\\
0.03125	-17.7626951376597\\
0.0364540324867536	-14.6164686203003\\
0.0425246875054493	-11.8089305202166\\
0.0496062828740062	-9.8973685224851\\
0.0578671695179556	-9.6983817845583\\
0.0675037336807691	-4.12320452829202\\
0.0787450656184295	-2.84899874658634\\
0.0918584057672249	-2.25258214579274\\
0.107155497856634	-1.31983764742812\\
0.125	-0.632620182385047\\
0.125	-0.632620182385047\\
0.198425131496025	0.0497777496774991\\
0.314980262473718	0.092536375268052\\
0.5	-0.166971524929007\\
};
\addlegendentry{Proposed $(\rho_{\scalebox{0.7}{$\mathrm{DR}$}}=1)$}

\addplot [color=black, mark=square, mark options={solid, black}, line width=0.7pt, mark size=2pt]
  table[row sep=crcr]{%
0.02	-24.6560253302256\\
0.03125	-17.8060419400533\\
0.0364540324867536	-14.4985753854116\\
0.0425246875054493	-11.6159342249235\\
0.0496062828740062	-9.70185931324959\\
0.0578671695179556	-9.56391948089004\\
0.0675037336807691	-3.94360655620694\\
0.0787450656184295	-2.80790382967098\\
0.0918584057672249	-2.02728079647447\\
0.107155497856634	-1.21576800122857\\
0.125	-0.487823507810632\\
0.125	-0.487823507810632\\
0.198425131496025	0.09709485967954\\
0.314980262473718	0.25713968804727\\
0.5	-0.160588485312959\\
};
\addlegendentry{Proposed (no dict. reduction)}

\end{axis}

\end{tikzpicture}%
    \vspace{-3em}
    \caption{Evaluation of the channel estimation NMSE versus~\ac{ris} inter-element spacing. A shorter spacing typically causes stronger MC effect\cite[Fig.~2]{Zheng2024On}.}
    \label{fig:nmse_vs_spacing}
\end{figure}

\begin{table} [t]
    \footnotesize
    \centering
    \caption{ {Performance evaluation of channel estimation computational complexity assessed by the running time (seconds) for different \ac{ris} configurations with $N_\Rt=4\times2$ and $ N_\Tt=2\times1$.}  
    \label{table:comp_vs_size}}
    \vspace{-2em}
    \usetikzlibrary{matrix}

\begin{tikzpicture}
    \matrix(dict)[matrix of nodes,
        nodes={align=center,text width=1cm, font=\scriptsize},
        row 1/.style={anchor=south,font=\scriptsize},
        column 1/.style={nodes={text width=0.9cm,align=right}}]
        {
        Config. & Method & Conv. & \scriptsize{$\rho_{\scalebox{0.7}{$\mathrm{DR}$}}\!\!=\!\!0.1$} &\scriptsize{$\rho_{\scalebox{0.7}{$\mathrm{DR}$}}\!\!=\!\!0.5$} &\scriptsize{$\rho_{\scalebox{0.7}{$\mathrm{DR}$}}\!\!=\!\!1$} & MC-aware \\
        \tiny{$N_\It\!\!=\!\!8\!\!\times\!\!8\!$} & Off. & 0.0173 & 0.2378 & 0.5495 & 0.8918 & 1.3495\\ 
        \tiny{$N_\It\!\!=\!\!8\!\!\times\!\!8$} & On. & 0.0257 & 0.0282 & 0.0336 & 0.0502 & 0.0540\\ 
        \tiny{$N_\It\!\!=\!\!16\!\!\times\!\!8$} & Off. & 0.1996 & 4.0011 & 10.6755 & 18.5376 & 32.3983\\ 
        \tiny{$N_\It\!\!=\!\!16\!\!\times\!\!8$} & On. & 0.1770 & 0.1901 & 0.2676 & 0.3552 & 0.3485\\   
    };
    %\draw(dict-1-1.south west)--(dict-1-4.south east);
    %\draw(dict-1-1.north east)--(dict-2-1.south east);
    \draw (dict-1-1.south west)--(dict-1-7.south east);
    \draw (dict-2-1.south west)--(dict-2-7.south east);
    \draw (dict-3-1.south west)--(dict-3-7.south east);
    \draw (dict-4-1.south west)--(dict-4-7.south east);
    \draw (dict-1-1.north east)--(dict-5-1.south east);
    \draw (dict-1-2.north east)--(dict-5-2.south east);
\end{tikzpicture}
    \vspace{-5em}
\end{table}

\begin{figure}[t]
    \centering
    % This file was created by matlab2tikz.
%
%The latest updates can be retrieved from
%  http://www.mathworks.com/matlabcentral/fileexchange/22022-matlab2tikz-matlab2tikz
%where you can also make suggestions and rate matlab2tikz.
%
\definecolor{mycolor1}{rgb}{1.00000,0.00000,1.00000}%
\definecolor{mycolor2}{rgb}{0.00000,1.00000,1.00000}%
\definecolor{ForestGreen}{rgb}{0.1333    0.8451    0.1333}
\begin{tikzpicture}

\begin{axis}[%
width=2.8in,
height=2in,
at={(0in,0in)},
scale only axis,
xmin=-8.4,
xmax=12.4,
xlabel style={font=\color{white!15!black},font=\footnotesize},
xticklabel style = {font=\color{white!15!black},font=\footnotesize},
xlabel={UE transmit power %$P_\Tt^\mathrm{UL}$ 
$P_\Tt$ (dBm)},
ymin=-12,
ymax=28,
ylabel style={font=\color{white!15!black},font=\footnotesize},
yticklabel style = {font=\color{white!15!black},font=\footnotesize},
ylabel={Estimation NMSE (dB)},
ytick = {-12,-7, -2, 3, 8, 13, 18, 23, 28},
axis background/.style={fill=white},
xmajorgrids,
ymajorgrids,
grid style={dashed},
legend columns=2,
legend style={
    at={(0.5,1.03)},
    anchor=south,
    font=\tiny,
    legend cell align=left,
    align=left,
    draw=white!15!black,
    fill opacity=0.85
  }
]

\addplot [color=blue, dashed, line width=0.7pt, mark size=2.3pt, mark=o, mark options={solid, rotate=270, blue}]
  table[row sep=crcr]{%
-8	25.6096281671524\\
-6  23.2765334312341\\  
-4	18.3616048002243\\
-2  15.5701529123803\\
0	10.9111025667191\\
2   7.09258874513990\\
4	2.29598801612854\\
6   0.27441712394581\\
8	-2.38390565536916\\
10  -4.54309281234546\\
12	-6.52338094830513\\
};
\addlegendentry{Proposed $(\rho_{\scalebox{0.7}{$\mathrm{DR}$}}\!=\!0.1\!,\! \sigma_{\scalebox{0.7}{$\err$}}^2\!=\!10 \sigma_{\scalebox{0.7}{$\It$}}^2)$}

\addplot [color=red, dashed, mark=triangle, mark options={solid, rotate=180, red}, line width=0.7pt, mark size=2.5pt]
  table[row sep=crcr]{%
-8	26.948813829422\\
-6  21.985611246712\\
-4	19.6037086129189\\
-2  14.6362934599821\\
0	11.5369949531555\\
2   6.60355092882358\\
4	2.65955745182931\\
6   -0.0441209812769\\
8	-2.11073016613722\\
10  -4.45983090876541\\
12	-6.97782143086195\\
};
\addlegendentry{Proposed $(\rho_{\scalebox{0.7}{$\mathrm{DR}$}}\!=\!1\!,\! \sigma_{\scalebox{0.7}{$\err$}}^2\!=\!10 \sigma_{\scalebox{0.7}{$\It$}}^2)$}

\addplot [color=blue, dotted, line width=0.7pt, mark size=2.3pt, mark=o, mark options={solid, rotate=270, blue}]
  table[row sep=crcr]{%
-8	17.0503397464752\\
-6  14.91135675721398\\
-4	10.3534946644725\\
-2  7.84158287273191\\
0	3.34734362900257\\
2   0.26885852340123\\
4	-3.48869785696268\\
6   -5.21175567883234\\
8	-6.44899149954319\\
10  -7.58295411233792\\
12	-8.71694522380829\\
};
\addlegendentry{Proposed $(\rho_{\scalebox{0.7}{$\mathrm{DR}$}}\!=\!0.1\!,\! \sigma_{\scalebox{0.7}{$\err$}}^2\!=\!3 \sigma_{\scalebox{0.7}{$\It$}}^2)$}

\addplot [color=red, dotted, mark=triangle, mark options={solid, rotate=180, red}, line width=0.7pt, mark size=2.5pt]
  table[row sep=crcr]{%
-8	18.2113927841187\\
-6  13.7019334312341\\  
-4	11.6113122451305\\
-2  6.85041129431235\\
0	4.07181375831366\\
2   -0.0707123123803\\
4	-3.53409506082535\\
6   -4.96887634512301\\
8	-6.88906991928816\\
10  -8.13912534873011\\
12	-9.38917553901672\\
};
\addlegendentry{Proposed $(\rho_{\scalebox{0.7}{$\mathrm{DR}$}}\!=\!1\!,\!\sigma_{\scalebox{0.7}{$\err$}}^2\!=\!3 \sigma_{\scalebox{0.7}{$\It$}}^2)$}

\addplot [color=blue, dashdotted, mark=o, mark options={solid, rotate=270, blue}, line width=0.7pt, mark size=2.3pt]
  table[row sep=crcr]{%
-8	11.4312585926056\\
-6	8.52149283665401\\
-4	5.61139210045338\\
-2	2.23501298781327\\
0	-1.14123940467834\\
2   -3.82871223865567\\
4	-6.51629140973091\\
6   -7.28509234756192\\
8	-8.05376019120216\\
10  -8.80050123532387\\ 
12	-9.54705720424652\\
};
\addlegendentry{Proposed $(\rho_{\scalebox{0.7}{$\mathrm{DR}$}}\!=\!0.1\!,\!\sigma_{\scalebox{0.7}{$\err$}}^2\!=\!\sigma_{\scalebox{0.7}{$\It$}}^2)$}

\addplot [color=red, dashdotted, mark=triangle, mark options={solid, rotate=180, red}, line width=0.7pt, mark size=2.5pt]
  table[row sep=crcr]{%
-8	12.4088907718658\\
-6	9.35688943827123\\
-4	6.30414094924927\\
-2  2.75341235098345\\
0	-0.797411224655807\\
2   -3.83609287690132\\
4	-6.87452405691147\\
6   -7.98280459123398\\
8	-9.08959671020508\\
10  -9.76491667743282\\
12	-10.4402094936371\\
};
\addlegendentry{Proposed $(\rho_{\scalebox{0.7}{$\mathrm{DR}$}}\!=\!1\!,\!\sigma_{\scalebox{0.7}{$\err$}}^2\!=\!\sigma_{\scalebox{0.7}{$\It$}}^2)$}

\addplot [color=ForestGreen, mark=triangle, mark options={solid, rotate=90, ForestGreen}, line width=0.7pt, mark size=2.5pt]
  table[row sep=crcr]{%
-8	8.91493791639805\\
-6	7.31145299275716\\
-4	4.86803406526645\\
-2	3.45948322912057\\
0	2.02116206480035\\
2	1.38258584787448\\
4	0.656718323007226\\
6	0.252501189506923\\
8	0.0979069166195889\\
10	-0.231107672614356\\
12	-0.291082631129151\\
};
\addlegendentry{MC-unaware OMP}

\addplot [color=black, mark=square, mark options={solid, black}, line width=0.7pt, mark size=2pt]
  table[row sep=crcr]{%
-8	4.86541893593967\\
-6	2.80965921580791\\
-4	-0.454972085418801\\
-2	-3.02313821191589\\
0	-6.23800949553649\\
2	-6.9009876092275\\
4	-8.7336779375871\\
6	-10.0505542516708\\
8	-10.2826570232709\\
10	-10.9251841465632\\
12	-11.4283880710602\\
};
\addlegendentry{Proposed (no dict. reduction)}

\addplot [color=blue, mark=o, mark options={solid, rotate=270, blue}, line width=0.7pt, mark size=2.3pt]
  table[row sep=crcr]{%
-8	3.62947482566039\\
-6	2.28565771778425\\
-4	-0.94691002368927\\
-2	-3.17317625035842\\
0	-6.21538640807072\\
2	-6.53053322806954\\
4	-7.79871680736542\\
6	-9.02228455940882\\
8	-9.13859035571416\\
10	-9.78759246667226\\
12	-10.0753294030825\\
};
\addlegendentry{Proposed $(\rho_{\scalebox{0.7}{$\mathrm{DR}$}}\!=\!0.1\!,\!\sigma_{\scalebox{0.7}{$\err$}}^2\!=\!0)$}

\addplot [color=red, mark=triangle, mark options={solid, rotate=180, red}, line width=0.7pt, mark size=2.5pt]
  table[row sep=crcr]{%
-8	4.65935019105673\\
-6	2.74567199349403\\
-4	-0.691081937526663\\
-2	-3.30108859737714\\
0	-6.31810194465021\\
2	-6.9645220408837\\
4	-8.71545544366042\\
6	-10.1455333928267\\
8	-10.3970056692759\\
10	-10.8433075586955\\
12	-11.509513648351\\
};
\addlegendentry{Proposed $(\rho_{\scalebox{0.7}{$\mathrm{DR}$}}\!=\!1\!,\!\sigma_{\scalebox{0.7}{$\err$}}^2\!=\!0)$}

\end{axis}

\end{tikzpicture}%
    \vspace{-3em}
    \caption{ {Evaluation of the channel estimation NMSE versus transmit power in the presence of propagation error with different error variance $\sigma_\err^2$ levels.}}
    \label{fig:errana_vs_sigma}
\end{figure}

Figure~\ref{fig:chest_perf} presents the channel estimation \ac{nmse} versus transmit power in the uplink. The proposed two-stage algorithm consistently outperforms the conventional MC-unaware OMP in the presence of \ac{mc}. Even for small values of \ac{dr} factor~$\rho_\DicRed$, our method achieves several dB improvements in accuracy compared to the conventional approach. Furthermore, the estimation accuracy increases with greater $\rho_\mathrm{DR}$. For $\rho_\DicRed \geq 0.1$, the proposed algorithm performs on par with the direct MC-aware OMP (without dictionary reduction).\footnote{ {The \ac{mc} effect in \eqref{eq:MC_ChM} is influenced by both the structure of the scattering matrix $\Sm$ and the magnitude of its elements, as well as the power of the active RIS. In practice, these parameters are highly dependent on the specific RIS implementation. Given this dependency, a fixed value for $\rho_\DicRed$ may not be optimal across varying \ac{mc} intensities. Therefore, a promising direction for future research is to investigate an adaptive strategy where $\rho_\DicRed$ is dynamically adjusted to offer the best trade-off between channel estimation performance and computational complexity.}} 

Figure~\ref{fig:nmse_vs_rispower} examines the effect of the RIS amplification factor on channel estimation accuracy. We observe that under a low amplification factor ($\bar{a} \leq 3$), the conventional and proposed methods perform similarly, and both with accuracy improving as the RIS amplification increases, indicating a negligible impact of MC. However, as the RIS amplification factor continues to increase, the performance of the conventional MC-unaware OMP deteriorates and diverges from the proposed method due to the stronger impact of MC, as analyzed in Remark~\ref{rmk:RISamp}. Notably, when $\bar{a} > 5$, conventional channel estimation performance remains constant with further increases in RIS amplification. This suggests that simply increasing RIS power is not always beneficial unless MC is well-managed. In contrast, our proposed method effectively accounts for the MC, allowing it to consistently benefit from increased RIS amplification, even with a small dictionary reduction factor. 

In Figure \ref{fig:nmse_vs_spacing}, the effect of different~\ac{mc} levels is investigated by varying the~\ac{ris} inter-element spacing. When the RIS spacing is large ($d_\It > \lambda/10$), the MC effect is weak, and thus the performance of the MC-unaware OMP and the proposed method with appropriate dictionary reduction factors ($\rho_\mathrm{DR} \geq 1$ or no reduction) is similar. Nevertheless, with shorter RIS spacing ($d_\It < \lambda/10$), the MC-unaware OMP suffers significant performance degradation, which is exacerbated as the MC effect intensifies. The proposed method, however, maintains a performance gain of over $\unit[20]{dB}$ under strong-MC conditions.

 {Table~\ref{table:comp_vs_size} provides an evaluation of the wall-clock running time for two different \ac{ris} sizes, where these results were obtained by averaging the running time for 100 trials. As expected, the derived \ac{mc}-aware solution has the highest complexity and incurs the longest running time, while the \ac{mc}-unaware approach enjoys the lowest running time at the price of poor channel estimation performance, as depicted in previous results. Moreover, the proposed algorithm has a much lower overall running time, especially with small~$\rho_\mathrm{DR}$ values, highlighting a trade-off between performance and complexity. The running time of the online part for large $\rho_\mathrm{DR}$ values, e.g., $\rho_\mathrm{DR}=1$ is almost similar to the exact solution. However, the proposed Algorithm \ref{algo:prop_ch_est} still results in much lower running time when accounting for the overall complexities. As shown in our results, the offline complexity of the \ac{mc}-aware solution is the main bottleneck since computing the sensing matrix requires large measurement and dictionary matrices, even with a moderate \ac{ris} size of $N_\It=128$ unit cells. Note that even our proposal is not scalable for large \ac{ris} sizes, the authors believe that this paper establishes a first step for physically consistent \ac{ris} channel estimation and beamforming in the presence of \ac{mc}; however, further research in this direction is needed. The main drawback of the aforementioned sparse estimation algorithm, whether using \ac{mc}-aware or the proposed Algorithm \ref{algo:prop_ch_est}, lies in the structure of its dictionary matrix. This matrix is constructed as a Kronecker product of several individual dictionaries, so its overall size scales with the product of the individual dictionary sizes. Moreover, the performance of the on-grid CS approach heavily relies on the discretization grid used to quantize the dictionary, which is again equal to the dictionary size multiplied by an additional oversampling factor to enhance estimation accuracy. Thus, such a combination, along with the dictionary matrix's structure, significantly increases the sparse recovery complexity.

To assess the impact of error propagation on the performance of the proposed channel estimation, we test the cases with different noise levels. We simulate a scenario in which the input signal in step 5 of Algorithm \ref{algo:prop_ch_est} operates with additional additive error. The estimation error $\boldsymbol{\hbar} \!\sim\!\mathcal{CN}(\mathbf{0},\sigma^2_\err)$ is modeled as a zero-mean complex Gaussian perturbation with variance of $\sigma^2_\err$. Figure~\ref{fig:errana_vs_sigma} evaluates the impact of this additive error on the channel estimation performance. As the error variance increases, the performance of the proposed algorithm degrades, resulting in a higher NMSE, especially in the low SNR region, which means that the proposed method is sensitive to additive error propagation. However, at high SNR regions, even with large error variance ($\sigma_\err^2=10 \sigma_\It^2$), the performance still outperforms the \ac{mc}-unaware OMP.}

\subsection{Beamforming Performance Evaluation}

\begin{figure}[t]
    \centering
    % This file was created by matlab2tikz.
%
%The latest updates can be retrieved from
%  http://www.mathworks.com/matlabcentral/fileexchange/22022-matlab2tikz-matlab2tikz
%where you can also make suggestions and rate matlab2tikz.
\definecolor{ForestGreen}{rgb}{0.1333    0.8451    0.1333}
\begin{tikzpicture}

\begin{axis}[%
width=2.8in,
height=2in,
at={(0in,0in)},
scale only axis,
xmin=-10.4,
xmax=20.4,
xlabel style={font=\color{white!15!black},font=\footnotesize},
xticklabel style = {font=\color{white!15!black},font=\footnotesize},
xlabel={BS transmit power %$P_\Rt^{\mathrm{DL}}$ 
$P_\Rt$ (dBm)},
ymin=-0.1,
ymax=15,
ylabel style={font=\color{white!15!black},font=\footnotesize},
yticklabel style = {font=\color{white!15!black},font=\footnotesize},
ytick = {0,5,10,15},
ylabel={Spectral efficiency (bits/s/Hz)},
axis background/.style={fill=white},
xmajorgrids,
ymajorgrids,
grid style={dashed},
legend style={at={(0,1)}, anchor=north west, font=\tiny, legend cell align=left, align=left, draw=white!15!black, fill opacity=0.85}
]
\addplot [color=red, dashed, mark=+, mark options={solid, red}, line width=0.8pt, mark size=2.5pt]
  table[row sep=crcr]{%
-10	4.35186742156375\\
-5	5.96369294201565\\
0	7.60876369563791\\
5	9.26466462023765\\
10	10.924024427572\\
15	12.5844802331112\\
20	14.2452840883004\\
};
\addlegendentry{$\mathbf{G}_\mathrm{mc}$ (Ground truth) + SCA BF}

\addplot [color=red, mark=diamond, mark options={solid, red}, line width=0.7pt, mark size=3pt]
  table[row sep=crcr]{%
-10	2.8802316403389\\
-5	4.39073646306992\\
0	5.99819068908691\\
5	7.64129183769226\\
10	9.29648900985718\\
15	10.9556161880493\\
20	12.6160000801086\\
};
\addlegendentry{$\hat{\mathbf{G}}_\mathrm{mc}$ (Proposed, $\rho_{\scalebox{0.7}{$\mathrm{DR}$}}=1$) + SCA BF}

\addplot [color=blue, dashed, mark=x, mark options={solid, blue}, line width=0.8pt, mark size=2.5pt]
  table[row sep=crcr]{%
-10	1.71844968044838\\
-5	3.04338429967806\\
0	4.57943714035548\\
5	6.19853882517831\\
10	7.84600797838145\\
15	9.50267816085092\\
20	11.1622817050093\\
};
\addlegendentry{$\mathbf{G}_\mathrm{cv}$ (Ground truth) + SVD BF}

\addplot [color=blue, mark=square, mark options={solid, blue}, line width=0.7pt, mark size=2pt]
  table[row sep=crcr]{%
-10	0.815984952747822\\
-5	1.76233124315739\\
0	3.09075803995132\\
5	4.61086959838867\\
10	6.22799573898315\\
15	7.87466161727905\\
20	9.53105628967285\\
};
\addlegendentry{$\hat{\mathbf{G}}_\mathrm{cv}$ (MC-unaware OMP) + SVD BF}

\addplot [color=ForestGreen, dashed, mark=star, mark options={solid, ForestGreen}, line width=0.8pt, mark size=2.5pt]
  table[row sep=crcr]{%
-10	0.995984775205635\\
-5	2.05126768028444\\
0	3.45212287759895\\
5	5.02001318907504\\
10	6.65024966748906\\
15	8.301359005832\\
20	9.95919265914318\\
};
\addlegendentry{$\mathbf{G}_\mathrm{mc}$ (Ground truth) + GD BF}

\addplot [color=ForestGreen, mark=o, mark options={solid, ForestGreen}, line width=0.7pt, mark size=2.3pt]
  table[row sep=crcr]{%
-10	0.52780412517488\\
-5	1.25705689311028\\
0	2.42259053051472\\
5	3.8757165145874\\
10	5.4603120136261\\
15	7.09524301528931\\
20	8.74771506309509\\
};
\addlegendentry{$\hat{\mathbf{G}}_\mathrm{mc}$ (Proposed, $\rho_{\scalebox{0.7}{$\mathrm{DR}$}}=1$) + GD BF}
\end{axis}
\end{tikzpicture}%
    \vspace{-3em}
    \caption{Performance evaluation of spectral efficiency versus base station transmit power for different estimation and beamforming methods.}
    \label{fig:se_vs_snr}
\end{figure}
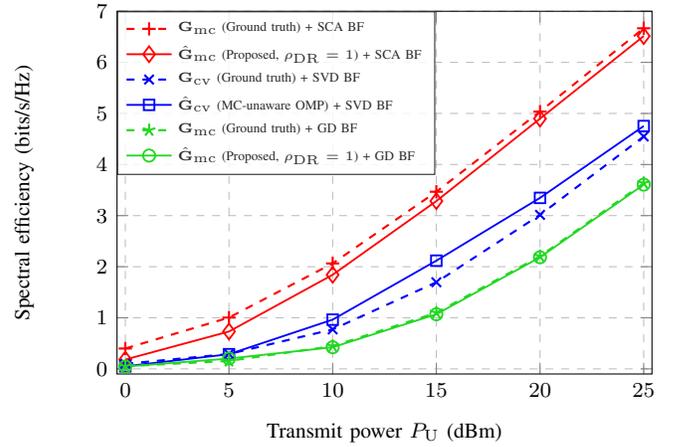
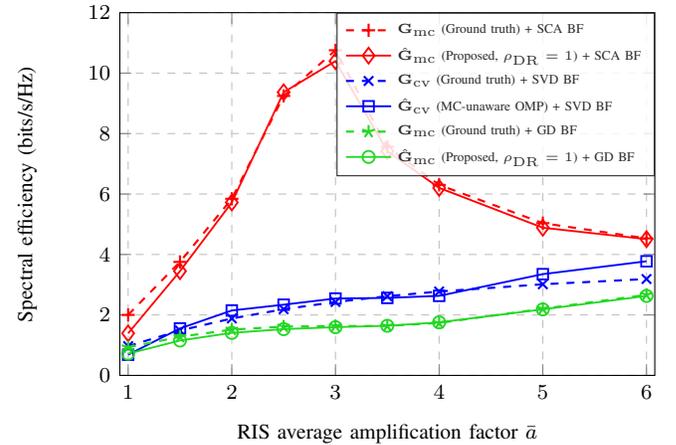
\begin{figure}[t]
    \centering
    % This file was created by matlab2tikz.
%
%The latest updates can be retrieved from
%  http://www.mathworks.com/matlabcentral/fileexchange/22022-matlab2tikz-matlab2tikz
%where you can also make suggestions and rate matlab2tikz.
\definecolor{ForestGreen}{rgb}{0.1333    0.8451    0.1333}
\begin{tikzpicture}

\begin{axis}[%
width=2.8in,
height=2in,
at={(0in,0in)},
scale only axis,
xmin=0.92,
xmax=10.08,
xminorticks=true,
xlabel style={font=\color{white!15!black},font=\footnotesize},
xticklabel style = {font=\color{white!15!black},font=\footnotesize},
xtick = {1,2,3,4,5,6,7,8,9,10},
xlabel={RIS average amplification factor~$\bar{a}$},
ymin=0,
ymax=16,
ylabel style={font=\color{white!15!black},font=\footnotesize},
yticklabel style = {font=\color{white!15!black},font=\footnotesize},
ylabel={Spectral efficiency (bits/s/Hz)},
ytick = {0,2,4,6,8,10,12,14,16},
axis background/.style={fill=white},
xmajorgrids,
xminorgrids,
ymajorgrids,
grid style={dashed},
legend style={at={(0,1)}, anchor=north west, font=\tiny, legend cell align=left, align=left, draw=white!15!black, fill opacity=0.85}
]
\addplot [color=red, dashed, mark=+, mark options={solid, red}, line width=0.8pt, mark size=2.5pt]
  table[row sep=crcr]{%
1	3.41258317723052\\
2	5.57179751442176\\
3	6.95722926949852\\
4	8.03555209812062\\
5	8.95614761536438\\
6	9.78500001660686\\
7	10.924024427572\\
8	12.5347603231758\\
9	13.3160753343701\\
10	12.7904015174146\\
};
%\addlegendentry{Ground truth $\mathbf{G}_\mathrm{mc}$ + SCA BF}
\addlegendentry{$\mathbf{G}_\mathrm{mc}$ (Ground truth) + SCA BF}

\addplot [color=red, mark=diamond, mark options={solid, red}, line width=0.7pt, mark size=3pt]
  table[row sep=crcr]{%
1	0.0336000848213462\\
2	0.213261841590079\\
3	0.586328469019807\\
4	1.73714194225587\\
5	3.07513882942514\\
6	7.35101833343506\\
7	8.91454286575317\\
8	10.0126180648804\\
9	11.7559144973755\\
10	12.2459935188293\\
};
%\addlegendentry{Exact RD-OMP estimated $\hat{\mathbf{G}}_\mathrm{mc}$ + SCA BF}
\addlegendentry{$\hat{\mathbf{G}}_\mathrm{mc}$ (Proposed, $\rho_{\scalebox{0.7}{$\mathrm{DR}$}}=1$) + SCA BF}

\addplot [color=blue, dashed, mark=x, mark options={solid, blue}, line width=0.8pt, mark size=2.5pt]
  table[row sep=crcr]{%
1	3.23687593804208\\
2	4.98586457694995\\
3	5.9944765831612\\
4	6.67549889083028\\
5	7.17215973021879\\
6	7.55037030631886\\
7	7.84600797838145\\
8	8.08091085568755\\
9	8.26947878906762\\
10	8.42178615759294\\
};
%\addlegendentry{Ground truth $\mathbf{G}_\mathrm{cv}$ + SVD BF}
\addlegendentry{$\mathbf{G}_\mathrm{cv}$ (Ground truth) + SVD BF}

\addplot [color=blue, mark=square, mark options={solid, blue}, line width=0.7pt, mark size=2pt]
  table[row sep=crcr]{%
1	0.0308841200365259\\
2	0.14484144133709\\
3	0.352706546269156\\
4	1.19616813690945\\
5	2.05907486117649\\
6	4.36999378204346\\
7	6.69851484298706\\
8	6.68614845275879\\
9	6.87568960189819\\
10	6.97628803253174\\
};
%\addlegendentry{Conv. OMP estimated $\hat{\mathbf{G}}_\mathrm{cv}$ + SVD BF}
\addlegendentry{$\hat{\mathbf{G}}_\mathrm{cv}$ (MC-unaware OMP) + SVD BF}

\addplot [color=ForestGreen, dashed, mark=star, mark options={solid, ForestGreen}, line width=0.8pt, mark size=2.5pt]
  table[row sep=crcr]{%
1	3.21739202956824\\
2	4.85153143743169\\
3	5.62853934777478\\
4	6.05479646702552\\
5	6.32215635100632\\
6	6.50893151303876\\
7	6.65024966748907\\
8	6.76343819365181\\
9	6.85780072444544\\
10	6.93872031927556\\
};
%\addlegendentry{Ground truth $\mathbf{G}_\mathrm{mc}$ + GD BF}
\addlegendentry{$\mathbf{G}_\mathrm{mc}$ (Ground truth) + GD BF}

\addplot [color=ForestGreen, mark=o, mark options={solid, ForestGreen}, line width=0.7pt, mark size=2.3pt]
  table[row sep=crcr]{%
1	0.0304193536921269\\
2	0.141984240884208\\
3	0.360798844496967\\
4	1.05423212874848\\
5	1.84255906515638\\
6	4.38620352745056\\
7	5.16689405441284\\
8	5.11421298980713\\
9	5.2056884765625\\
10	5.25716438293457\\
};
%\addlegendentry{Exact RD-OMP estimated $\hat{\mathbf{G}}_\mathrm{mc}$ + GD BF}
\addlegendentry{$\hat{\mathbf{G}}_\mathrm{mc}$ (Proposed, $\rho_{\scalebox{0.7}{$\mathrm{DR}$}}=1$) + GD BF}

\end{axis}

\end{tikzpicture}%
    \vspace{-3em}
    \caption{Evaluation of spectral efficiency versus~\ac{ris} average amplification factor~$\bar{a}=\sqrt{A/N_\It}$. As highlighted in Remark~\ref{rmk:RISamp}, higher amplification intensifies the impact of MC.}
    \label{fig:se_vs_rispower}
\end{figure}
\begin{figure}[t]
    \centering
    % This file was created by matlab2tikz.
%
%The latest updates can be retrieved from
%  http://www.mathworks.com/matlabcentral/fileexchange/22022-matlab2tikz-matlab2tikz
%where you can also make suggestions and rate matlab2tikz.
\definecolor{ForestGreen}{rgb}{0.1333    0.8451    0.1333}
\begin{tikzpicture}

\begin{axis}[%
width=2.8in,
height=2in,
at={(0in,0in)},
scale only axis,
xmode=log,
xmin=0.019,
xmax=0.525,
xminorticks=true,
xlabel style={font=\color{white!15!black},font=\footnotesize},
xticklabel style = {font=\color{white!15!black},font=\footnotesize},
xtick = {1/50,1/32,1/16,1/8,1/4,1/2},
xticklabels = {$\frac{\lambda}{50}$,$\frac{\lambda}{32}$,$\frac{\lambda}{16}$,$\frac{\lambda}{8}$,$\frac{\lambda}{4}$,$\frac{\lambda}{2}$},
xlabel={RIS inter-element spacing $d_\It$},
ymin=0,
ymax=14.1,
ylabel style={font=\color{white!15!black},font=\footnotesize},
yticklabel style = {font=\color{white!15!black},font=\footnotesize},
ytick = {0,2,4,6,8,10,12,14},
ylabel={Spectral efficiency (bits/s/Hz)},
axis background/.style={fill=white},
xmajorgrids,
xminorgrids,
ymajorgrids,
grid style={dashed},
legend columns=2,
legend style={
    at={(0.5,1.03)},
    anchor=south,
    font=\tiny,
    legend cell align=left,
    align=left,
    draw=white!15!black,
    fill opacity=0.85
  }
]
\addplot [color=red, dashed, mark=+, mark options={solid, red}, line width=0.8pt, mark size=2.5pt]
  table[row sep=crcr]{%
0.02	4.15473232850204\\
0.03125	3.99111242675829\\
0.0364540324867536	4.3393623714567\\
0.0425246875054493	4.94776334187817\\
0.0496062828740062	5.63049213702672\\
0.0578671695179556	7.30478163369533\\
0.0675037336807691	13.2428691131643\\
0.0787450656184295	9.80308329796026\\
0.0918584057672249	12.8716033382705\\
0.107155497856634	10.1818082093672\\
0.125	9.62251952046107\\
0.125	9.62251952046107\\
0.198425131496025	8.92682328699948\\
0.314980262473718	8.81782205190781\\
0.5	8.82792405614085\\
};
%\addlegendentry{Ground truth $\mathbf{G}_\mathrm{mc}$ + SCA BF}
\addlegendentry{$\mathbf{G}_\mathrm{mc}$ (Ground truth) + SCA BF}

\addplot [color=red, mark=diamond, mark options={solid, red}, line width=0.7pt, mark size=3pt]
  table[row sep=crcr]{%
0.02	2.14503438636661\\
0.03125	2.26788985744119\\
0.0364540324867536	2.89145854026079\\
0.0425246875054493	3.53193435132503\\
0.0496062828740062	4.66000070810318\\
0.0578671695179556	5.5025871515274\\
0.0675037336807691	5.83176773011684\\
0.0787450656184295	6.07461835349528\\
0.0918584057672249	7.52583995580673\\
0.107155497856634	6.10535707481205\\
0.125	6.49869286591013\\
0.125	6.49869286591013\\
0.198425131496025	5.42977514248603\\
0.314980262473718	3.83146353520781\\
0.5	3.50490528184239\\
};
%\addlegendentry{Exact RD-OMP estimated $\hat{\mathbf{G}}_\mathrm{mc}$ + SCA BF}
\addlegendentry{$\hat{\mathbf{G}}_\mathrm{mc}$ (Proposed, $\rho_{\scalebox{0.7}{$\mathrm{DR}$}}=1$) + SCA BF}

\addplot [color=blue, dashed, mark=x, mark options={solid, blue}, line width=0.8pt, mark size=2.5pt]
  table[row sep=crcr]{%
0.02	3.54681221348555\\
0.03125	3.61481683037372\\
0.0364540324867536	3.91816197972567\\
0.0425246875054493	4.42720685890809\\
0.0496062828740062	5.05256074826463\\
0.0578671695179556	5.71884826948441\\
0.0675037336807691	6.35968340735097\\
0.0787450656184295	6.96995204537731\\
0.0918584057672249	7.54337197110514\\
0.107155497856634	8.0782249084702\\
0.125	8.47760578537612\\
0.125	8.47760578537612\\
0.198425131496025	8.85991289594693\\
0.314980262473718	8.80776932249289\\
0.5	8.82749892517745\\
};
%\addlegendentry{Ground truth $\mathbf{G}_\mathrm{cv}$ + SVD BF}
\addlegendentry{$\mathbf{G}_\mathrm{cv}$ (Ground truth) + SVD BF}

\addplot [color=blue, mark=square, mark options={solid, blue}, line width=0.7pt, mark size=2pt]
  table[row sep=crcr]{%
0.02	0.565966561936866\\
0.03125	0.618528550217865\\
0.0364540324867536	0.947993456671111\\
0.0425246875054493	1.30698051988948\\
0.0496062828740062	1.22903006077191\\
0.0578671695179556	2.06846957692876\\
0.0675037336807691	3.77232299089432\\
0.0787450656184295	3.82907490801066\\
0.0918584057672249	4.16049845484877\\
0.107155497856634	4.91003696653992\\
0.125	5.53308026480154\\
0.125	5.53308026480154\\
0.198425131496025	4.92573241205483\\
0.314980262473718	3.70201859303212\\
0.5	3.49970987546805\\
};
%\addlegendentry{Conv. OMP estimated $\hat{\mathbf{G}}_\mathrm{cv}$ + SVD BF}
\addlegendentry{$\hat{\mathbf{G}}_\mathrm{cv}$ (MC-unaware OMP) + SVD BF}

\addplot [color=ForestGreen, dashed, mark=star, mark options={solid, ForestGreen}, line width=0.8pt, mark size=2.5pt]
  table[row sep=crcr]{%
0.02	3.37567921912513\\
0.03125	3.57300840916087\\
0.0364540324867536	3.7140868998064\\
0.0425246875054493	3.88174818421475\\
0.0496062828740062	5.73746287747382\\
0.0578671695179556	4.78337751651459\\
0.0675037336807691	4.6190853261479\\
0.0787450656184295	5.78985401432681\\
0.0918584057672249	6.39585197613096\\
0.107155497856634	6.81247827105891\\
0.125	7.05465044711497\\
0.125	7.05465044711497\\
0.198425131496025	7.34761682782063\\
0.314980262473718	7.13975055149605\\
0.5	6.93968192453189\\
};
%\addlegendentry{Ground truth $\mathbf{G}_\mathrm{mc}$ + GD BF}
\addlegendentry{$\mathbf{G}_\mathrm{mc}$ (Ground truth) + GD BF}

\addplot [color=ForestGreen, mark=o, mark options={solid, ForestGreen}, line width=0.7pt, mark size=2.3pt]
  table[row sep=crcr]{%
0.02	1.7421061526984\\
0.03125	1.89071507081389\\
0.0364540324867536	2.26065897122025\\
0.0425246875054493	2.4424239321053\\
0.0496062828740062	3.71862864017487\\
0.0578671695179556	3.39064029037952\\
0.0675037336807691	3.6741531952153\\
0.0787450656184295	3.0457937028259\\
0.0918584057672249	3.08110265053809\\
0.107155497856634	3.46675797447562\\
0.125	4.24716798452651\\
0.125	4.24716798452651\\
0.198425131496025	4.23652964010835\\
0.314980262473718	2.94816424066899\\
0.5	2.37830096228281\\
};
%\addlegendentry{Exact RD-OMP estimated $\hat{\mathbf{G}}_\mathrm{mc}$ + GD BF}
\addlegendentry{$\hat{\mathbf{G}}_\mathrm{mc}$ (Proposed, $\rho_{\scalebox{0.7}{$\mathrm{DR}$}}=1$) + GD BF}

\end{axis}

\end{tikzpicture}%
    \vspace{-3em}
    \caption{Evaluation of the spectral efficiency versus~\ac{ris} inter-element spacing. A shorter spacing typically causes stronger MC effect\cite[Fig.~2]{Zheng2024On}.}
    \label{fig:se_vs_spacing}
\end{figure}
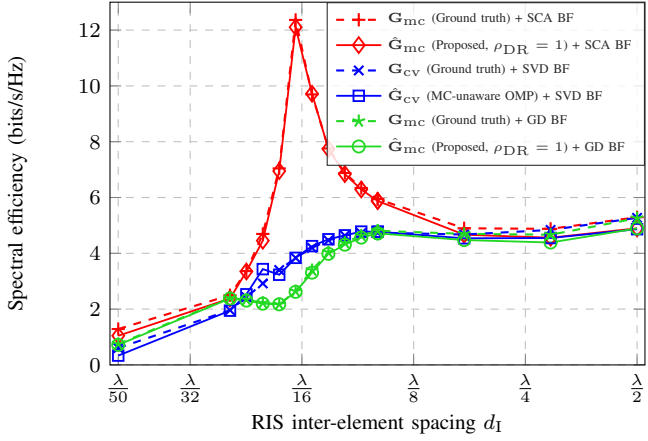

 {Now, we focus on the beamforming gains of the proposed strategy while incorporating the estimated channel. We present both scenarios; the first assumes perfect \ac{csi}, and the second utilizes the output of the proposed channel estimation algorithm (i.e., partial \ac{csi}).} This is different from the majority of the existing literature, where authors try to simulate partial \ac{csi} by simply adding some random variables (typically following normal or uniform distributions) to the ground truth channel to change either channel gains (most common) or the \ac{aoa}/\ac{aod}. We select $L_\Tt\!=\!L_\Rt\!=\!1$ to minimize the power leakage effect since we use regular dictionaries (without oversampling) for the \ac{omp} and choose arbitrary \ac{aoa}/\ac{aod}. Other parameters are set as $d_\It=\lambda/10, \bar{a}=7, P_\Tt=\unit[5]{mW},P_\Rt=\unit[10]{mW}$. 

Figure~\ref{fig:se_vs_snr} evaluates the spectral efficiency of different channel estimation and beamforming methods by changing the \ac{bs} transmit power $P_\Rt$ in the downlink. In addition to the MC-aware beamforming approaches based on~$\Gm_\mathrm{mc}$ (using either \ac{sca} or \ac{gd}), we also assess the conventional MC-unaware beamforming based on~$\Gm_\mathrm{cv}$, which is addressed using \ac{svd}\cite{ElAyach2014Spatially}. It is evident that the proposed \ac{sca}-based MC-aware solution achieves the highest spectral efficiency in both perfect and partial \ac{csi} cases. Moreover, the spectral efficiency achieved with the estimated~$\hat{\Gm}_\mathrm{mc}$ from the proposed channel estimation method is only slightly lower than that with the ground truth channel, which again confirms the effectiveness of the proposed channel estimator. A similar trend is observed with the \ac{gd}-based MC-aware beamforming method. 

Analogous to Fig.~\ref{fig:nmse_vs_rispower} and Fig.~\ref{fig:nmse_vs_spacing}, Fig.~\ref{fig:se_vs_rispower} and Fig.~\ref{fig:se_vs_spacing} assess the spectral efficiency with varying RIS amplification factors and inter-element spacings, respectively. Both figures demonstrate that the proposed \ac{sca}-based MC-aware method consistently achieves the highest spectral efficiency. An interesting observation is that, as the impact of MC increases, the performance gain of the proposed \ac{sca}-based solution initially improves but eventually declines. This occurs because the Neumann series approximation~\eqref{eq:Neumann} becomes less accurate when MC is excessively strong. In addition, in Fig.~\ref{fig:se_vs_spacing}, we note a significant high spectral efficiency between $d_\It = \lambda/16$ and $d_\It = \lambda/8$. Since this setup does not increase RIS power but only shifts the inter-element spacing, it suggests that with sophisticated signal processing solutions, additional gains can be obtained from~\ac{mc}. This observation indicates that the prevailing view of MC as solely a negative effect may be biased. 

\section{Conclusion}
\label{sec:Conc}
 {
This paper investigates channel estimation and beamforming for active \ac{ris}-assisted \ac{mimo} systems in the presence of \ac{mc}. We show that, despite its complexity, the MC-aware channel estimation problem can still be efficiently formulated as a \ac{cs} task by exploiting inherent signal structures. Employing the \ac{cs} and \ac{dr} techniques, we develop a robust, low-complexity estimator that works effectively even under strong MC. Beyond estimation, we demonstrate that accurate beamforming remains equally crucial and propose an MC-aware beamforming algorithm based on the Neumann series expansion and SCA optimization framework.

A central finding of this work is the dual role of MC. While typically viewed as a performance-limiting impairment, MC can in fact be leveraged to enhance performance when properly accounted for. Our simulation results reveal that the spectral efficiency is highly sensitive to MC-related RIS design parameters such as the active RIS amplification factor and element spacing. Notably, when the RIS inter-element spacing falls within a moderate range (e.g., $\lambda/16$ to $\lambda/8$ as shown in Fig.~\ref{fig:se_vs_spacing}), MC can contribute positively, enabling significant performance gains. These results suggest that an optimal RIS density and amplification level may exist for MC-aware designs, balancing MC-induced signal distortion with its constructive effects. Overall, this work highlights that MC, when carefully modeled and exploited, is not merely a source of degradation but also a design opportunity for next-generation RIS-assisted communication systems.
}

\appendices

\counterwithin*{equation}{section}
\renewcommand\theequation{\thesection.\arabic{equation}}

 {\section{A Proof of~\eqref{eq:vec_H_BIU_cv}}
\label{sec:vec_Hcv_mI}
This appendix section provides a step-by-step derivation to obtain~\eqref{eq:vec_H_BIU_cv}. By using the matrix equality $\vect{\Am\Bm\Cm}=(\Cm^\TT\otimes\Am)\vect{\Bm}$ and the beamspace representation of~\eqref{eq:ch_TxRIS_angle_domain} and~\eqref{eq:ch_RISRx_angle_domain}, we write $\vect{\Hm_\convidx^{m_\It}}$ as
\begin{align}
    &\vect{\Hm_\convidx^{m_\It}}\!=\!\vect{\!\Am_\Rt(\varthetav)\Sigmam_{\Rt\It}\Am_\It^\TT(\thetav)\Gammam_{m_\It}\Am_\It(\phiv)\Sigmam_{\It\Tt}\Am^\TT_\Tt(\varphiv)\!},\notag\\&=\big(\Am_\Tt(\varphiv)\otimes\Am_\Rt(\varthetav)\big)\vect{\Sigmam_{\Rt\It}\Am_\It^\TT(\thetav)\Gammam_{m_\It}\Am_\It(\phiv)\Sigmam_{\It\Tt}}\notag,\\&\notag=\Am_{\Tt\Rt}(\varphiv,\varthetav)\big(\Sigmam^\TT_{\It\Tt}\otimes\Sigmam_{\Rt\It}\big)\vect{\Am_\It^\TT(\thetav)\Gammam_{m_\It}\Am_\It(\phiv)},\\&=\Am_{\Tt\Rt}(\varphiv,\varthetav)\big(\Sigmam_{\It\Tt}\otimes\Sigmam_{\Rt\It}\big)\big(\Am_\It^\TT(\phiv)\otimes\Am_\It^\TT(\thetav)\big)\vect{\Gammam_{m_\It}}\notag,\\&\label{eq:vec_H_cv_gamma}=\Am_{\Tt\Rt}(\varphiv,\varthetav)\big(\Sigmam_{\It\Tt}\otimes\Sigmam_{\Rt\It}\big)\big(\Am_\It(\phiv)\bullet\Am_\It(\thetav)\big)^\TT\bm{\gamma}_{m_\It}.
\end{align}
where $\Sigmam_{\It\Tt}=\Sigmam^\TT_{\It\Tt}$ since it is a diagonal matrix. Therefore, by substituting  $\Am_{\It\It}^\convidx(\phiv,\thetav)= \Am_\It(\phiv)\bullet\Am_\It(\thetav)$ in \eqref{eq:vec_H_cv_gamma}, we can directly obtain~\eqref{eq:vec_H_BIU_cv}.}
\section{A Proof of~\eqref{eq:H_UIB}}
\label{sec:UL_DL_reciprocity}
This appendix section derives the channel reciprocity, i.e., the relationship between the uplink channel and the downlink channel. First, we can express these two cascaded channels as %\red{since I changed some notation, should I also unify with this or what you prefer?}
\begin{align}
    &\Hm_{\Rt\It} \bar{\Gammam} \Hm_{\It\Tt}
    = \Am_\Rt(\varthetav)\Sigmam_{\Rt\It}\Am_\It^\TT(\thetav) \bar{\Gammam} \Am_\It(\phiv)\Sigmam_{\It\Tt}\Am^\TT_\Tt(\varphiv),\\
    &\Hm_{\Tt\It} \bar{\Gammam} \Hm_{\It\Rt} = \Am_\Tt(\varphiv)\Sigmam_{\It\Tt}\Am^\TT_\It(\phiv) \bar{\Gammam} \Am_\It(\thetav) \Sigmam_{\Rt\It} \Am^\TT_\Rt(\varthetav),
\end{align}
where $\bar{\Gammam} = (\Gammam^{-1} - \Sm)^{-1}$. According to~\eqref{eq:vec_H_BIU_mc}, we have
\begin{align}
&\vect{\Hm_{\Rt\It} \bar{\Gammam} \Hm_{\It\Tt}}\notag\\
&=\big(\Am_\mathrm{U}(\varphiv)\otimes\Am_\mathrm{B}(\varthetav)\big)\big(\Sigmam_{\It\Tt}\otimes\Sigmam_{\Rt\It}\big)\big(\Am_\mathrm{I}^\TT(\phiv)\otimes\Am_\mathrm{I}^\TT(\thetav)\big)\xiv,\notag\\
&=\Big(\underbrace{\big(\Am_\mathrm{U}(\varphiv)\Sigmam_{\It\Tt}\Am_\mathrm{I}^\TT(\phiv)\big)\otimes\big(\Am_\mathrm{B}(\varthetav)\Sigmam_{\Rt\It}\Am_\mathrm{I}^\TT(\thetav)\big)}_{{{\Gm}}_{\mathrm{mc}}^\mathrm{UL}\in\mathbb{C}^{N_\mathrm{U}N_\mathrm{B}\times N_\mathrm{I}^2}}\Big)\xiv, \label{eq:HBIUexp}
\end{align}
where $\xiv=\vect{\bar{\Gammam}}$.
Similarly, we can express the downlink channel as
\begin{align}
&\vect{\Hm_{\Tt\It} \bar{\Gammam} \Hm_{\It\Rt}}\notag\\
&=\big(\Am_\mathrm{B}(\varthetav)\otimes\Am_\mathrm{U}(\varphiv)\big)\big(\Sigmam_{\Rt\It}\otimes\Sigmam_{\It\Tt}\big)\big(\Am_\mathrm{I}^\TT(\thetav)\otimes\Am_\mathrm{I}^\TT(\phiv)\big)\xiv,\notag\\
&=\Big(\underbrace{\big(\Am_\mathrm{B}(\vartheta)\Sigmam_{\Rt\It}\Am_\mathrm{I}^\TT(\thetav)\big) \otimes \big(\Am_\mathrm{U}(\varphiv)\Sigmam_{\It\Tt}\Am_\mathrm{I}^\TT(\phiv)\big) }_{{{\Gm}}_{\mathrm{mc}}^\mathrm{DL}\in\mathbb{C}^{N_\mathrm{U}N_\mathrm{B}\times N_\mathrm{I}^2}}\Big)\xiv. \label{eq:HUIBexp}
\end{align}
By comparing~\eqref{eq:HBIUexp} and~\eqref{eq:HUIBexp}, we observe that the uplink and downlink equivalent channels, ${{\Gm}}_{\mathrm{mc}}^\mathrm{UL}$ and ${{\Gm}}_{\mathrm{mc}}^\mathrm{DL}$, contain the same entries but are arranged in different layouts. Therefore, ${{\Gm}}_{\mathrm{mc}}^\mathrm{DL}$ can be derived directly from ${{\Gm}}_{\mathrm{mc}}^\mathrm{UL}$, and~\eqref{eq:HUIBexp} yields~\eqref{eq:H_UIB}.

\section{Background of Successive Convex Approximation} \label{sec:SCA}
Consider the following general optimization problem:
\begin{equation}
    \min_{\xv}\ f(\xv),\quad\text{s.t.}\ \xv\in\mathcal{X}.  
\end{equation}
The \ac{sca} framework generates a sequence of descending feasible points~$\{\xv_i\}_{i\in\mathbb{N}}$ through a cyclic application of two steps: (i) formulating a simpler surrogate function (which is convex) at the current feasible point, and (ii) optimizing the surrogate function to acquire an improved feasible point.

\subsubsection{Formulating Surrogate Functions}
At each feaible point~$\xv_i$, the \ac{sca} method formulates a surrogate function~$g(\xv|\xv_i)$ satisfying the following two conditions:
\begin{condition}\label{cond:1}
    $g(\xv|\xv_i)$ is convex on~$\mathcal{X}$, $\forall~\xv_i\in\mathcal{X}$.
\end{condition}
\begin{condition}\label{cond:2}
    $g(\xv|\xv_i)$ is differentiable on~$\mathcal{X}$ and its gradient equals to the gradient of~$f(\xv)$ at~$\xv_i$, i.e., $\nabla g(\xv|\xv_i)|_{\xv=\xv_i}=\nabla f(\xv)|_{\xv=\xv_i}$.
\end{condition}

\subsubsection{Optimizing Surrogate Functions}
Once a surrogate function~$g(\xv|\xv_i)$ that satisfies Condition~\ref{cond:1} and Condition~\ref{cond:2} is obtained, we can obtain a descending direction of~$f(\xv)$ by addressing the following convex optimization problem:
\begin{equation}
   \hat{\xv}_{i+1}=\arg\min_{\xv}\  g(\xv|\xv_i),\quad
    \text{s.t.}\ \xv\in\mathcal{X}.  
\end{equation}
Then, we update the feasible point as $\xv_{i+1} = \xv_{i} + \eta_i (\hat{\xv}_{i+1}-\xv_i)$, where~$\eta_i\in\mathbb{R}^+$ is the step size. Various rules for selecting the step size are outlined in~\cite[Assumption 3.6]{Nedic2018Parallel}, and the convergence of this iterative \ac{sca} procedure has been established by~\cite[Theorem~3.7]{Nedic2018Parallel}.

\section{Proof of Proposition~\ref{prop:SCAsuro}}\label{app:SCAsuro}
Both functions,~$f(\bm{\gamma})$ and~$g(\bm{\gamma}|\bm{\gamma}_i)$, are a mapping:~$\mathbb{C}^{N_\It}\rightarrow\mathbb{R}$. Following the complex-value differentiation theory in~\cite{Hjorungnes2007Complex,Hjorungnes2011Complex}, we rewrite the two functions~$f(\bm{\gamma})$ and~$g(\bm{\gamma}|\bm{\gamma}_i)$ as~$f(\bm{\gamma},\bm{\gamma}^*)$ and~$g(\bm{\gamma},\bm{\gamma}^*|\bm{\gamma}_i)$, respectively. Here,~$\bm{\gamma}$ and~$\bm{\gamma}^*$ are treated as two linearly independent variables.

According to~\eqref{eq:g} and~\eqref{eq:Bi}, we have
\vspace{-0.3em}
\begin{align}
    &g(\bm{\gamma},\bm{\gamma}^*|\bm{\gamma}_i) = -\bm{\gamma}^\HH\Qm_i\bm{\gamma} + K(\bm{\gamma}\!-\!\bm{\gamma}_i)^\HH(\bm{\gamma}\!-\!\bm{\gamma}_i) - \bm{\gamma}_i^\HH\qv\bm{\gamma}^\TT\Bm\bm{\gamma}_i \notag\\ &\!-\! \bm{\gamma}_i^\HH\Bm^\HH\bm{\gamma}^*\qv^\HH\bm{\gamma}_i \!-\! \bm{\gamma}_i^\HH\Bm^\HH\bm{\gamma}^*\!\bm{\gamma}_i^\TT\Bm\bm{\gamma}_i \!-\! \bm{\gamma}_i^\HH\Bm^\HH\bm{\gamma}_i^*\!\bm{\gamma}^\TT\Bm\bm{\gamma}_i.
\end{align}
We can then observe that 
\begin{align}
    \frac{\partial^2 g(\bm{\gamma},\bm{\gamma}^*|\bm{\gamma}_i)}{\partial \bm{\gamma}\partial \bm{\gamma}} &= \frac{\partial^2 g(\bm{\gamma},\bm{\gamma}^*|\bm{\gamma}_i)}{\partial \bm{\gamma}^*\partial \bm{\gamma}^*} = \mathbf{0},\\
    \frac{\partial^2 g(\bm{\gamma},\bm{\gamma}^*|\bm{\gamma}_i)}{\partial \bm{\gamma} \partial \bm{\gamma}^*} &= \Big(\frac{\partial^2 g(\bm{\gamma},\bm{\gamma}^*|\bm{\gamma}_i)}{\partial \bm{\gamma}^*\partial \bm{\gamma}}\Big)^\TT = K\mathbf{I}-\Qm_i.
\end{align}
Thus, the full complex Hessian matrix of~$g(\bm{\gamma},\bm{\gamma}^*|\bm{\gamma}_i)$ is~\cite{Zhang2017Matrix}
\begin{equation}
    \Hm_{\bm{\gamma},\bm{\gamma}^*}^i 
    = \begin{bmatrix}
    (K\mathbf{I}-\Qm_i)^\TT & \mathbf{0} \\
    \mathbf{0} & K\mathbf{I}-\Qm_i
    \end{bmatrix}\succeq \mathbf{0}.
\end{equation}
Here, the positive semi-definiteness of~$\Hm_{\bm{\gamma},\bm{\gamma}^*}^i$ is ensured by~\eqref{eq:KBi}. Then, according to the second-order condition of convex functions~\cite[Sec. 3.1.4]{Boyd2004Convex},~$g(\bm{\gamma},\bm{\gamma}^*|\bm{\gamma}_i)$ is convex. Hence, Condition~\ref{cond:1} in Appendix~\ref{sec:SCA} is satisfied.

Next, we derive the first-order derivatives of~$f(\bm{\gamma},\bm{\gamma}^*)$  and~$g(\bm{\gamma},\bm{\gamma}^*|\bm{\gamma}_i)$  to assess whether they satisfy Condition~\ref{cond:2}. Based on~\eqref{eq:g}, we can derive:%~\eqref{eq:fgammav} and
\begin{multline}\label{eq:dfdgamma}
    \frac{\partial f(\bm{\gamma},\bm{\gamma}^*)}{\partial\bm{\gamma}} = -(\bm{\gamma}^\HH\qv)\qv^*-(\bm{\gamma}^\HH\qv)(\Bm+\Bm^\TT)\bm{\gamma} - (\bm{\gamma}^\HH\Bm^\HH\bm{\gamma}^*)\qv^* \\- (\bm{\gamma}^\HH\Bm^\HH\bm{\gamma}^*)(\Bm+\Bm^\TT)\bm{\gamma},
\end{multline}
\vspace{-2em}
\begin{multline}\label{eq:dfdgammac}
    \frac{\partial f(\bm{\gamma},\bm{\gamma}^*)}{\partial\bm{\gamma}^*} = -(\bm{\gamma}^\TT\qv^*)\qv-(\bm{\gamma}^\TT\qv^*)(\Bm^\HH+\Bm^*)\bm{\gamma}^* - (\bm{\gamma}^\TT\Bm\bm{\gamma})\qv \\- (\bm{\gamma}^\TT\Bm\bm{\gamma})(\Bm^\HH+\Bm^*)\bm{\gamma}^*,
\end{multline}
\vspace{-2em}
\begin{align}\label{eq:dgdgamma}
    &\frac{\partial g(\bm{\gamma},\!\bm{\gamma}^*|\bm{\gamma}_i)}{\partial\bm{\gamma}}\! =\! -(\bm{\gamma}^\HH\!\qv)\qv^*\!\!\!-\!(\bm{\gamma}^\HH\!\qv)\Bm^\TT\!\bm{\gamma}_i\!\! -\! (\bm{\gamma}^\HH\Bm^\HH\!\bm{\gamma}_i^*)\qv^* \\ &- \!(\bm{\gamma}^\HH\Bm^\HH\!\bm{\gamma}_i^*)\Bm^{\!\TT}\!\!\bm{\gamma}_i\! +\! K(\bm{\gamma}^*\!\!\!-\!\bm{\gamma}_i^*)\! -\! (\bm{\gamma}_i^\HH\qv)\Bm\bm{\gamma}_i\! -\! (\bm{\gamma}_i^\HH\Bm^\HH\!\bm{\gamma}_i^*)\Bm\bm{\gamma}_i,\notag
\end{align}
\vspace{-2em}
\begin{align}\label{eq:dgdgammac}
     &\frac{\partial g(\bm{\gamma},\!\bm{\gamma}^*|\bm{\gamma}_i)}{\partial\bm{\gamma}^*}\! =\! -(\bm{\gamma}^\TT\!\qv^*)\qv\!-\!(\bm{\gamma}^\TT\!\qv^*)\Bm^\HH\bm{\gamma}_i\! -\! (\bm{\gamma}_i^\TT\Bm\bm{\gamma})\qv \\ & - \!(\bm{\gamma}_i^\TT\Bm\bm{\gamma})\Bm^\HH\!\bm{\gamma}_i^*\! +\! K(\bm{\gamma}\!-\!\bm{\gamma}_i)\! -\! (\bm{\gamma}_i^\TT\qv^*)\Bm^*\bm{\gamma}_i^*\! -\! (\bm{\gamma}_i^\TT\Bm\bm{\gamma}_i)\Bm^*\bm{\gamma}_i^*.\notag
\end{align}
By comparing~\eqref{eq:dfdgamma} and~\eqref{eq:dgdgamma}, we can see that~$
    \frac{\partial f(\bm{\gamma},\bm{\gamma}^*)}{\partial\bm{\gamma}}|_{\bm{\gamma}=\bm{\gamma}_i} = \frac{\partial g(\bm{\gamma},\bm{\gamma}^*|\bm{\gamma}_i)}{\partial\bm{\gamma}}|_{\bm{\gamma}=\bm{\gamma}_i}.
$ 
By comparing~\eqref{eq:dfdgammac} and~\eqref{eq:dgdgammac}, we can see that
$
    \frac{\partial f(\bm{\gamma},\bm{\gamma}^*)}{\partial\bm{\gamma}^*}|_{\bm{\gamma}=\bm{\gamma}_i} = \frac{\partial g(\bm{\gamma},\bm{\gamma}^*|\bm{\gamma}_i)}{\partial\bm{\gamma}^*}|_{\bm{\gamma}=\bm{\gamma}_i}.
$ 
Thus Condition~\ref{cond:2} in Appendix~\ref{sec:SCA} is satisfied, which concludes the proof.

\bibliography{abbrev,references}

% Generated by IEEEtran.bst, version: 1.14 (2015/08/26)
\begin{thebibliography}{10}
\providecommand{\url}[1]{#1}
\csname url@samestyle\endcsname
\providecommand{\newblock}{\relax}
\providecommand{\bibinfo}[2]{#2}
\providecommand{\BIBentrySTDinterwordspacing}{\spaceskip=0pt\relax}
\providecommand{\BIBentryALTinterwordstretchfactor}{4}
\providecommand{\BIBentryALTinterwordspacing}{\spaceskip=\fontdimen2\font plus
\BIBentryALTinterwordstretchfactor\fontdimen3\font minus
  \fontdimen4\font\relax}
\providecommand{\BIBforeignlanguage}[2]{{%
\expandafter\ifx\csname l@#1\endcsname\relax
\typeout{** WARNING: IEEEtran.bst: No hyphenation pattern has been}%
\typeout{** loaded for the language `#1'. Using the pattern for}%
\typeout{** the default language instead.}%
\else
\language=\csname l@#1\endcsname
\fi
#2}}
\providecommand{\BIBdecl}{\relax}
\BIBdecl

\bibitem{Wang2018Millimeter}
X.~Wang \emph{et~al.}, ``Millimeter wave communication: A comprehensive
  survey,'' \emph{{IEEE} Commun. Surveys Tuts.}, vol.~20, no.~3, pp.
  1616--1653, 2018.

\bibitem{Tarboush2021Teramimo}
S.~Tarboush \emph{et~al.}, ``{TeraMIMO}: A channel simulator for wideband
  ultra-massive {MIMO} terahertz communications,'' \emph{{IEEE} Trans. Veh.
  Technol.}, vol.~70, no.~12, pp. 12\,325--12\,341, 2021.

\bibitem{Di2020Smart}
M.~Di~Renzo \emph{et~al.}, ``Smart radio environments empowered by
  reconfigurable intelligent surfaces: How it works, state of research, and the
  road ahead,'' \emph{{IEEE} J. Sel. Areas Commun.}, vol.~38, no.~11, pp.
  2450--2525, 2020.

\bibitem{Rao2023An}
J.~Rao \emph{et~al.}, ``An active reconfigurable intelligent surface utilizing
  phase-reconfigurable reflection amplifiers,'' \emph{{IEEE} Trans. Microw.
  Theory Techn.}, vol.~71, no.~7, pp. 3189--3202, 2023.

\bibitem{Zhang2022Active}
Z.~Zhang \emph{et~al.}, ``Active {RIS} vs. passive {RIS}: Which will prevail in
  {6G}?'' \emph{{IEEE} Trans. Commun.}, vol.~71, no.~3, pp. 1707--1725, 2022.

\bibitem{Wan2021Terahertz}
Z.~Wan \emph{et~al.}, ``Terahertz massive {MIMO} with holographic
  reconfigurable intelligent surfaces,'' \emph{{IEEE} Trans. Commun.}, vol.~69,
  no.~7, pp. 4732--4750, 2021.

\bibitem{Bjornson2024Towards}
E.~Bj{\"o}rnson \emph{et~al.}, ``Towards {6G MIMO}: Massive spatial
  multiplexing, dense arrays, and interplay between electromagnetics and
  processing,'' \emph{arXiv preprint arXiv:2401.02844}, 2024.

\bibitem{Gradoni2021End}
G.~Gradoni \emph{et~al.}, ``End-to-end mutual coupling aware communication
  model for reconfigurable intelligent surfaces: An electromagnetic-compliant
  approach based on mutual impedances,'' \emph{{IEEE} Wireless Commun. Lett.},
  vol.~10, no.~5, pp. 938--942, 2021.

\bibitem{Zheng2024On}
P.~Zheng \emph{et~al.}, ``On the impact of mutual coupling on {RIS}-assisted
  channel estimation,'' \emph{{IEEE} Wireless Commun. Lett.}, vol.~13, no.~5,
  pp. 1275--1279, 2024.

\bibitem{Zheng2023JrCUP}
------, ``{JrCUP}: Joint {RIS} calibration and user positioning for {6G}
  wireless systems,'' \emph{{IEEE} Trans. Wireless Commun.}, vol.~23, no.~6,
  pp. 6683--6698, 2024.

\bibitem{Pan2022An}
C.~Pan \emph{et~al.}, ``An overview of signal processing techniques for
  {RIS}/{IRS}-aided wireless systems,'' \emph{{IEEE} J. Sel. Topics Signal
  Process.}, vol.~16, no.~5, pp. 883--917, 2022.

\bibitem{Rabault2024Tacit}
A.~Rabault \emph{et~al.}, ``On the tacit linearity assumption in common
  cascaded models of {RIS}-parametrized wireless channels,'' \emph{{IEEE}
  Trans. Wireless Commun.}, vol.~23, no.~8, pp. 10\,001--10\,014, 2024.

\bibitem{Faqiri2023PhysFad}
R.~Faqiri \emph{et~al.}, ``{PhysFad}: Physics-based end-to-end channel modeling
  of {RIS}-parametrized environments with adjustable fading,'' \emph{{IEEE}
  Trans. Wireless Commun.}, vol.~22, no.~1, pp. 580--595, 2023.

\bibitem{Pozar2011Microwave}
D.~M. Pozar, \emph{Microwave engineering}.\hskip 1em plus 0.5em minus
  0.4em\relax John wiley \& sons, 2011.

\bibitem{Ivrlavc2014Multiport}
M.~T. Ivrla{\v{c}} \emph{et~al.}, ``The multiport communication theory,''
  \emph{{IEEE} Circuits Syst. Mag.}, vol.~14, no.~3, pp. 27--44, 2014.

\bibitem{Nerini2024Universal}
M.~Nerini \emph{et~al.}, ``A universal framework for multiport network analysis
  of reconfigurable intelligent surfaces,'' \emph{{IEEE} Trans. Wireless
  Commun.}, vol.~23, no.~10, pp. 14\,575--14\,590, 2024.

\bibitem{Li2024Beyond}
H.~Li \emph{et~al.}, ``Beyond diagonal reconfigurable intelligent surfaces with
  mutual coupling: Modeling and optimization,'' \emph{{IEEE} Commun. Lett.},
  vol.~28, no.~4, pp. 937--941, 2024.

\bibitem{Abrardo2024Design}
A.~Abrardo \emph{et~al.}, ``Design of reconfigurable intelligent surfaces by
  using {S}-parameter multiport network theory--optimization and full-wave
  validation,'' \emph{{IEEE} Trans. Wireless Commun.}, vol.~23, no.~11, pp.
  17\,084--17\,102, 2024.

\bibitem{Wijekoon2024Phase}
D.~Wijekoon \emph{et~al.}, ``Phase shifter optimization in {RIS}-aided {MIMO}
  systems under multiple reflections,'' \emph{{IEEE} Trans. Wireless Commun.},
  vol.~23, no.~8, pp. 8969--8983, 2024.

\bibitem{Zheng2024Mutual}
P.~Zheng \emph{et~al.}, ``Mutual coupling in {RIS}-aided communication: Model
  training and experimental validation,'' \emph{{IEEE} Trans. Wireless
  Commun.}, vol.~23, no.~11, pp. 17\,174--17\,188, 2024.

\bibitem{Bidabadi2025Physically}
S.~Bidabadi \emph{et~al.}, ``Physically-consistent {EM} models-aware
  {RIS}-aided communication—a survey,'' \emph{Computer Networks}, vol. 257,
  p. 110963, 2025.

\bibitem{Wang2020Compressed}
P.~Wang \emph{et~al.}, ``Compressed channel estimation for intelligent
  reflecting surface-assisted millimeter wave systems,'' \emph{{IEEE} Signal
  Process. Lett.}, vol.~27, pp. 905--909, 2020.

\bibitem{Wei2021Channel}
X.~Wei \emph{et~al.}, ``Channel estimation for {RIS} assisted wireless
  communications—{Part II}: An improved solution based on double-structured
  sparsity,'' \emph{{IEEE} Commun. Lett.}, vol.~25, no.~5, pp. 1403--1407,
  2021.

\bibitem{Chen2023Channel}
J.~Chen \emph{et~al.}, ``Channel estimation for reconfigurable intelligent
  surface aided multi-user {mmWave} {MIMO} systems,'' \emph{{IEEE} Trans.
  Wireless Commun.}, vol.~22, no.~10, pp. 6853--6869, 2023.

\bibitem{He2021Channel}
J.~He \emph{et~al.}, ``Channel estimation for {RIS}-aided mmwave {MIMO} systems
  via atomic norm minimization,'' \emph{{IEEE} Trans. Wireless Commun.},
  vol.~20, no.~9, pp. 5786--5797, 2021.

\bibitem{Ardah2021Trice}
K.~Ardah \emph{et~al.}, ``{TRICE}: A channel estimation framework for
  {RIS}-aided millimeter-wave {MIMO} systems,'' \emph{{IEEE} Signal Process.
  Lett.}, vol.~28, pp. 513--517, 2021.

\bibitem{Tarboush2023Compressive}
S.~Tarboush \emph{et~al.}, ``Compressive estimation of near field channels for
  ultra massive-{MIMO} wideband {THz} systems,'' in \emph{Proc. IEEE Int. Conf.
  Acoustics, Speech, and Signal Process. (ICASSP)}, 2023, pp. 1--5.

\bibitem{Tarboush2024Cross}
------, ``Cross-field channel estimation for ultra {massive-MIMO THz}
  systems,'' \emph{{IEEE} Trans. Wireless Commun.}, vol.~23, no.~8, pp.
  8619--8635, 2024.

\bibitem{Bayraktar2024RIS}
M.~Bayraktar \emph{et~al.}, ``{RIS}-aided joint channel estimation and
  localization at {mmWave} under hardware impairments: A dictionary
  learning-based approach,'' \emph{{IEEE} Trans. Wireless Commun.}, vol.~23,
  no.~12, pp. 19\,696--19\,712, 2024.

\bibitem{Wu2019Intelligent}
Q.~Wu \emph{et~al.}, ``Intelligent reflecting surface enhanced wireless network
  via joint active and passive beamforming,'' \emph{{IEEE} Trans. Wireless
  Commun.}, vol.~18, no.~11, pp. 5394--5409, 2019.

\bibitem{Huang2019Reconfigurable}
C.~Huang \emph{et~al.}, ``Reconfigurable intelligent surfaces for energy
  efficiency in wireless communication,'' \emph{{IEEE} Trans. Wireless
  Commun.}, vol.~18, no.~8, pp. 4157--4170, 2019.

\bibitem{Liu2022Joint}
R.~Liu \emph{et~al.}, ``Joint transmit waveform and passive beamforming design
  for {RIS}-aided {DFRC} systems,'' \emph{{IEEE} J. Sel. Topics Signal
  Process.}, vol.~16, no.~5, pp. 995--1010, 2022.

\bibitem{Li2021Intelligent}
H.~Li \emph{et~al.}, ``Intelligent reflecting surface enhanced wideband
  {MIMO-OFDM} communications: From practical model to reflection
  optimization,'' \emph{{IEEE} Trans. Commun.}, vol.~69, no.~7, pp. 4807--4820,
  2021.

\bibitem{Yu2020Robust}
X.~Yu \emph{et~al.}, ``Robust and secure wireless communications via
  intelligent reflecting surfaces,'' \emph{{IEEE} J. Sel. Areas Commun.},
  vol.~38, no.~11, pp. 2637--2652, 2020.

\bibitem{Chen2024Enhancing}
Y.~Chen \emph{et~al.}, ``Enhancing the downlink rate fairness of low-resolution
  active {RIS}-aided signaling by closed-form expression-based iterative
  optimization,'' \emph{{IEEE} Trans. Veh. Technol.}, pp. 1--17, 2024.

\bibitem{Peng2024Beamforming}
Z.~Peng \emph{et~al.}, ``Beamforming optimization for active {RIS}-aided
  multiuser communications with hardware impairments,'' \emph{{IEEE} Trans.
  Wireless Commun.}, vol.~23, no.~8, pp. 9884--9898, 2024.

\bibitem{Qian2021Mutual}
X.~Qian \emph{et~al.}, ``Mutual coupling and unit cell aware optimization for
  reconfigurable intelligent surfaces,'' \emph{{IEEE} Wireless Commun. Lett.},
  vol.~10, no.~6, pp. 1183--1187, 2021.

\bibitem{Abrardo2021MIMO}
A.~Abrardo \emph{et~al.}, ``{MIMO} interference channels assisted by
  reconfigurable intelligent surfaces: Mutual coupling aware sum-rate
  optimization based on a mutual impedance channel model,'' \emph{{IEEE}
  Wireless Commun. Lett.}, vol.~10, no.~12, pp. 2624--2628, 2021.

\bibitem{Lee2016Channel}
J.~Lee \emph{et~al.}, ``Channel estimation via orthogonal matching pursuit for
  hybrid {MIMO} systems in millimeter wave communications,'' \emph{{IEEE}
  Trans. Commun.}, vol.~64, no.~6, pp. 2370--2386, 2016.

\bibitem{Cheng2024Degree}
H.~V. Cheng \emph{et~al.}, ``Degree-of-freedom of modulating information in the
  phases of reconfigurable intelligent surface,'' \emph{{IEEE} Trans. Inf.
  Theory}, vol.~70, no.~1, pp. 170--188, 2024.

\bibitem{Zheng2024LEO}
P.~Zheng \emph{et~al.}, ``{LEO}-and {RIS}-empowered user tracking: A
  {Riemannian} manifold approach,'' \emph{{IEEE} J. Sel. Areas Commun.},
  vol.~42, no.~12, pp. 3445--3461, 2024.

\bibitem{Wang2024Wideband}
R.~Wang \emph{et~al.}, ``A wideband reconfigurable intelligent surface for {5G}
  millimeter-wave applications,'' \emph{{IEEE} Trans. Antennas Propag.},
  vol.~72, no.~3, pp. 2399--2410, 2024.

\bibitem{Long2021Active}
R.~Long \emph{et~al.}, ``Active reconfigurable intelligent surface-aided
  wireless communications,'' \emph{{IEEE} Trans. Wireless Commun.}, vol.~20,
  no.~8, pp. 4962--4975, 2021.

\bibitem{Shen2022Modeling}
S.~Shen \emph{et~al.}, ``Modeling and architecture design of reconfigurable
  intelligent surfaces using scattering parameter network analysis,''
  \emph{{IEEE} Trans. Wireless Commun.}, vol.~21, no.~2, pp. 1229--1243, 2022.

\bibitem{Cao2025Active}
Y.~Cao \emph{et~al.}, ``Active reconfigurable intelligent surface assisted
  {MIMO}: Electromagnetic-compliant modeling with mutual coupling,''
  \emph{arXiv preprint arXiv:2504.15961}, 2025.

\bibitem{Fadakar2025Mutual}
A.~Fadakar \emph{et~al.}, ``Mutual coupling-aware localization for
  {RIS}-assisted {ISAC} systems,'' \emph{IEEE Transactions on Cognitive
  Communications and Networking}, 2025, early access.

\bibitem{Baraniuk2007Lecture}
R.~Baraniuk, ``A lecture on compressive sensing,'' \emph{{IEEE} Signal Process.
  Mag.}, vol.~24, no.~4, 2007.

\bibitem{Ortega2013Matrix}
J.~M. Ortega, \emph{Matrix theory: A second course}.\hskip 1em plus 0.5em minus
  0.4em\relax Springer Science \& Business Media, 2013.

\bibitem{Boyd2004Convex}
S.~P. Boyd \emph{et~al.}, \emph{Convex optimization}.\hskip 1em plus 0.5em
  minus 0.4em\relax Cambridge university press, 2004.

\bibitem{Nedic2018Parallel}
A.~Nedi{\'c} \emph{et~al.}, ``Parallel and distributed successive convex
  approximation methods for big-data optimization,'' \emph{Multi-Agent
  Optimization: Cetraro, Italy 2014}, pp. 141--308, 2018.

\bibitem{Hjorungnes2011Complex}
A.~Hj{\o}rungnes, \emph{Complex-valued matrix derivatives: with applications in
  signal processing and communications}.\hskip 1em plus 0.5em minus 0.4em\relax
  Cambridge University Press, 2011.

\bibitem{ElAyach2014Spatially}
O.~El~Ayach \emph{et~al.}, ``Spatially sparse precoding in millimeter wave
  {MIMO} systems,'' \emph{{IEEE} Trans. Wireless Commun.}, vol.~13, no.~3, pp.
  1499--1513, 2014.

\bibitem{Hjorungnes2007Complex}
A.~Hj{\o}rungnes \emph{et~al.}, ``Complex-valued matrix differentiation:
  Techniques and key results,'' \emph{{IEEE} Trans. Signal Process.}, vol.~55,
  no.~6, pp. 2740--2746, 2007.

\bibitem{Zhang2017Matrix}
X.-D. Zhang, \emph{Matrix analysis and applications}.\hskip 1em plus 0.5em
  minus 0.4em\relax Cambridge University Press, 2017.

\end{thebibliography}
\bibliographystyle{IEEEtran}

\end{document}